\UseRawInputEncoding

\documentclass[12pt]{article}
\usepackage[utf8]{inputenc}

\usepackage[margin=1in, letterpaper]{geometry}
\usepackage{xcolor}
\usepackage{amsmath, amssymb}

\usepackage{graphicx}
\usepackage{mathtools}

\usepackage{amsthm}
\usepackage{authblk}
\usepackage{multirow}
\usepackage{booktabs}
\usepackage{subcaption}
\usepackage{enumitem}
\usepackage{verbatim}

\newtheorem{theorem}{Theorem}
\newtheorem{lemma}{Lemma}
\newtheorem{definition}{Definition}
\newtheorem{remark}{Remark}
\newtheorem{proposition}{Proposition}
\newtheorem{assumption}{Assumption}

\usepackage[
  backend=biber,
  style=authoryear,
  sorting=nyt,
  maxcitenames=3,
  mincitenames=1,
  maxbibnames=6,
  giveninits=false,
  doi=false,
  url=false,
  isbn=false,
  eprint=false,
  uniquename=false
]{biblatex}

\usepackage[colorlinks=true, allcolors=blue]{hyperref}

\title{Detecting Early and Late Divergences in Survival Curves Using Nonparametric Effect Measures}
\date{}
\author[1]{Patrick B. Langthaler$^{*,}$}
\author[2]{Jun Ma}
\author[3]{Jonas Beck$^{*, }$}

\affil[1]{Intelligent Data Analytics (IDA) Lab, Department of Artificial Intelligence and Human Interfaces (AIHI), Paris Lodron University of Salzburg, Salzburg, Austria}
\affil[2]{School of Mathematical and Physical Sciences, Macquarie University, Sydney, Australia}
\affil[3]{Division of Biostatistics, German Cancer Research Center, Heidelberg, Germany}

\vspace{2ex}
\affil[*]{These authors contributed equally and share first authorship.}

\begin{document}
\maketitle

\newpage
\begin{abstract}
Clinical trials often show treatment curves that diverge early and converge later, or vice versa patterns that are poorly captured by the proportional-hazards assumption. We develop a joint inferential framework for two nonparametric functionals of censored survival data: the Kaplan--Meier-based Mann--Whitney effect and a novel temporal contrast separating early and late differences. The approach provides interpretable, probability-scale effect measures and enables joint inference for global and temporal contrasts under right censoring. In simulation studies, the method outperforms the log-rank test under non-proportional hazards while maintaining nominal type-I error. A real-world application illustrates how the temporal contrast reveals clinically meaningful early treatment advantages that remain hidden in standard analyses

\medskip
\noindent\textbf{Keywords:}
non-proportional hazards; right-censored survival data; Mann--Whitney effect; temporal survival contrast; Kaplan--Meier estimator; bootstrap inference.

\end{abstract}

\newpage
\section{Introduction}
Comparing time-to-event distributions between two groups is one of the central tasks in clinical trial analysis. The standard approach is based on the log-rank test, often accompanied by a Cox proportional hazards model and a hazard-ratio summary. This framework is attractive when the treatment effect is approximately constant over time, and it remains the default approach in many randomized clinical trials. However, the proportional-hazards assumption is restrictive and may be violated in clinically relevant settings. In oncology and immunotherapy trials, treatment effects may be delayed, may diminish over time, or may even change direction during follow-up. In such situations, Kaplan--Meier curves may cross or show early separation followed by convergence, and a single hazard ratio can be difficult to interpret (\cite{royston24,Ananthakrishnan2021CriticalRO, pak17,mukhopadyay22}).

Non-proportional hazards create both inferential and interpretational challenges. The log-rank test is optimal under proportional hazards, but may lose power when early and late deviations have opposite signs and cancel, as can occur with crossing survival or hazard curves.
Several reviews and comparative studies have emphasized that crossing survival curves are not rare in clinical applications and that the standard log-rank test is often still used in such settings, despite its potential loss of power (\cite{Li2015StatisticalIM,dormuth22}). This has motivated a broad literature on methods for non-proportional hazards.

Testing-focused approaches include weighted log-rank procedures, such as Gehan-Wilcoxon, Peto-Peto, and Fleming-Harrington-type tests (\cite{wugilbert22,karrison16}), as well as more flexible adaptive versions and MaxCombo-type procedures
(\cite{mukhopadyay22,lin23,flandre20,yangprentice09}).
Other methods target crossing or more general departures from equality, including squared observed-minus-expected deviations(\cite{linwang04}), two-stage (\cite{qiusheng08}) and partitioned log-rank tests (\cite{LiuYin2017Partitioned}), tests for crossing hazard-rate curves (\cite{zhang24}), permutation tests (\cite{Wyupek2021APT}), omnibus divergence methods (\cite{you23}), and area-based procedures comparing survival curves (\cite{linxu10,Huang02092022,liu20}). Comparative studies and user guidelines
suggest that such methods can be useful under crossing or otherwise
non-proportional hazards, but also that their performance depends strongly on
the underlying alternative pattern and on whether the aim is testing,
estimation, or clinical interpretation (\cite{Li2015StatisticalIM,dormuth22}).

A complementary literature emphasizes interpretable estimands. Restricted mean survival time (RMST) summarizes average event-free survival up to a prespecified horizon without requiring proportional hazards and has been proposed for both trial design and analysis
(\cite{royston13,pak17}). Related summaries include life-expectancy differences
and ratios (\cite{Dehbij2250}), long-term average hazard measures for delayed
immunotherapy effects (\cite{horiguchi25}), and RMST-based survival-benefit
summaries in health technology assessment (\cite{monnickendam19}). Pairwise comparison methods provide probability-scale treatment effects through win probabilities, win ratios, and generalized pairwise comparisons, including stratified (\cite{gasparyan20}), adjusted, competing-risk (\cite{cantagallo21}), and censoring-aware extensions (\cite{mao24,Wu2026WinRA}).
These estimand-based methods are clinically attractive, but a single global RMST difference, win ratio, or pairwise probability may still average over opposing early and late effects.

Existing approaches can therefore be viewed as either testing-focused or estimand-focused. The former aim to improve detection under non-proportional hazards, whereas the latter provide interpretable one-dimensional treatment-effect summaries. Yet a gap remains: omnibus tests usually do not describe how effects change over time, while one-dimensional estimands may obscure opposing early and late effects.

We address this gap by proposing a two-dimensional nonparametric effect vector
for right-censored two-sample survival data. The first component is a
Kaplan--Meier-based Mann--Whitney effect for the global pairwise ordering of
event times; the second is a median-split temporal contrast comparing early and
late pairwise ordering. Thus, the proposed framework targets interpretable
early-late survival patterns rather than arbitrary distributional differences.
We develop estimation, joint asymptotic inference, bootstrap-based Wald-type and
max-t-type tests, and simultaneous confidence intervals, and evaluate the method
in simulations and data examples.
\subsection*{Motivating Example}
%As a data example we use data from a randomized phase III trial testing the efficacy of whole-brain radiation therapy (WBRT) in addition to local treatment vs local treatment alone (observation) for melanoma brain metastases. Hong et al~\cite{hong2019adjuvant} recruited 215 patients between April 2009 and September 2017. In total 107 patients were assigned to the WBRT arm and 108 to the observation arm. Of those 100 and 107 respectively were available for the final analysis. The median follow-up time was 48.1 months. Here we focus on the outcome of overall mortality over the entire study period. Kaplan-Meier plots of both groups can be seen in Figure~\ref{fig:km_wbrt}. Clearly we have a crossover of the two survival curves at about 27 months after the start of the observation period, with WBRT seemingly performing better before and worse afterward. Unsurprisingly the logrank test does not reject at and alpha of $0.05$, indeed giving a p-value of $0.83$.  The Gehan-Wilcoxon test, the Yang and Prentice test and the two-stage procedure give p-values of $0.44$, $0.64$ and $0.14$ respectively. Our procedure gives a p-value of $0.11$ with the max-t-type test for the pooled bootstrap. As an estimate for $(\theta, O)$ we get $(0.52, 0.40)$, suggesting that it is mainly the deviation of the overlap index from $0.5$ that is responsible for the low p-value.
We illustrate the proposed method using data from the randomized phase III trial
by ~\cite{hong2019adjuvant}, which compared adjuvant whole-brain radiation
therapy (WBRT) with observation after local treatment of one to three melanoma
brain metastases. The original trial was designed primarily to assess distant
intracranial failure within 12 months. Here, we restrict attention to overall
survival, defined as the time from randomization to death from any cause, with
patients alive at last follow-up treated as right-censored.

In the original analysis, no significant overall-survival difference was found
between the two groups. The published median overall survival was 16.5 months
in the WBRT group and 13 months in the observation group, and the published
Kaplan--Meier analysis yielded a log-rank \(p\)-value of \(0.89\). Thus, a
standard global comparison did not provide evidence for an overall survival
benefit of WBRT.

For the present reanalysis, we used all patients with complete information on
randomization date, survival status, and either date of death or last follow-up.
This yielded 107 patients in the observation arm and 99 patients in the WBRT
arm. Death was observed in 75 patients in each group. The Kaplan--Meier curves
are shown in Figure~\ref{fig:km_wbrt}. They suggest a non-proportional-hazards pattern:
the WBRT curve is slightly above the observation curve during the earlier part
of follow-up, whereas the ordering appears to attenuate and eventually reverse
later in follow-up.

\section{Statistical Model}

We consider two independent groups, for example a treatment group and a
control group. In the first group, let
$ %\[
T_i \sim F %\qquad 
$ %\]
denote the event times, and 
$%\[
C_i \sim K %\qquad 
$%\] 
 denote the corresponding censoring times, where $i=1,\ldots,n_1$. In the second group, let
$%\[
U_j \sim G  %\qquad 
$ %\]
denote the event times, and
$%\[
D_j \sim J  %\qquad 
$ %]
denote the corresponding censoring times, where $j=1,\ldots,n_2$. The distribution functions \(F\)
and \(G\) are the primary objects of interest, whereas \(K\) and \(J\)
describe the censoring mechanisms.

We assume that
\[
(T_1,C_1),\ldots,(T_{n_1},C_{n_1})
\quad\text{and}\quad
(U_1,D_1),\ldots,(U_{n_2},D_{n_2})
\]
are independent samples, that the two groups are mutually independent, and
that censoring is independent within each group, that is,
$%\[
T_i \perp C_i,
%\qquad
U_j \perp D_j .~
$%]
The observed data are
$ %\[
X_i = \min(T_i,C_i), 
%\qquad
\Delta_i = \mathbf 1\{T_i \leq C_i\}, 
%\qquad 
i=1,\ldots,n_1,
$ %\]
and
$ %\[
Y_j = \min(U_j,D_j), 
%\qquad
\Gamma_j = \mathbf 1\{U_j \leq D_j\}, 
%\qquad 
j=1,\ldots,n_2.
$ %\]

Our aim is to compare the two event-time distributions \(F\) and \(G\)
nonparametrically. Instead of imposing a proportional-hazards model, we
summarize their difference by the bivariate effect parameter
\begin{equation} \label{eq1}
%\[
\psi(F,G)
= (
%\begin{pmatrix}
\theta(F,G),~ %\\
O(F,G) )^\top
%\end{pmatrix},
%\] 
\end{equation}
where \(\theta(F,G)\) is a Mann--Whitney-type relative effect and
\(O(F,G)\) is a median-split temporal contrast designed to distinguish
early and late distributional differences. The precise definitions of
these two functionals are given below.

\subsection{Nonparametric effects}

For any c\`adl\`ag cdf $A$, write the left limit as $A(t-)$ and define the mid-cdf
\[
A^{\pm}(t):=\frac{A(t-)+A(t)}{2}.
\]
%This convention ensures identities such as $\theta(F,F)=1/2$ even in the presence
%of jumps/ties; if the event-time distribution is continuous, then $A^{\pm}(t)=A(t)$.

\begin{definition}
    The relative effect (Mann--Whitney effect) is defined as
\begin{equation}
\label{eq:theta-def}
\theta(F,G):=\int_{[0,\infty)} F^{\pm}(t)\,dG(t),
\end{equation}
where the integral is understood in the Lebesgue--Stieltjes sense.
\end{definition}
For $T\sim F$ and $U\sim G$ being independent, we get
\begin{equation}
\label{eq:theta-interpret}
\theta(F,G)=\mathbb{P}(T<U)+\tfrac12\,\mathbb{P}(T=U),
\end{equation}
and in the continuous case $\theta(F,G)=\mathbb{P}(T\le U)$ $= \mathbb{P}(T < U)$.
Values $\theta(F,G)>1/2$ indicate that $T$ tends to be smaller than $U$
(earlier event times for $F$ than for $G$).
\begin{definition} \label{definition2}
    Let $m$ be a median of $G$, i.e., any $m\in\arg\min_t |G(t)-1/2|$ (equivalently,
a version of $G^{-1}(1/2)$). Define the early/late conditional relative effects
$ %\[
\theta_{\le}(F,G):=\mathbb{E}\!\left[F^{\pm}(U)\mid U\le m\right], 
%\qquad
\theta_{>}(F,G):=\mathbb{E}\!\left[F^{\pm}(U)\mid U> m\right], 
%\quad 
(\text{where } U\sim G),
$ %\]
which can be written as %\textcolor{red}{(however $T\geq 0$!)}
\begin{equation}
\label{eq:O-partial_def}
\theta_{\le}(F,G)=\frac{\int_{(0,m]} F^{\pm}(t)\,dG(t)}{G(m)},
\qquad
\theta_{>}(F,G)=\frac{\int_{(m,\infty)} F^{\pm}(t)\,dG(t)}{1-G(m)}.
\end{equation}
The Overlap Index (median-split temporal contrast) is then
\begin{equation}
\label{eq:O-def}
O(F,G):=\theta_{>}(F,G)-\theta_{\le}(F,G).
\end{equation}
\end{definition}

Thus, \(O(F,G)\) measures the separation between the lower and upper halves of the \(G\)-distribution on the \(F\)-rank scale, with reference value \(1/2\) under \(F=G\).
Under the assumption of continuous distributions it can be shown that Equation \eqref{eq:O-def} is equivalent to the definition of the Overlap Index in \cite{Beck2023Combining}.

%All theoretical statements are formulated on a fixed interval $[0, \tau]$, chosen a priori and contained in the identifiable follow-up region of both groups. Equivalently, the event times are replaced by \(T^\tau=T\wedge\tau\) and \(U^\tau=U\wedge\tau\) 
%\textcolor{red}{(but they were called $X$ and $Y$ previously)}. \textcolor{teal}{Yes I think it should be \(X^\tau=X\wedge\tau\) and \(Y^\tau=Y\wedge\tau\). Maybe we should not speak of event times in this case but observed times of something. I think this part is a bit problematic in general and we need to be very careful with this. For the proofs we do require that $\tau$ is fixed and not some $\hat{\tau}$ estimated from the data. However for our estimator $\hat{\theta}$ we do estimate $\tau_X := \max X_i$ from the data and this then takes over the role of $\tau$ in the proof. On the other hand $\tau_X$ does not really estimate anything and our proofs hold or arbitrary $\tau$ so maybe this is not a problem.}

%To keep the notation simple, we continue to write \(F\), \(G\), \(\theta(F,G)\), \(O(F,G)\), and \(\psi(F,G)\) for the corresponding restricted quantities. Thus, all integrals below are understood over \([0,\tau]\), unless stated otherwise.

\begin{remark}
The median split used in Definition \ref{definition2} is not essential. More generally, the split may be
placed at any prespecified quantile $p\in(0,1)$ of the distribution $G$, yielding an
analogous quantile-split temporal contrast. We use the median because it provides a balanced
early--late comparison. Other fixed choices of $p$ may nevertheless be useful when there is prior
clinical or biological motivation to focus on an earlier or later part of follow-up.
Further details and the corresponding regularity conditions are provided in the
Supplementary Material \ref{remarks}.
\end{remark}

\subsection{Interpretation and Special Cases}
The vector
$%\[
    \psi(F,G)~
    %=(%\{
    %\theta(F,G),O(F,G)%\}
    %)^{\top}
$%\]
in equation (\ref{eq1}) 
should be interpreted as a two-dimensional probability-scale summary of the
difference between two event-time distributions.
Throughout this section, $F$ denotes the control distribution and $G$ the treatment distribution. The event of interest is assumed to be adverse, such as death, progression, or relapse; for favorable events, the directional interpretation is reversed.

The first component $ \theta(F,G)$
is a global pairwise ordering probability. Values \(\theta(F,G)>1/2\) indicate that event times from $G$ tend to be larger than event times from $F$, and therefore suggest a global survival advantage of the treatment group.
Values
smaller than \(1/2\) indicate the opposite.
Thus \(\theta(F,G)\) provides the primary summary of the overall ordering of the two event-time distributions.

The second component $O(F,G)$ describes how this ordering is distributed over time.
It compares the relative position of the lower and upper halves of the $G$ distribution on the $F$-rank scale and can therefore be interpreted as an early-late contrast of the pairwise event-time ordering.

Figure \ref{subfig:survival-patterns} and \ref{subfig:effect-space} illustrate the joint interpretation.
When $\theta(F,G)$ is close to $1/2$, the global ordering may be weak because early and late effects cancel; in this case, \(O(F,G)\) can reveal whether one group performs relatively better early and worse later, or vice versa. When $\theta(F,G)$ is far from $1/2$, the main finding is the global advantage or disadvantage captured by $\theta(F,G)$, while $O(F,G)$ refines this interpretation by indicating whether the ordering is concentrated earlier, later, or changes over follow-up.

Importantly, $ O(F,G)$ should always be interpreted jointly with $\theta(F,G)$. Values $O(F,G)\neq 1/2$ do not by themselves imply crossing survival curves or non-proportional hazards. Rather, $O(F,G)$ describes the temporal structure of the pairwise ordering summarized by $\theta(F,G)$. The proposed vector is therefore targeted rather than omnibus: it summarizes two clinically interpretable features, namely the overall pairwise ordering of event times and the early--late change in this ordering. This distinguishes it from weighted log-rank, MaxCombo, two-stage, partitioned log-rank, and area-between-curves procedures, which are primarily designed as testing procedures under non-proportional hazards.

\begin{figure}
    \centering
    \begin{subfigure}[t]{0.5\linewidth}
        \centering
        \includegraphics[width=\linewidth]{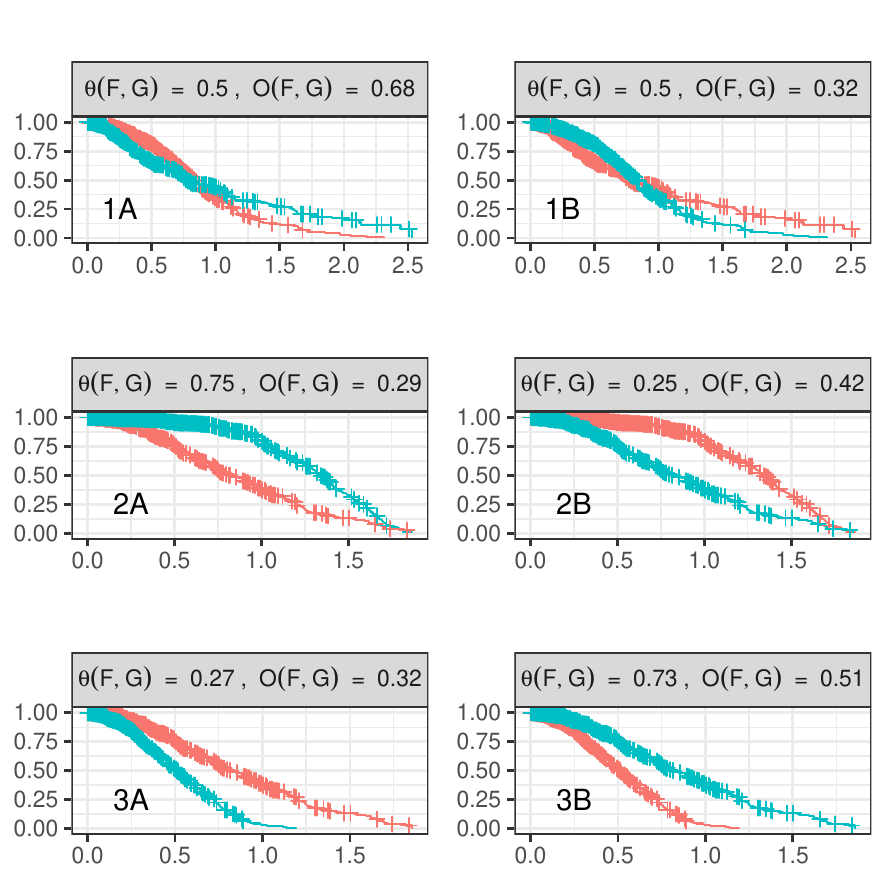}
        \captionsetup{width=1.6\linewidth}
        \caption{\label{subfig:survival-patterns}Illustrative survival-curve patterns and corresponding values of
    \(\theta(F,G)\) and \(O(F,G)\). Here $F$ is the red and $G$ the blue survival curve. The first component \(\theta\) describes the
    global pairwise ordering of event times, whereas \(O\) describes the early--late
    contrast of this ordering. The top row illustrates cases in which global effects
    cancel, \(\theta(F,G)\approx 1/2\), but the temporal contrast separates opposite
    early--late patterns. The middle and lower rows illustrate cases with stronger
    global ordering, where \(O\) refines the interpretation by describing whether
    the difference is concentrated earlier or later in follow-up.}
    \end{subfigure}
    \hfill
    \begin{subfigure}[t]{0.5\linewidth}
        \centering
        \includegraphics[width=\linewidth]{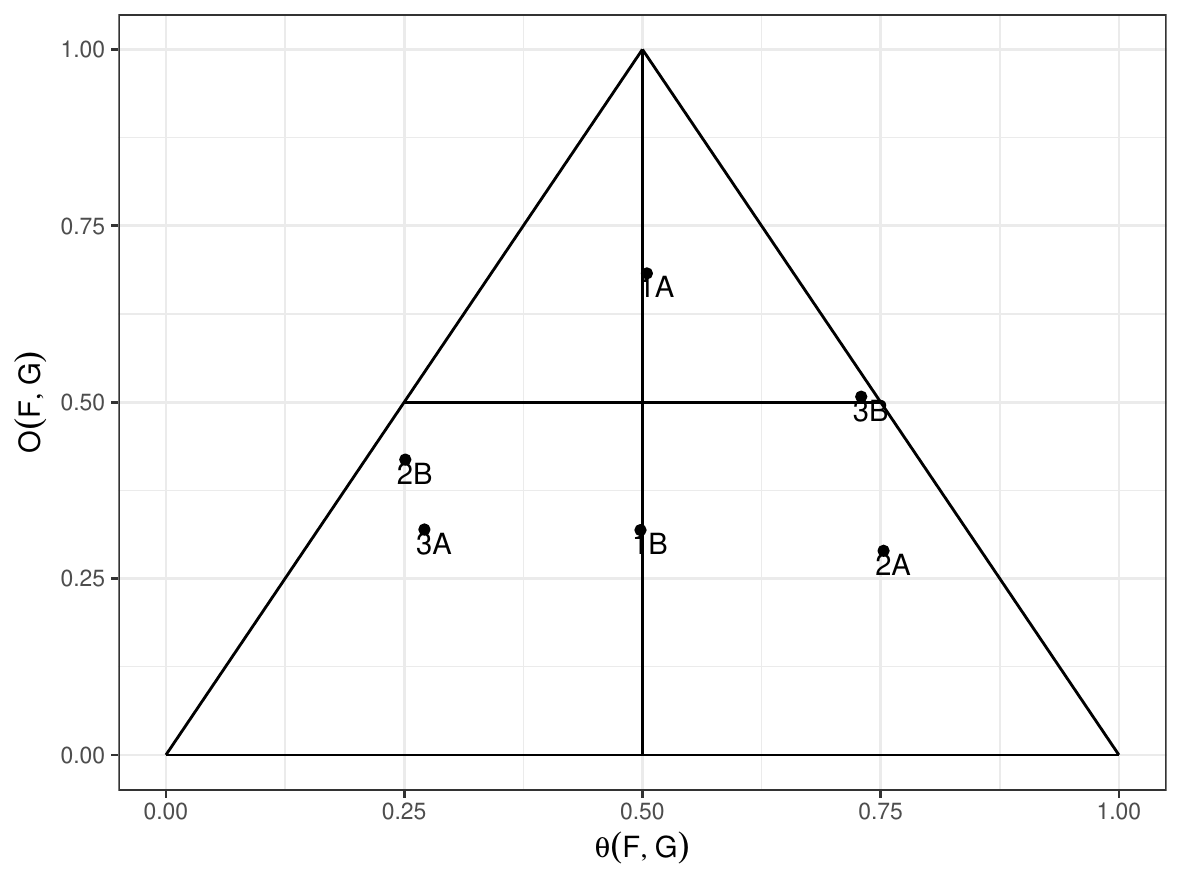}
        \captionsetup{width=1.6\linewidth}
        \caption{\label{subfig:effect-space}Location of the six examples in the
    \((\theta(F,G),O(F,G))\)-plane. The figure illustrates that \(O\) has to be
    interpreted jointly with \(\theta\): its possible range is largest near
    \(\theta=1/2\) and becomes more restricted as the global pairwise ordering
    becomes stronger.}
    \end{subfigure}
    \caption{Examples of the behaviour of $\theta$ and $O$ for different pairs of survival curves.}
    \label{fig:wbrt-example}
\end{figure}

The following special cases help to clarify the interpretation of the
bivariate effect vector \(\psi(F,G)=(\theta(F,G),O(F,G))^\top\).
In simple one-parameter models, both components are determined by the same
underlying model parameter. Thus, the temporal contrast \(O(F,G)\) does not
represent an additional degree of freedom under such models, but becomes most
informative in genuinely time-varying or non-proportional alternatives.

\begin{remark} \label{remark3}
    Suppose that $F$ and $G$ arise from a proportional-hazards model with
    $%\begin{align*}
        %&
        F(t) = 1 - S_0(t)^{\lambda_1},  %\\
        %&
        G(t) = 1 - S_0(t)^{\lambda_2}, ~  
    $%\end{align*}
where $S_0$ is a continuous survival function. Then 
   $\theta = %\frac{\lambda_1}{\lambda_1 + \lambda_2}
   \lambda_1/(\lambda_1 + \lambda_2)
   = %\frac{1}{\eta + 1}
   1/(\eta + 1)
   $, where $\eta = %\frac{\lambda_2}{\lambda_1}
   \lambda_2/\lambda_1
   $ is the hazard ratio.
   Moreover, the temporal contrast is
\[
    O(F,G)
    =
    \frac{2}{1+\lambda_1/\lambda_2}
    \left(1-2^{-\lambda_1/\lambda_2}\right)
    =
    \frac{2\eta}{1+\eta}
    \left(1-2^{-1/\eta}\right).
\]
Hence, under proportional hazards, the vector
\((\theta(F,G),O(F,G))^\top\) is fully determined by the hazard ratio.
\end{remark}
\begin{proof}
  See Section \ref{proofremark3}
\end{proof}
\subsection{Estimation} \label{section2.3}
Based on the observed samples, let
\(\widehat S_X\) and \(\widehat S_Y\) denote the Kaplan--Meier estimators
of the survival functions corresponding to \(F\) and \(G\), respectively.
We define the associated Kaplan--Meier distribution-function estimators by
$ %\[
\widehat F(t) = 1 - \widehat S_X(t), 
%\qquad
\widehat G(t) = 1 - \widehat S_Y(t). 
$ %\]
Equivalently, if $X_{(1)}\le \cdots \le X_{(n_1)}$ are the ordered observed times with
corresponding event indicators $\Delta_{(i)}$, then
\[
\widehat S_X(t)=\prod_{X_{(i)}\le t}\left(1-\frac{\Delta_{(i)}}{n_1-i+1}\right),
\]
and analogously for $\widehat S_Y$.

%\textcolor{teal}{Alles von hier bis zum Beginn der nächsten Section ist umformuliert. Zu der Notation mit $\tau$ vs $\tau_X$ vs $\tau_Y$: Ich habe jetzt mal $\tau$ als $\tau$ gelassen. $\tau_x$ ist nichts anderes als $X_{(n_1)}$, deswegen habe ich es jetzt mal so bezeichnet. $\tau_Y$ definieren wir, aber benutzen es wenn ich das richtig sehe nie, also brauchen wir es eigentlich gar nicht.}

Let $\Delta\widehat G(t):=\widehat G(t)-\widehat G(t-)$ denote the jump size of
$\widehat G$ at time $t$. 
%\cancel{, and let
%$ %\[
%\tau_Y:=\sup\{t:\Delta\widehat G(t)>0\}
%$ %\]
%be the last jump time.
%}
Note that since $\widehat G$ is a step function, integrals w.r.t.\
$d\widehat G$, and therefore estimators created by plugging $\widehat{F}$ and $\widehat{G}$ into expressions~\ref{eq:theta-def} and~\ref{eq:O-partial_def}, reduce to finite sums over its jump times.

As usual, the Kaplan--Meier estimators are extended as constant
functions beyond the largest observed time in the corresponding
sample. Let $X_{(n_1)}:=\max_{1\le i\le n_1} X_i$ denote the largest observed time in group $X$. Since the target functional satisfies
$\theta(F,G)+\theta(G,F)=1$ and $\theta(F,F)=1/2$, it is desirable that the corresponding estimator obeys the same property. In the present censored-data setting, we are not aware of an existing
estimator that satisfies this condition in finite samples. We therefore enforce it by the estimator
\begin{equation}
\label{eq:theta-star-simplified}
\widehat\theta=\widehat\theta(F,G)
=
\sum_{j:\: Y_j \leq X_{(n_1)}} \widehat{F}^{\pm}(Y_j) \Delta\widehat{G}(Y_j)
+\frac{1+\widehat F(X_{(n_1)})}{2}\bigl(1-\widehat G(X_{(n_1)})\bigr),
\end{equation}
i.e., the KM tail mass $1-\widehat G(X_{(n_1)})$ is downweighted by
$\tfrac{1+\widehat F(X_{(n_1)})}{2} \in[1/2,1]$. It can be easily shown that this estimator is exactly the mean of the two estimators proposed by \cite{wang2009estimation}.
%\textcolor{teal}{Irgendwie fehlt uns jetzt der Übergang von $\theta$, wo wir noch von $0$ bis $\infty$ integrieren, zu dem $\widehat{\theta}$ hier, wo wir das nicht mehr machen. Oder meinst du das passt so?}

Let $\widehat m_G$ denote the KM median of group $Y$,
$ %\[
\widehat m_G := \inf\{t:\widehat S_Y(t)\le 1/2\},
$ %\]
whenever it exists (if $\widehat S_Y$ does not fall below $1/2$ due to heavy
censoring, $O(F,G)$ is not identifiable via a median split and should be treated
as not estimable from the available follow-up). We define normalized
conditional plug-in estimators

%\textcolor{teal}{I reformulated the sets over which we define the sums a bit. I think we should not really sum over $\{t < \widehat{m}_G\}$ since this is not a countable set.}

\[
\widehat\theta_{\le}
=
\frac{\sum_{j:\: Y_j \le \widehat{m}_G}\widehat F(Y_j-)\Delta\widehat G(Y_j)}
     {\sum_{j:\: Y_j \le \widehat{m}_G}\Delta\widehat G(Y_j)},
\qquad
\widehat\theta_{>}
=
\frac{\sum_{j:\: Y_j > \widehat{m}_G}\widehat F(Y_j-)\Delta\widehat G(Y_j)}
     {\sum_{j:\:Y_j > \widehat{m}_G}\Delta\widehat G(Y_j)},
\qquad
\widehat O=\widehat\theta_{>}-\widehat\theta_{\le}.
\]
Using $\widehat F(t-)$ (left limits) aligns the computation with the mid-cdf/tie
convention and avoids ambiguities at jump points.

\subsection{Asymptotics}
For two distribution functions $F$ and $G$, define the bivariate parameter vector
$ %\[
\psi(F,G)
=
%\begin{pmatrix}
(\theta(F,G),~ %\\[0.3em]
O(F,G))^\top.
%\end{pmatrix}
$ %\]
In order to construct hypothesis tests and confidence intervals we study the asymptotic distribution of 
$ %\[
\widehat{\boldsymbol\psi}_N
=
%\begin{pmatrix}
(\widehat \theta(\widehat{F}, \widehat{G}), ~ %\\[0.3em]
\widehat O(\widehat{F}, \widehat{G}))^\top.
%\end{pmatrix}.
$ %\]
We then have the following

\begin{theorem}[Joint asymptotic normality] \label{thm:asym}
Assume independent right censoring within each group and that the regularity conditions in Assumption~\ref{ass:regularity} in the Supplementary Material hold.
Let $N=n_1+n_2$ and assume
$ %\[
%\frac{n_1}{N} 
n_1/N\to \lambda \in (0,1).
$ %\]
 Then
\[
\sqrt N
\left\{
\widehat{\boldsymbol\psi}_N-
\boldsymbol\psi(F,G)
\right\}
\overset{d}{\longrightarrow}
\mathcal N_2
\left(
\boldsymbol 0,
\boldsymbol\Sigma(F,G)
\right).
\]
where $\boldsymbol \Sigma(F,G)$ %and $\Sigma(H)$ 
is a finite positive semidefinite covariance matrix.
\end{theorem}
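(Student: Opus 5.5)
The plan is to use the functional delta method in the product space of two Kaplan--Meier processes on a fixed interval. Assumption~\ref{ass:regularity} is taken to supply a horizon $\tau$ with $P(X\ge\tau)>0$ and $P(Y\ge\tau)>0$, to place the $G$-median $m$ strictly inside $[0,\tau)$ with $G$ differentiable at $m$ and $g(m)>0$, and to control the tail beyond $\tau$.

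\textbf{Step 1: weak convergence of the Kaplan--Meier processes.} By the classical result for the Kaplan--Meier estimator under independent right censoring (Breslow--Crowley; Andersen et al.), $\sqrt{n_1}(\widehat F-F)$ and $\sqrt{n_2}(\widehat G-G)$ converge weakly in $D[0,\tau]$ to independent centred Gaussian processes $\mathbb W_F$ and $\mathbb W_G$. Their covariance kernels are the usual Greenwood-type kernels, and the processes have continuous paths where $F$ and $G$ are continuous. Since $n_1/N\to\lambda$ and the samples are independent, this gives
\[
\sqrt N\bigl(\widehat F-F,\widehat G-G\bigr)\rightsquigarrow \bigl(\lambda^{-1/2}\mathbb W_F,\,(1-\lambda)^{-1/2}\mathbb W_G\bigr).
\]

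\textbf{Step 2: Hadamard differentiability of $\theta$.} I would write $\theta(F,G)=\int F^{\pm}\,dG$, treat the part on $[0,\tau]$, and bound the tail separately. The map $(F,G)\mapsto\int_{[0,\tau]}F^{\pm}\,dG$ is Hadamard differentiable at $(F,G)$ tangentially to the set of pairs whose first coordinate is continuous, in the sense of the Wilcoxon-functional lemma of van der Vaart (Lemma 20.10). Its derivative is
\[
(h_1,h_2)\mapsto \int h_1^{\pm}\,dG-\int h_2^{-}\,dF^{\pm},
\]
obtained by integration by parts. Next I compare $\widehat\theta$ in \eqref{eq:theta-star-simplified} with this plug-in. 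The difference involves only the tail correction $\tfrac{1+\widehat F(X_{(n_1)})}{2}(1-\widehat G(X_{(n_1)}))$ and mass beyond $\tau$. Under the regularity assumption, either the support is contained in $[0,\tau]$ or the tail terms are $o_P(N^{-1/2})$. I expect this to need an explicit argument using the assumption, for example that $\widehat F(X_{(n_1)})\to 1$ at rate faster than $\sqrt N$, or that the estimand is the one restricted to $\tau$.

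\textbf{Step 3: the temporal contrast $O$.} Write
\[
O=\frac{A_>}{1-G(m)}-\frac{A_\le}{G(m)},\qquad A_\le(F,G,m)=\int_{(0,m]}F^{\pm}\,dG,\qquad A_>=\theta-A_\le .
\]
This is the main obstacle, because $\widehat m_G$ is random and enters as an integration limit. I would first establish Hadamard differentiability of the quantile map $G\mapsto G^{-1}(1/2)$ at $G$, with derivative $-h_2(m)/g(m)$ (van der Vaart, Lemma 21.3). This gives $\sqrt N(\widehat m_G-m)=-\sqrt N(\widehat G-G)(m)/g(m)+o_P(1)$.

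The map $(F,G,t)\mapsto\int_{(0,t]}F^{\pm}\,dG$ is then differentiable in $t$, with derivative $F(m)g(m)$. So the first-order effect of the random split point is $F(m)g(m)(\widehat m_G-m)$ in the numerator. The denominators $\widehat G(\widehat m_G)$ equal $1/2+O_P(N^{-1})$ when $G$ is continuous at $m$, so their fluctuation is negligible. The chain rule then gives the joint Hadamard derivative of $(F,G)\mapsto\psi(F,G)$.

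For the estimator, I would replace $\widehat F(Y_j-)$ by $\widehat F^{\pm}$. The difference is at most the jump mass of $\widehat F$ at common jump points, which is zero with probability one under continuity. Any remainder terms I would check to be $o_P(N^{-1/2})$.

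\textbf{Step 4: conclusion.} Combining the joint derivative $\psi'_{(F,G)}$ with Step 1 through the functional delta method gives
\[
\sqrt N(\widehat{\boldsymbol\psi}_N-\boldsymbol\psi)\rightsquigarrow \psi'_{(F,G)}\bigl(\lambda^{-1/2}\mathbb W_F,(1-\lambda)^{-1/2}\mathbb W_G\bigr).
\]
The limit is a continuous linear image of a Gaussian process, hence bivariate normal with mean zero. $\boldsymbol\Sigma(F,G)$ is its covariance, finite by the boundedness of the kernels on $[0,\tau]$, and positive semidefinite by construction. Degeneracy, for instance in the case $F=G$ with no censoring, is allowed by the statement.
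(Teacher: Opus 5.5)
Your Steps 1 and 3 agree with the paper in substance. For the random split, the quantile-map derivative $-h_G(m)/g(m)$ combined with joint differentiability of $(F,G,t)\mapsto\int_{(0,t]}F^{\pm}\,dG$ reproduces the boundary term $-F^{\pm}(m)h_G(m)$ of Lemma~\ref{lemma4}. The paper instead linearizes the moving boundary directly, and it avoids differentiating denominators by writing $O=2\Theta-4A$ with $G(m)=1/2$.

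The genuine gap is the passage from $[0,\tau]$ to the full-line estimand $\psi(F,G)$ and the actual estimator $\widehat\psi_N$. You want to show that, for a fixed $\tau<\tau_0$, the part of $\widehat\theta$ beyond $\tau$ estimates $T:=\int_{(\tau,\infty)}F^{\pm}\,dG$ with error $o_P(N^{-1/2})$. That part consists of the jumps of $\widehat G$ in $(\tau,X_{(n_1)}]$ plus the correction $\tfrac{1+\widehat F(X_{(n_1)})}{2}\{1-\widehat G(X_{(n_1)})\}$; call it $\widehat T$. The claim fails in general. When $G$ has mass beyond $\tau$, $\sqrt N(\widehat T-T)$ typically has a nondegenerate Gaussian limit. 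It contains, for example, $-F^{\pm}(\tau)\sqrt N\{\widehat G(\tau)-G(\tau)\}$, the mirror image of the boundary term $+F^{\pm}(\tau)h_2(\tau)$ that your integration by parts on $[0,\tau]$ drops.

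The other branches do not help either:
\begin{itemize}
\item $G$ supported in $[0,\tau]$ forces $\tau_0\le\tau$ when $G$ is continuous, so it is incompatible with $\tau<\tau_0$.
\item $\widehat F(X_{(n_1)})$ need not tend to $1$ at all if the censoring support ends first.
\item Restricting the estimand to $[0,\tau]$ changes the theorem.
\end{itemize}
Step~4 shows the problem: your limit is a functional of a process on $[0,\tau]$, so its covariance $\Sigma_\tau$ would depend on the arbitrary choice of $\tau$. The statement's $\Sigma(F,G)$ does not.

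What is missing is the iterated-limit (converging-together) device. For each fixed $m<\tau<\tau_0$ the paper uses a localized functional
\[
\Theta_\tau(F,G)=\int_{[0,\tau]}F^{\pm}\,dG+\tfrac{1+F(\tau)}{2}\{1-G(\tau)\},
\]
which mimics the estimator's tail correction at the deterministic point $\tau$ (Lemma~\ref{lemma2}), together with $O_\tau=2\Theta_\tau-4A$ (Lemma~\ref{lemma5}). The delta method then gives
\[
\sqrt N\{\psi_\tau(\widehat F,\widehat G)-\psi_\tau(F,G)\}\rightsquigarrow\mathcal N_2(0,\Sigma_\tau).
\]
The remainder $\sqrt N\|\{\widehat\psi_N-\psi(F,G)\}-\{\widehat\psi_{N,\tau}-\psi_\tau(F,G)\}\|$ is required to vanish only in the iterated sense $\lim_{\tau\uparrow\tau_0}\limsup_{N}\Pr(\cdot>\varepsilon)=0$. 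Together with $\Sigma_\tau\to\Sigma$, these are conditions (v) and (vi) of Assumption~\ref{ass:regularity}. The converging-together theorem then yields $\mathcal N_2(0,\Sigma)$ with $\Sigma=\lim_{\tau}\Sigma_\tau$.

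The same device is needed for $\widehat O$. Its tail enters through the numerator of $\widehat\theta_>$ and through the denominator
\[
\sum_{Y_j>\widehat m_G}\Delta\widehat G(Y_j)=\widehat G(Y_{(n_2)})-\widehat G(\widehat m_G),
\]
which falls short of $1-\widehat G(\widehat m_G)$ by the Kaplan--Meier tail deficit. Your $1/2+O_P(N^{-1})$ argument covers only the other denominator, $\widehat G(\widehat m_G)$.
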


\begin{proof}
See Section \ref{proofthm1}.
\end{proof}

\subsection{Resampling}

The asymptotic covariance matrix of
\(
\widehat\psi_{N}
\)
can in principle be derived analytically from the functional delta method. Since the resulting expression is cumbersome, we use nonparametric resampling. We distinguish two bootstrap schemes with different purposes: a stratified bootstrap, which estimates the sampling variability under the observed two-sample model, and a pooled bootstrap, which imposes the null hypothesis of a common distribution.

The stratified bootstrap resamples within each treatment group. For
\(b=1,\ldots,B\), draw
\(
(X_1^{*,b},\Delta_1^{*,b}),\ldots,
(X_{n_1}^{*,b},\Delta_{n_1}^{*,b})
\)
independently with replacement from
\(
(X_1,\Delta_1),\ldots,(X_{n_1},\Delta_{n_1}),
\)
and, independently, draw
\(
(Y_1^{*,b},\Gamma_1^{*,b}),\ldots,
(Y_{n_2}^{*,b},\Gamma_{n_2}^{*,b})
\)
with replacement from
\(
(Y_1,\Gamma_1),\ldots,(Y_{n_2},\Gamma_{n_2}).
\)
Let \(\widehat F^{*,b}\) and \(\widehat G^{*,b}\) be the corresponding
Kaplan--Meier estimators, and set
\(
\widehat{\boldsymbol{\psi}}_N^{*,b,\mathrm{str}}
=
\widehat{\boldsymbol{\psi}}
\bigl(\widehat F^{*,b},\widehat G^{*,b}\bigr).
\)
The stratified bootstrap covariance estimator is
\[
\widehat{\boldsymbol{\Sigma}}_{\mathrm{str}}
=
N\operatorname{Cov}_{b=1,\ldots,B}
\left[
\widehat{\boldsymbol{\psi}}_N^{*,b,\mathrm{str}}
\right].
\]

The pooled bootstrap approximates the distribution of the test
statistic under the null hypothesis
\(
H_0:F=G.
\)
Pooling the observed censored data requires exchangeability of the two
samples. Thus, in addition to \(H_0:F=G\), we assume equal censoring
mechanisms, \(K=J\).
For \(b=1,\ldots,B\), draw two independent bootstrap samples with
replacement from the pooled set of observed pairs
\(
\bigl\{(X_i,\Delta_i):i=1,\ldots,n_1\bigr\}
\cup
\bigl\{(Y_j,\Gamma_j):j=1,\ldots,n_2\bigr\}.
\)
More precisely, draw
\(
(Z_{1,1}^{*,b},\eta_{1,1}^{*,b}),\ldots,
(Z_{1,n_1}^{*,b},\eta_{1,n_1}^{*,b})
\)
and, independently,
\(
(Z_{2,1}^{*,b},\eta_{2,1}^{*,b}),\ldots,
(Z_{2,n_2}^{*,b},\eta_{2,n_2}^{*,b})
\)
with replacement from the pooled observations. Here,
\(Z_{g,i}^{*,b}\) and \(\eta_{g,i}^{*,b}\) denote the observed time and
event indicator, respectively, for observation \(i\) in bootstrap
group \(g\in\{1,2\}\) and bootstrap replicate \(b\).

Let \(\widehat H_1^{*,b}\) and \(\widehat H_2^{*,b}\) be the
Kaplan--Meier estimators computed from these two bootstrap samples.
Define
\(
\widehat{\boldsymbol{\psi}}_N^{*,b,\mathrm{pool}}
=
\widehat{\boldsymbol{\psi}}
\bigl(\widehat H_1^{*,b},\widehat H_2^{*,b}\bigr).
\)
The pooled bootstrap covariance estimator is
\[
\widehat{\boldsymbol{\Sigma}}_{\mathrm{pool}}
=
N\operatorname{Cov}_{b=1,\ldots,B}
\left[
\widehat{\boldsymbol{\psi}}_N^{*,b,\mathrm{pool}}
\right].
\]

\begin{theorem} \label{thm_boot}
Assume Assumptions~\ref{ass:regularity} and
\ref{ass:bootstrap} in the Supplementary Material. Then, conditionally on the data
\begin{align*}
&\sqrt{N}\,
\left(\psi \left(\widehat{F}^*, \widehat{G}^* \right) - \psi \left( \widehat{F}, \widehat{G} \right) \right)
\;\xRightarrow{\mathbb P}\;
\mathcal N_2\bigl(0,\Sigma(F,G)\bigr) 
\end{align*}
If, in addition, \(H_0:F=G\) holds and the observed censored pairs are exchangeable across groups, then the pooled bootstrap satisfies conditional on the data
\begin{align*}
&\sqrt{N}\,
\left(\psi \left( \widehat{H}_1^*, \widehat{H}_2^* \right) - \psi_0 \right)
\;\xRightarrow{\mathbb P}\;
\mathcal N_2\bigl(0,\Sigma_0\bigr)
\end{align*}
where \(\psi_0=(1/2,1/2)^\top\) and \(\Sigma_0\) denotes the asymptotic covariance matrix under the null model.
\end{theorem}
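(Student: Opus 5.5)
The plan is to reduce Theorem~\ref{thm_boot} to two ingredients: the conditional weak convergence of the bootstrapped Kaplan--Meier processes, and the Hadamard differentiability of $\psi$ that already drives the proof of Theorem~\ref{thm:asym}. Once both are in place, the functional delta method for the bootstrap (van der Vaart and Wellner, Theorem~3.9.11) gives the conclusion.

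\textbf{Step 1: differentiability of $\psi$.} From the proof of Theorem~\ref{thm:asym} I take that the map $(F,G)\mapsto\psi(F,G)$ is Hadamard differentiable at the true $(F,G)$. It is viewed as a map on $D[0,\tau]^2$, tangentially to the set of continuous limit processes. This requires three components to be differentiable.
\begin{itemize}
\item The bilinear Wilcoxon-type integrals $\int F^{\pm}\,dG$ over $(0,m]$ and $(m,\tau]$. These are differentiable by the integration-by-parts argument for $(A,B)\mapsto\int A\,dB$, with $A$ of bounded variation.
\item The quantile functional $G\mapsto m=G^{-1}(1/2)$. This is differentiable under the positive-density condition at the median contained in Assumption~\ref{ass:regularity}.
\item The ratio maps in \eqref{eq:O-partial_def}. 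These are differentiable by the chain rule because $G(m)=1/2$ is bounded away from $0$ and $1$.
\end{itemize}
The tail correction in \eqref{eq:theta-star-simplified} is a smooth function of $\widehat F(X_{(n_1)})$ and $\widehat G(X_{(n_1)})$. On the event where $X_{(n_1)}$ lies beyond the restriction point $\tau$, its effect is $o_P(N^{-1/2})$; I expect Assumption~\ref{ass:regularity} to supply exactly this condition, as in the proof of Theorem~\ref{thm:asym}. The same bound must be checked for the bootstrap maximum $X^*_{(n_1)}$.

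\textbf{Step 2: stratified bootstrap.} Within each group, the Kaplan--Meier estimator is a Hadamard differentiable functional of the empirical distributions of $(X,\Delta)$ and $(Y,\Gamma)$ via the product-integral map. The empirical bootstrap is conditionally consistent for Donsker classes (Gin\'e--Zinn). So, conditionally in probability,
\[
\sqrt{n_1}(\widehat F^*-\widehat F)\rightsquigarrow\mathbb{G}_F \quad\text{and}\quad \sqrt{n_2}(\widehat G^*-\widehat G)\rightsquigarrow\mathbb{G}_G,
\]
and these two limits are independent. With $n_1/N\to\lambda$ this gives joint convergence of $\sqrt N(\widehat F^*-\widehat F,\widehat G^*-\widehat G)$ to the same Gaussian limit that appears in Theorem~\ref{thm:asym}. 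Applying the delta method for the bootstrap with the derivative from Step~1 then yields the first display with the same matrix $\Sigma(F,G)$. Convergence in $\mathbb{R}^2$ is metrized by bounded Lipschitz functions, so the conditional statement follows directly.

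\textbf{Step 3: pooled bootstrap.} The pooled sample is i.i.d.\ from the mixture $\lambda P_1+(1-\lambda)P_2$ of observed-pair laws. Under $H_0$ with $K=J$, the two laws coincide and equal a common $P_0$, so the pooled empirical measure converges to $P_0$. Both resamples are drawn from this one empirical measure, so the same conditional Donsker argument applies. Both bootstrap KM estimators are centered at the pooled KM estimator $\widehat H$, and their fluctuations are conditionally independent Gaussian limits at the true $F$.

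Since $\psi(\widehat H,\widehat H)=(1/2,1/2)^\top$ exactly, centering at $\psi_0$ is the same as centering at the plug-in value. Two facts support this:
\begin{itemize}
\item $\widehat\theta$ satisfies $\theta(A,A)=1/2$ by construction.
\item The plug-in $O$ equals $1/2$ when both arguments are equal, up to discretization terms of order $o_P(N^{-1/2})$ at ties near the median.
\end{itemize}
The delta method at $(F,F)$ then gives $\Sigma_0$.

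\textbf{Main obstacle.} I expect the hardest part to be the median component, in two places. First, the bootstrap KM median $\widehat m^*_G$ must satisfy a conditional Bahadur-type expansion; Hadamard differentiability of the inverse map, applied tangentially, handles this. Second, the step-function estimator $\widehat O$ must agree with $O(\widehat F,\widehat G)$ up to $o_P(N^{-1/2})$ conditionally. The first thing I would try is to bound the discrepancy by the largest single jump of $\widehat G^*$ near $m$. That jump is $O_P(N^{-1})$ by the Glivenko--Cantelli property of the bootstrap KM estimator together with continuity of $G$ at $m$.
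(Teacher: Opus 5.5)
Your architecture matches the paper's. It combines conditional weak convergence of the stratified and pooled bootstrap Kaplan--Meier processes with the bootstrap functional delta method, using the derivative from the proof of Theorem~\ref{thm:asym}. In the pooled case you recenter from $\psi(\widehat H_N,\widehat H_N)$ to $\psi_0$. The paper simply assumes the conditional KM bootstrap CLT in Assumption~\ref{ass:bootstrap}, so your Gin\'e--Zinn/product-integral derivation is extra.

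\textbf{The gap.} The step that fails is the passage from the truncated problem to the stated limits. You apply the delta method on $D[0,\tau]^2$ for a fixed $\tau<\tau_0$ and conclude that this gives the first display ``with the same matrix $\Sigma(F,G)$''. You dispose of the region beyond $\tau$ by asserting that the tail correction has effect $o_P(N^{-1/2})$, and this is false for fixed $\tau$ in general:
\begin{itemize}
\item $\widehat\theta$ integrates $\widehat F^{\pm}\,d\widehat G$ up to $X_{(n_1)}$, which exceeds $\tau$ with probability tending to one.
\item Hence $\sqrt N\{(\widehat\psi_N-\psi(F,G))-(\widehat\psi_{N,\tau}-\psi_\tau(F,G))\}$ carries the fluctuation of $\int_{(\tau,X_{(n_1)}]}\widehat F^{\pm}\,d\widehat G$ and of two different tail corrections, so it is not $o_P(1)$ in general.
\item If it were $o_P(1)$ for every fixed $\tau$, the limit laws would coincide and $\Sigma_\tau(F,G)=\Sigma(F,G)$ for all $\tau$, which fails in general.
\end{itemize}
What your argument actually delivers is $\mathcal N_2(0,\Sigma_\tau(F,G))$, and $\mathcal N_2(0,\Sigma_{0,\tau})$ in the pooled case.

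\textbf{The missing ingredient.} You need the iterated limit the paper uses. The bootstrap tail-localization conditions in Assumption~\ref{ass:bootstrap} make the full-minus-localized centered bootstrap statistic conditionally negligible only as first $N\to\infty$ and then $\tau\uparrow\tau_0$. Combine this with $\Sigma_\tau(F,G)\to\Sigma(F,G)$ and $\Sigma_{0,\tau}\to\Sigma_0$. A conditional converging-together argument, e.g.\ via bounded Lipschitz functions, then gives the stated limits.

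\textbf{The same tail effect in Step~3.} Your centering claim is affected too, although you are right to address the recentering, which the paper leaves implicit. $\widehat\theta_{>}$ is normalized by the KM mass observed beyond $\widehat m_G$ and carries no tail correction. Evaluating the main-text formula at $\widehat F=\widehat G=\widehat H$ therefore gives $\widehat O\approx\widehat H(s)/2$, where $s$ is the last jump of $\widehat H$. The deviation from $1/2$ is half the unassigned KM tail mass, not a tie effect near the median, and it is controlled only through tail localization. By contrast, the localized value $O_\tau(\widehat H,\widehat H)=1-2\widehat H(\widehat m)^2$ differs from $1/2$ only by a term of the order of the jump of $\widehat H$ at its median $\widehat m$.

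\textbf{A minor point.} Glivenko--Cantelli gives no rate for the largest jump of $\widehat G^*$ near $m$. The needed $o_P(N^{-1/2})$ bound comes from the KM jump structure: a jump equals $\widehat S(t-)$ times the multiplicity at $t$, divided by the number at risk. The number at risk is of order $N$ near $m<\tau_0$, so the jump is $O_P(\log N/N)$ under bootstrap multiplicities.
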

\begin{proof}
See Section \ref{proofthm2}.
\end{proof}

\subsection{Joint Hypothesis Testing}
We consider testing the hypothesis $H_0:\: F = G $.  Additionally we assume equal censoring distributions $K = J$.
Under this hypothesis we have exchangeability, which justifies the use of the pooled bootstrap. We first formulate a test via the classic Wald-type statistic, as well as a max-t-type test. Theorem~\ref{thm_boot} justifies the use of the bootstrapped covariance estimators $\widehat\Sigma_{\mathrm{str}}$ and $\widehat\Sigma_{\mathrm{pool}}$ of $\theta(F, G)$ and $O(F, G)$. Let additionally $\widehat{R}_{\mathrm{pool}}$ and $\widehat{R}_{\mathrm{str}}$ denote the correlation matrices corresponding to $\widehat{\Sigma}_{\mathrm{pool}}$ and $\widehat{\Sigma}_{\mathrm{str}}$, respectively.

\subsubsection{Wald-type Test}

In general the Wald statistic is defined as
\begin{equation*}
W_{N}(\psi_0)
=
N(\widehat\psi_{N}-\psi_0)^\top
(\widehat\Sigma_{\mathrm{pool}})^{-1}
(\widehat\psi_{N}-\psi_0).
\end{equation*}
If \(\widehat\Sigma_{\mathrm{pool}}\) is singular or numerically ill-conditioned in finite samples, we replace the ordinary inverse by the Moore--Penrose generalized inverse.
Under $H_0$ we have
$ %\begin{equation*}
\psi(F, G) = \psi_0 = (1/2, 1/2)
$ %\end{equation*}
and thus the Wald-type statistic becomes, provided that the limiting covariance matrix
\(\Sigma_0\) is positive definite
\begin{equation*}
W_{N}\!\left(\tbinom{1/2}{1/2}\right)
=
N\left(\widehat\psi_{N}-\tbinom{1/2}{1/2}\right)^\top
(\widehat\Sigma_{\mathrm{pool}})^{-1}
\left(\widehat\psi_{N}-\tbinom{1/2}{1/2}\right).
\end{equation*}
Under $H_0$, and provided that $G$ is continuous, we have
$ %\begin{equation*}
W_{N}(\psi_0)
\xrightarrow{d}
\chi^2_2,
$ %\end{equation*}
so that $H_0$ is rejected at level $\alpha$ if
$ %\begin{equation*}
W_{N}(\psi_0) > q_{\chi^2_2, 1 - \alpha}.
$ %\end{equation*}

\subsubsection{Max-t-type Test}
As an alternative to the quadratic Wald test, we consider a
maximum-t-type test that yields simultaneous inference with
coordinatewise control.

Let $\widehat{\sigma}^{*}_{\theta}$ and $\widehat{\sigma}^{*}_{O}$ be the bootstrapped standard deviations (using the pooled bootstrap) of $\theta$ and $O$ respectively. Define the statistics
\begin{equation*}
Z_1 := \frac{\widehat{\theta}(\widehat{F}, \widehat{G}) - 1/2}{\widehat{\sigma}^{*}_{\theta}}, \qquad
Z_2 := \frac{\widehat{O}(\widehat{F}, \widehat{G}) - 1/2}{\widehat{\sigma}^{*}_{O}}
\end{equation*}
Under $H_0$ we have
$ %\begin{equation*}
\left( Z_1, Z_2 \right) \overset{d}{\rightarrow} \mathcal{N}_{2}(0, R_{\mathrm{pool}}).
$ %\end{equation*}
Define the maximum-t-type statistic as
\begin{equation*}
    T^{\max}_{N} := \max\{|Z_1|, |Z_2|\}
\end{equation*}
and define $c_{1 - \alpha}(R_{\mathrm{pool}})$ as the $1 - \alpha$-Quantile of
    $\max\{|Z'_1|, |Z'_2|\}$ for some  $\left( Z'_1, Z'_2 \right) \sim \mathcal{N}_{2}(0, R_{\mathrm{pool}})$
Then we reject $H_0$ if
$ %\begin{equation*}
T^{\max}_{N} > c_{1 - \alpha}
$. %\end{equation*}
In practice $c_{1 - \alpha}(\widehat{R}_{\mathrm{pool}})$ is computed via Monte-Carlo sampling of $\left( Z'_1, Z'_2 \right)$. 
%Define the maximum-type statistic
%\[
%T_{\max}
%=
%\max_{k=1,2} |Z_k|.
%\]

%Under $H_0$,
%\[
%T_{\max}
%\xrightarrow{d}
%\max\bigl(|Z_1|,|Z_2|\bigr),
%\qquad
%(Z_1,Z_2)^\top \sim \mathcal N_2(0,R).
%\]

%Let $c_{1-\alpha}(R)$ denote the $(1-\alpha)$ quantile of
%$\max(|Z_1|,|Z_2|)$ under a bivariate normal distribution with
%correlation matrix $R$.
%In practice, we use the plug-in estimate $c_{1-\alpha}(\widehat R)$.

%We reject $H_0$ at level $\alpha$ if
%\[
%T_{\max}
%>
%c_{1-\alpha}(\widehat R).
%\]

%For testing distributional equality,
%\[
%H_0:\ (\theta,O)^\top = \bigl(1/2,1/2\bigr)^\top,
%\]
%the same max-$t$ statistic is used with
%$\psi_{0,1}=1/2$ and $\psi_{0,2}=1/2$.
\subsection{Joint Confidence Intervals}
\subsubsection{Wald-type confidence region}
A Wald-type $(1-\alpha)$ confidence region for
$\psi(F,G)=(\theta,O)^\top$ is given by
\[
\mathcal C_{1-\alpha}
=
\left\{
\mathbf d\in\mathbb R^2:
N(\widehat\psi_{N}-\mathbf d)^\top
(\widehat\Sigma_{\mathrm{str}})^{-1}
(\widehat\psi_{N}-\mathbf d)
\le
q_{\chi^2_2,1-\alpha}
\right\}.
\]

Projection yields simultaneous marginal confidence intervals:
\[
\theta(F,G)
\in
\left[
\widehat\theta
\pm
\sqrt{\frac{q_{\chi^2_2}(1-\alpha)}{N}}
\sqrt{e_1^\top \widehat\Sigma_{\mathrm{str}} e_1}
\right],
\qquad 
O(F,G)
\in
\left[
\widehat O
\pm
\sqrt{\frac{q_{\chi^2_2}(1-\alpha)}{N}}
\sqrt{e_2^\top \widehat\Sigma_{\mathrm{str}} e_2}
\right],
\]
where $e_1=(1,0)^\top$ and $e_2=(0,1)^\top$.
These intervals are conservative simultaneous intervals induced by
the bivariate confidence ellipse.

\subsubsection{Max-t-type Simultaneous Confidence Intervals}
The max-$t$ procedure yields simultaneous $(1-\alpha)$ confidence
intervals for the components of $\psi(F,G)$:
%\[
%\psi_k(F,G)
%\in
%\left[
%\widehat\psi_k
%\pm
%\frac{c_{1-\alpha}(\widehat R)\,\widehat\sigma_k}{\sqrt{N}}
%\right],
%\qquad k=1,2.
%\]

\begin{align*}
\theta(F,G) \in \left[ \widehat\theta \pm \frac{c_{1-\alpha}(\widehat{R}_{\mathrm{str}})\, \widehat{\sigma}_{\mathrm{str},\theta}}{\sqrt{N}} \right],  \qquad\text{and}\qquad
O(F,G) \in \left[ \widehat O \pm \frac{c_{1-\alpha}(\widehat{R}_{\mathrm{str}})\, \widehat{\sigma}_{\mathrm{str},O}}{\sqrt{N}} \right]
\end{align*}

These intervals control the familywise error rate asymptotically
and are typically less conservative than projection intervals
obtained from the quadratic Wald ellipse.

\section{Simulations}
We conducted a simulation study to evaluate the finite-sample performance of the proposed
procedures under proportional and non-proportional hazards and to compare them with
established two-sample survival tests. We considered five basic settings: a null scenario,
a proportional-hazards scenario, a crossing-in-the-middle scenario, a crossing-at-the-end
scenario, and a diffuse-transition alternative in which the treatment effect changes gradually
from an early benefit to a later disadvantage. For the null setting, exponential, Weibull,
and gamma survival distributions were considered. The proportional-hazards scenarios
covered hazard ratios of $1.5$, $2$, and $3$, whereas the two crossing scenarios were
generated from the Yang--Prentice model. The diffuse-transition scenario was based on a
piecewise-constant treatment hazard and was included to represent a more gradual temporal
change that cannot naturally be characterized by a single crossing point.

For each setting, we considered balanced sample sizes
\[
(n_1,n_2)\in\{(30,30),(50,50),(100,100),(200,200)\}
\]
and corresponding unbalanced designs
\[
(n_1,n_2)\in\{(30,60),(50,100),(100,200),(200,400)\}.
\]
Independent exponential censoring was chosen to obtain target event rates of approximately
$0.7$ and $0.9$. Each configuration was replicated $1000$ times. The bootstrap-based
procedures used $1000$ bootstrap samples per simulation replicate, and the critical value of
the max-$t$ statistic was approximated using $100{,}000$ Monte Carlo draws.

We compared the proposed procedures with the classical log-rank test, the
Gehan--Wilcoxon test, the adaptively weighted log-rank procedure by\cite{yangprentice09}, and the two-stage procedure by \cite{qiusheng08}.
Complete specifications of the data-generating mechanisms, including the Yang--Prentice
parameters, the diffuse-transition hazard, the censoring calibration, and the exact rejection
rates for all simulation settings, are provided in the Supplementary Material.

\subsection{Results}
Before comparing our methods with the competitors we wanted to determine if there are differences among our methods. %them. 
In principle four different hypothesis tests can be constructed using the stratified and pooled bootstrap and using the Wald and the max-t-type approach respectively. The rejection rates of these four under the null scenario can be seen in Figure~\ref{subfig:sim_null_our}. Here we pooled the results from the three distributions we examined. It becomes immediately obvious that the two tests based on the pooled bootstrap significantly outperform the ones based on the stratified bootstrap. This is not surprising, given the fact that the pooled bootstrap assumes the null hypothesis in its sampling scheme whereas the stratified one does not. Within each bootstrap procedure we can also see that the max-t-type approach performs a bit better than the Wald method. In order not to make the other figures too cluttered we therefore compare the competitors only to the method based on the pooled bootstrap and the max-t-type test.

In Figure~\ref{subfig:sim_null_all} we can see that our method keeps the nominal level well, performing similarly to most competitors, even outperforming some of them for small sample sizes. Only for sample size $200$ and event rate $0.7$ our method might be marginally more liberal, although confidence intervals show a large overlap.

Looking at Figure~\ref{subfig:sim_ph_all} we see the performance for the proportional hazards setting. Here it is not surprising that the log-rank test performs best, interestingly together with the adaptively weighted version. The other methods, including our method, are not very far behind but do show markedly lower power.

In the middle-crossing scenario (see Figure~\ref{subfig:sim_middle_all}), the weakness of the log-rank and Gehan-Wilcoxon test in picking up this alternative becomes apparent. The adaptively weighted log-rank test and especially the two-stage procedure perform particularly well here, with our method starting with power in between that of the other methods for small sample sizes and becoming competitive with the better methods at around a sample size of $100$.

In the end-crossing scenario (see Figure~\ref{subfig:sim_end_all}) the log-rank test performs worst, especially when the proportion of censored data is low. Our method performs similarly to the others, perhaps a bit worse than the Gehan-Wilcoxon and the adaptively weighted log-rank test for small samples but catching up for larger samples.

Under the diffuse transition alternative (Figure~\ref{subfig:sim_dta_all}), the two-stage procedure, which performed very well in all previous settings, is a bit worse than the others, together with the log-rank test. Here again our method is either in the middle, for smaller sample sizes, or one of the best two, at larger sample sizes.

Exact rejection rates for all scenarios can be found in the Tables~\ref{tab:null_balanced} through~\ref{tab:crossing_unbalanced}.

%\begin{figure}
%    \centering
%    \includegraphics[width=\linewidth]{graphics/Sim_Null_Our.pdf}
%    \caption{Simulation results of the null scenario (all three distributions combined). Shown are the rejection rates as well as $95\%$ Agresti Coull confidence intervals thereof. Here we restrict ourselves to the methods introduced in this paper, excluding the competitors. Points and errorbars have been jittered horizontally for clearer representation, but sample sizes are exactly as denoted by the ticks on the x-axis. Results are shown seperately for each combination of event rate (in the columns) and balancing (in the rows).}
%    \label{fig:placeholder}
%\end{figure}

%\begin{figure}
%    \centering
%    \includegraphics[width=\linewidth]{graphics/Sim_Null_All.pdf}
%    \caption{Simulation results of the null scenario (all three distributions combined). Shown are the rejection rates as well as $95\%$ Agresti Coull confidence intervals thereof. Here we show the competitors and our Max-t-type test for the pooled bootstrap. Points and error bars have been jittered horizontally for clearer representation, but sample sizes are exactly as denoted by the ticks on the x-axis. Results are shown seperately for each combination of event rate (in the columns) and balancing (in the rows).}
%    \label{fig:placeholder}
%\end{figure}

\begin{figure}
    \centering
    \begin{subfigure}[t]{0.85\linewidth}
        \centering
        \includegraphics[width=\linewidth]{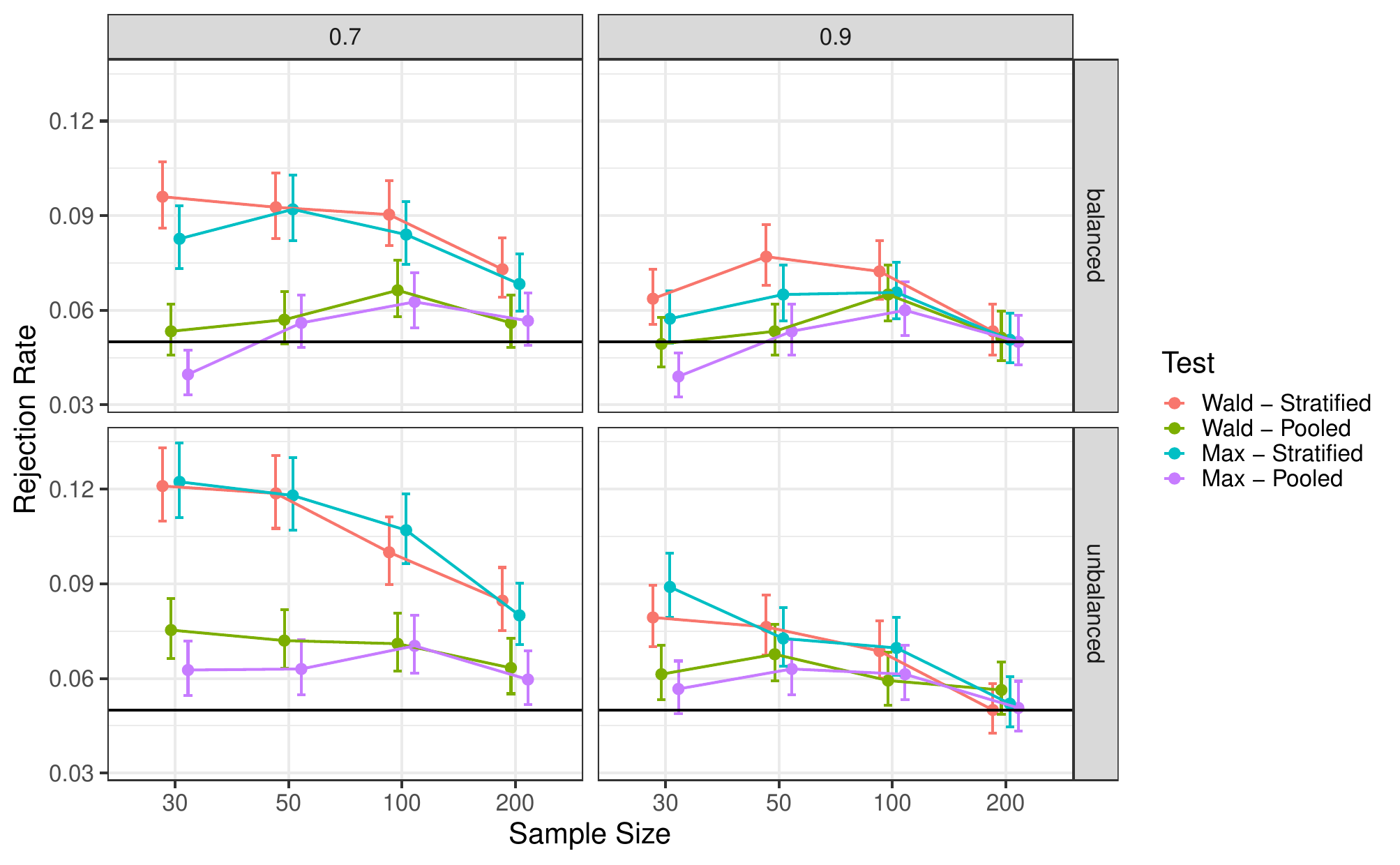}
        \caption{Comparison of the methods presented in this paper.\label{subfig:sim_null_our}}
    \end{subfigure}
    \hfill
    \begin{subfigure}[t]{0.85\linewidth}
        \centering
        \includegraphics[width=\linewidth]{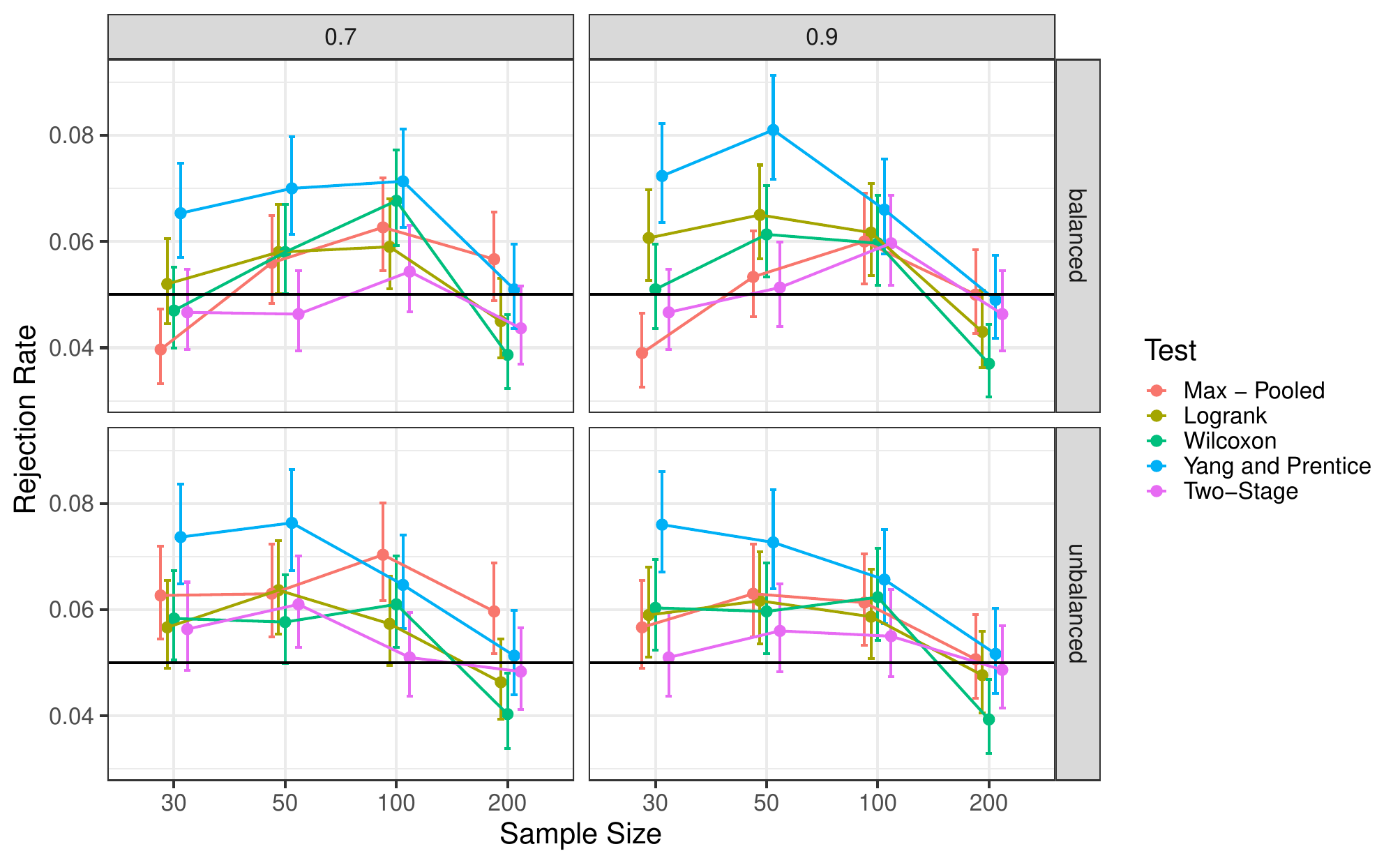}
        \caption{Comparison of the max-t-type test with the competitors.\label{subfig:sim_null_all}}
    \end{subfigure}
    \caption{Simulation results for the null scenario. Shown are the rejection rates as well as $95\%$ Agresti Coull confidence intervals thereof. Points and error bars have been jittered horizontally for clearer representation, but sample sizes are exactly as denoted by the ticks on the x-axis. Results are shown separately for each combination of event rate (in the columns) and balancing (in the rows).}
    \label{fig:sim_null}
\end{figure}

%\begin{figure}
%    \centering
%    \includegraphics[width=\linewidth]{graphics/Sim_PH_All.pdf}
%    \caption{Simulation results of the proportional hazards scenario. The hazard rate is indicated by linetype. Shown are the rejection rates as well as $95\%$ Agresti Coull confidence intervals thereof. Here we show the competitors and our Max-t-type test for the pooled bootstrap. Points and error bars have been jittered horizontally for clearer representation, but sample sizes are exactly as denoted by the ticks on the x-axis. Results are shown seperately for each combination of event rate (in the columns) and balancing (in the rows).}
%    \label{fig:placeholder}
%\end{figure}

%\begin{figure}
%    \centering
%    \includegraphics[width=\linewidth]{graphics/Sim_Middle_All.pdf}
%    \caption{Simulation results of the crossing-in-the-middle scenario. Shown are the rejection rates as well as $95\%$ Agresti Coull confidence intervals thereof. Here we show the competitors and our Max-t-type test for the pooled bootstrap. Points and error bars have been jittered horizontally for clearer representation, but sample sizes are exactly as denoted by the ticks on the x-axis. Results are shown seperately for each combination of event rate (in the columns) and balancing (in the rows).}
%    \label{fig:placeholder}
%\end{figure}

\begin{figure}
    \centering
    \begin{subfigure}[t]{0.8\linewidth}
        \centering
        \includegraphics[width=\linewidth]{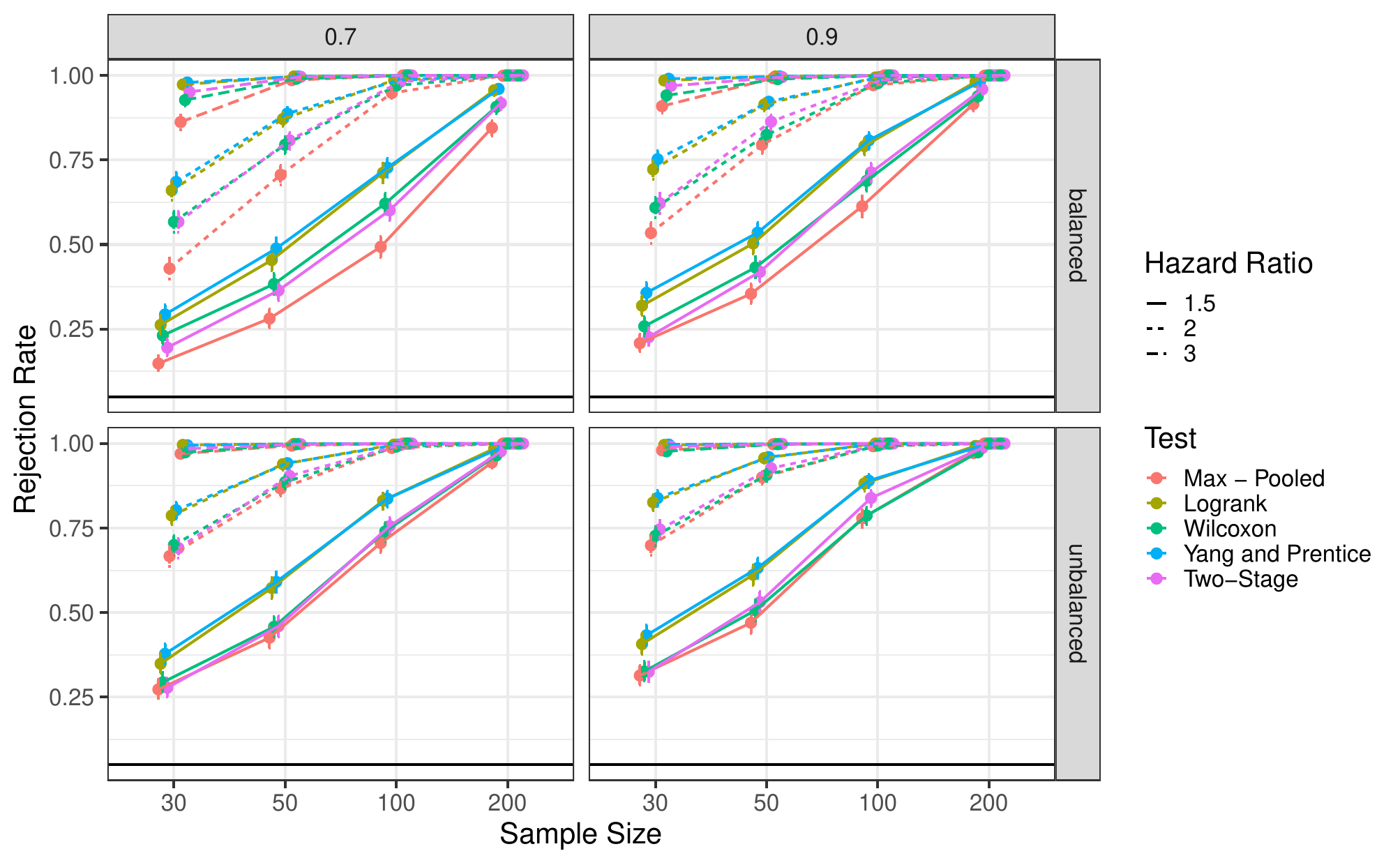}
        \caption{Results for the proportional hazards scenario. The hazard ratio is indicated by linetype.\label{subfig:sim_ph_all}}
    \end{subfigure}
    \hfill
    \begin{subfigure}[t]{0.8\linewidth}
        \centering
        \includegraphics[width=\linewidth]{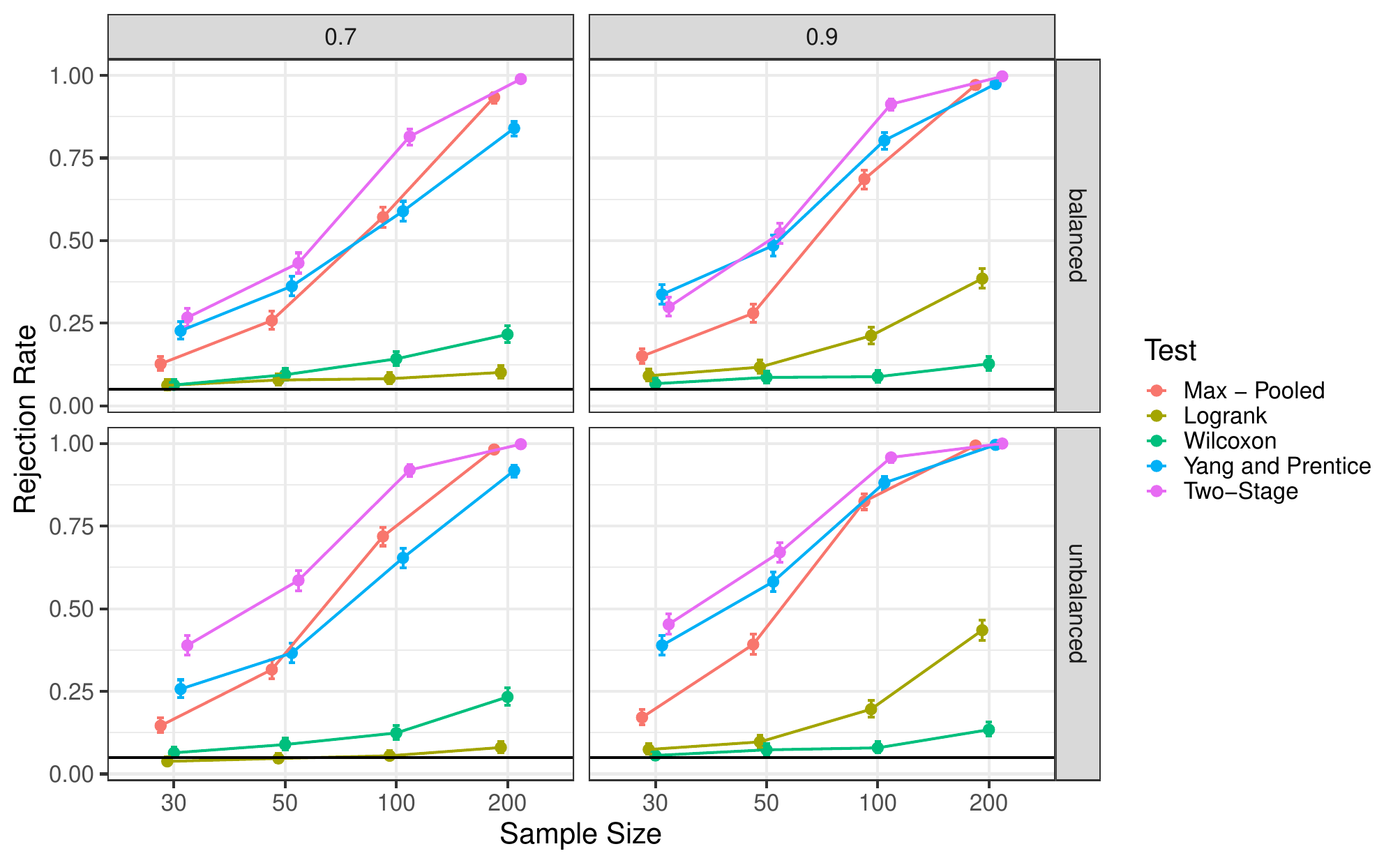}
        \caption{Results for the middle-crossing scenario.\label{subfig:sim_middle_all}}
    \end{subfigure}
    \caption{Simulation results for the proportional hazards and middle-crossing scenarios. Here we compare the max-t-type method based on the pooled bootstrap with the competitors. Shown are the rejection rates as well as $95\%$ Agresti Coull confidence intervals thereof. Points and error bars have been jittered horizontally for clearer representation, but sample sizes are exactly as denoted by the ticks on the x-axis. Results are shown separately for each combination of event rate (in the columns) and balancing (in the rows).}
    \label{fig:sim_ph_middle}
\end{figure}

%\begin{figure}
%    \centering
%    \includegraphics[width=\linewidth]{graphics/Sim_End_All.pdf}
%    \caption{Simulation results of the crossing-at-the-end scenario. Shown are the rejection rates as well as $95\%$ Agresti Coull confidence intervals thereof. Here we show the competitors and our Max-t-type test for the pooled bootstrap. Points and error bars have been jittered horizontally for clearer representation, but sample sizes are exactly as denoted by the ticks on the x-axis. Results are shown seperately for each combination of event rate (in the columns) and balancing (in the rows).}
%    \label{fig:placeholder}
%\end{figure}

%\begin{figure}
%    \centering
%    \includegraphics[width=\linewidth]{graphics/Sim_DTA_All.pdf}
%    \caption{Simulation results of the diffuse transition alternative scenario. Shown are the rejection rates as well as $95\%$ Agresti Coull confidence intervals thereof. Here we show the competitors and our Max-t-type test for the pooled bootstrap. Points and error bars have been jittered horizontally for clearer representation, but sample sizes are exactly as denoted by the ticks on the x-axis. Results are shown seperately for each combination of event rate (in the columns) and balancing (in the rows).}
%    \label{fig:placeholder}
%\end{figure}

\begin{figure}
    \centering
    \begin{subfigure}[t]{0.8\linewidth}
        \centering
        \includegraphics[width=\linewidth]{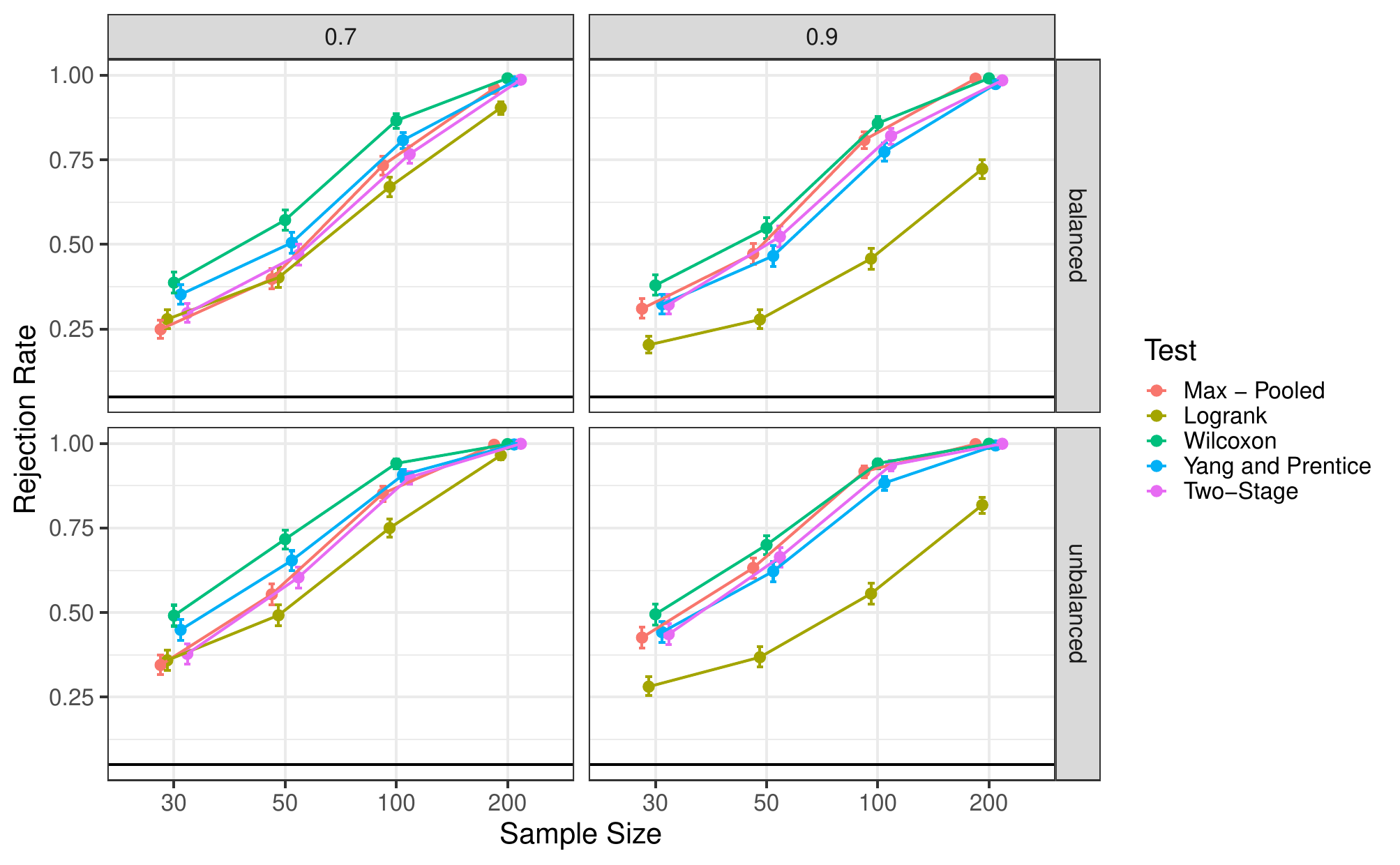}
        \caption{Results for the end-crossing scenario.\label{subfig:sim_end_all}}
    \end{subfigure}
    \hfill
    \begin{subfigure}[t]{0.8\linewidth}
        \centering
        \includegraphics[width=\linewidth]{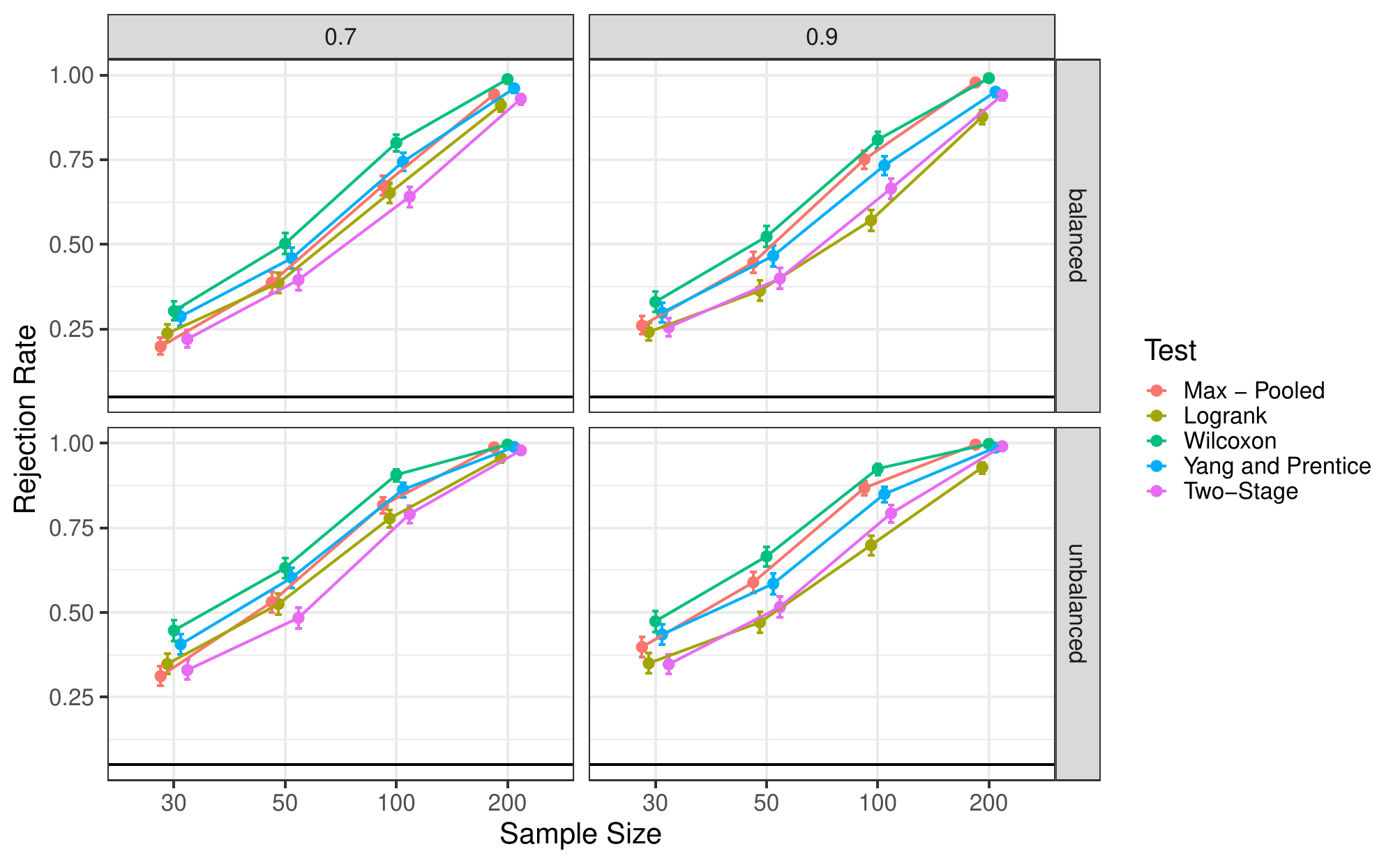}
        \caption{Results for the diffuse transition alternative scenario.\label{subfig:sim_dta_all}}
    \end{subfigure}
    \caption{Simulation results for the end-crossing and diffuse transition alternative scenarios. Here we compare the max-t-type method based on the pooled bootstrap with the competitors. Shown are the rejection rates as well as $95\%$ Agresti Coull confidence intervals thereof. Points and error bars have been jittered horizontally for clearer representation, but sample sizes are exactly as denoted by the ticks on the x-axis. Results are shown separately for each combination of event rate (in the columns) and balancing (in the rows).}
    \label{fig:sim_end_dta}
\end{figure}

\section{Data Example: Melanoma
Brain Metastases}
We set
$ F = F_{\mathrm{Obs}},$
    and
   $ G = F_{\mathrm{WBRT}},$
so that \(F\) denotes the observation arm and \(G\) the WBRT arm. Since death is
an adverse event, values \(\theta(F,G)>1/2\) indicate that observation patients
tend to die earlier than WBRT patients, corresponding to a survival advantage
of WBRT on the global pairwise probability scale.

The estimated joint effect was
\[
    \widehat\psi
    =
    (\widehat\theta,\widehat O)^\top
    =
    (0.522,\;0.401)^\top .
\]
The estimate \(\widehat\theta=0.522\) indicates only a small overall pairwise
advantage for WBRT. The simultaneous max-t-type confidence interval was
\[
    0.432 \leq \theta(F,G) \leq 0.613,
\]
and therefore includes the null value \(1/2\). This agrees with the published
overall-survival analysis, which found no significant global survival difference.

The temporal component provides a more detailed description of the pattern seen
in the Kaplan--Meier curves. The estimate
\(
    \widehat O = 0.401
\)
is below its null value \(1/2\), suggesting that the pairwise survival advantage
of WBRT is stronger in the early part of follow-up than in the late part. The
corresponding max-t-type confidence interval was
\[
    0.291 \leq O(F,G) \leq 0.510.
\]
Although this interval still includes \(1/2\), its upper endpoint is close to
the null value, indicating that the observed deviation is mainly driven by the
temporal contrast.

For interpretation, we decompose the estimate into early and late conditional
effects:
\[
    \widehat\theta_{\le}
    =
    \widehat\theta-\frac12\widehat O
    =
    0.322,
    \qquad
    \widehat\theta_{>}
    =
    \widehat\theta+\frac12\widehat O
    =
    0.723.
\]
Under equality of the two survival distributions, the corresponding reference
values are \(1/4\) and \(3/4\). Hence the estimated early and late deviations are
\[
    \widehat\delta_E
    =
    \widehat\theta_{\le}-\frac14
    =
    0.072,
    \qquad
    \widehat\delta_L
    =
    \widehat\theta_{>}-\frac34
    =
    -0.027.
\]
Thus, the point estimates are consistent with an early survival advantage of
WBRT followed by attenuation or reversal later in follow-up. This agrees with
the visual impression from Figure~\ref{fig:km_wbrt}. However, because the
confidence intervals for both components contain their respective null values,
the example should be interpreted descriptively. It illustrates how the proposed
bivariate estimand can quantify an early--late survival pattern that is not
captured by a single proportional-hazards summary or by a standard global
log-rank comparison.

\begin{figure}
    \centering

    \begin{subfigure}[t]{0.58\linewidth}
        \centering
        \includegraphics[width=\linewidth]{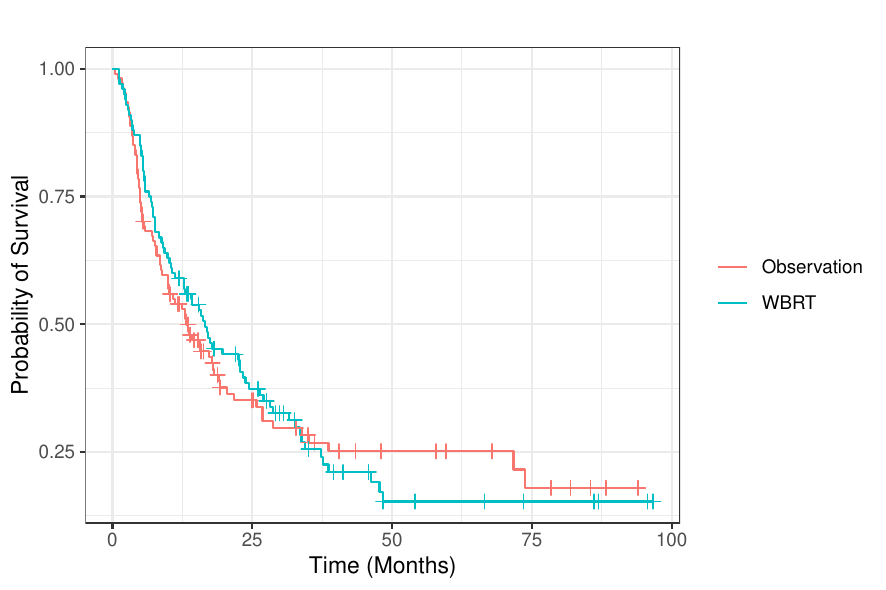}
        \caption{Kaplan--Meier estimates of overall survival.}
        \label{fig:km_wbrt}
    \end{subfigure}
    \hfill
    \begin{subfigure}[t]{0.38\linewidth}
        \centering
        \includegraphics[width=\linewidth]{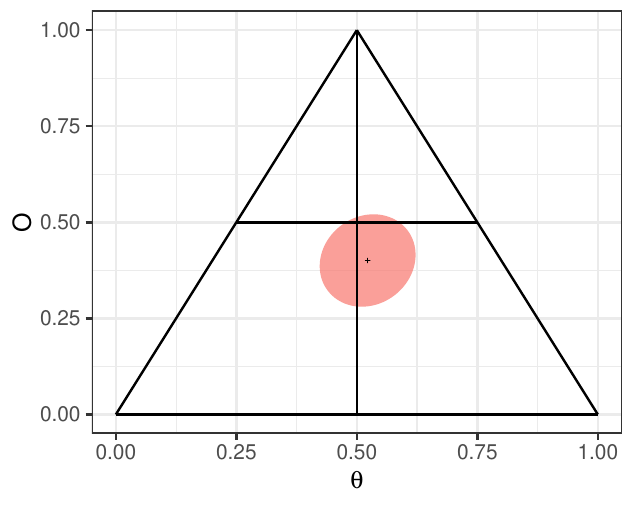}
        \caption{Estimated joint effect with bootstrap uncertainty region.}
        \label{fig:wbrt-ellipse}
    \end{subfigure}

    \caption{Overall survival analysis in the WBRT example.
    \textbf{(a)} Kaplan--Meier estimates for patients assigned to observation or
    adjuvant whole-brain radiation therapy (WBRT) after local treatment of
    melanoma brain metastases. The curves suggest a non-proportional pattern,
    with a small early advantage for WBRT that attenuates and appears to reverse
    later in follow-up.
    \textbf{(b)} Estimated joint effect
    \((\widehat\theta,\widehat O)=(0.522,0.401)\). The ellipse shows the
    bootstrap-based joint uncertainty region, and the null value
    \((1/2,1/2)^\top\) is shown for reference.}
    \label{fig:wbrt-example2}
\end{figure}

An additional real data example illustrating the use of our method is provided in the Appendix~\ref{app:additional_example}.

\section{Discussion}

We proposed a two-dimensional nonparametric framework for comparing right-censored survival distributions under non-proportional hazards. The method combines a Kaplan-Meier-based Mann-Whitney effect, measuring the global pairwise ordering of event times, with a median-split temporal contrast, describing whether this ordering differs between earlier and later follow-up. Thus, the approach provides an interpretable probability-scale summary of both overall and temporal survival differences.

The proposed method is particularly useful when early and late effects point in different directions, so that a single global summary or a log-rank test may obscure relevant structure. In this sense, the procedure is targeted rather than omnibus: it is designed to quantify clinically interpretable early-late patterns, not to detect every possible difference between survival distributions.

The data example illustrates this interpretation. While the global effect suggested only a small overall advantage of WBRT, the temporal component was compatible with an early advantage followed by attenuation or reversal later in follow-up. Since the simultaneous confidence intervals included the null values, this finding should be viewed as descriptive rather than confirmatory.

Several extensions appear natural. Future work could develop covariate-adjusted versions of the proposed effects, and extend the framework to competing risks.

\subsubsection*{Funding}
The first author gratefully acknowledges the support of the WISS 2025 project `IDA-lab Salzburg' (20204-WISS/225/197-2019542 and 20102-F1901166-KZP).

\clearpage
\thispagestyle{plain}
\null
\vfill
\begin{center}
{\LARGE\bfseries Supplementary Material for\\[0.5em]
\textit{Detecting Early and Late Divergences in Survival Curves Using
Nonparametric Effect Measures}\par}

\vspace{2em}
{\large Patrick B. Langthaler \quad Jun Ma \quad Jonas Beck\par}
\end{center}
\vfill
\clearpage

\appendix

% Restart all counters used in the supplement.
\setcounter{section}{0}
\setcounter{subsection}{0}
\setcounter{subsubsection}{0}
\setcounter{equation}{0}
\setcounter{figure}{0}
\setcounter{table}{0}
\setcounter{theorem}{0}
\setcounter{lemma}{0}
\setcounter{definition}{0}
\setcounter{remark}{0}
\setcounter{proposition}{0}
\setcounter{assumption}{0}

% Supplement numbering.
\renewcommand{\thesection}{S\arabic{section}}
\renewcommand{\thesubsection}{\thesection.\arabic{subsection}}
\renewcommand{\thesubsubsection}{\thesubsection.\arabic{subsubsection}}
\renewcommand{\theequation}{S\arabic{equation}}
\renewcommand{\thefigure}{S\arabic{figure}}
\renewcommand{\thetable}{S\arabic{table}}
\renewcommand{\thetheorem}{S\arabic{theorem}}
\renewcommand{\thelemma}{S\arabic{lemma}}
\renewcommand{\thedefinition}{S\arabic{definition}}
\renewcommand{\theremark}{S\arabic{remark}}
\renewcommand{\theproposition}{S\arabic{proposition}}
\renewcommand{\theassumption}{S\arabic{assumption}}

% Prefix hyperlink destinations because the visible counters restart at one.
\providecommand*{\theHsection}{}
\providecommand*{\theHsubsection}{}
\providecommand*{\theHsubsubsection}{}
\providecommand*{\theHequation}{}
\providecommand*{\theHfigure}{}
\providecommand*{\theHtable}{}
\providecommand*{\theHtheorem}{}
\providecommand*{\theHlemma}{}
\providecommand*{\theHdefinition}{}
\providecommand*{\theHremark}{}
\providecommand*{\theHproposition}{}
\providecommand*{\theHassumption}{}
\renewcommand*{\theHsection}{supp.\arabic{section}}
\renewcommand*{\theHsubsection}{supp.\arabic{section}.\arabic{subsection}}
\renewcommand*{\theHsubsubsection}{supp.\arabic{section}.\arabic{subsection}.\arabic{subsubsection}}
\renewcommand*{\theHequation}{supp.\arabic{equation}}
\renewcommand*{\theHfigure}{supp.\arabic{figure}}
\renewcommand*{\theHtable}{supp.\arabic{table}}
\renewcommand*{\theHtheorem}{supp.\arabic{theorem}}
\renewcommand*{\theHlemma}{supp.\arabic{lemma}}
\renewcommand*{\theHdefinition}{supp.\arabic{definition}}
\renewcommand*{\theHremark}{supp.\arabic{remark}}
\renewcommand*{\theHproposition}{supp.\arabic{proposition}}
\renewcommand*{\theHassumption}{supp.\arabic{assumption}}

\section{Proofs}
Let
\[
\tau_0
:=
\sup\left\{
t>0:
\Pr(X\geq t)>0
\text{ and }
\Pr(Y\geq t)>0
\right\}
\in(0,\infty].
\]
For the local asymptotic arguments, fix an arbitrary finite
$\tau<\tau_0$ such that the median $m$ of $G$ satisfies $m<\tau$.
Assume that the Kaplan--Meier estimators admit weak limits on
$[0,\tau]$ under the supremum norm.
In this proof we incorporate similar techniques as in \cite{Dobler2018}:

Let $D[0,\tau ]$ be the space of càdlàg functions on $[0,\tau]$ equipped with $\|\cdot\|_\infty$.
We write $\rightsquigarrow$ for weak
convergence.
For bootstrap random elements, we write
\(
X_N^* \rightsquigarrow_{\mathbb P} X
\)
for conditional weak convergence in outer probability, that is, the
conditional law of $X_N^*$ given the data converges weakly in outer
probability to the law of $X$.
For a càdlàg cdf $A$, write $A(t-)$ for left limits and $A^{\pm}(t):=(A(t-)+A(t))/2$.

Assume standard regularity conditions ensuring the KM functional CLT on $[0,\tau]$:
there exist tight mean-zero Gaussian processes $\mathbb G_F,\mathbb G_G$ in $D[0,\tau]$ such that
\begin{equation} \label{KM_CLT}
\sqrt{n_1}\,(\widehat F-F)\rightsquigarrow \mathbb G_F,\qquad
\sqrt{n_2}\,(\widehat G-G)\rightsquigarrow \mathbb G_G,
\end{equation}
and $\mathbb G_F$ and $\mathbb G_G$ are independent (due to group independence). This follows directly from Example 3.10.33 in \cite{vdVW2023}.
With $N=n_1+n_2$ and $n_1/N\to\lambda\in(0,1)$,
\[
\sqrt N\binom{\widehat F-F}{\widehat G-G}\rightsquigarrow
\binom{\lambda^{-1/2}\mathbb G_F}{(1-\lambda)^{-1/2}\mathbb G_G}
=: \mathbb Z
\quad\text{in }D[0,\tau]^2.
\]

For the stratified (within-group) nonparametric bootstrap, assume the conditional KM bootstrap CLT:
conditionally on the data,
\[
\sqrt{n_1}(\widehat F^*-\widehat F)
\rightsquigarrow_{\mathbb P}\mathbb G_F,
\qquad
\sqrt{n_2}(\widehat G^*-\widehat G)
\rightsquigarrow_{\mathbb P}\mathbb G_G,
\]
in $D[0,\tau]$, in outer probability, with conditional independence across groups.
Equivalently,
\begin{equation} \label{KM_conditional_CLT}
\sqrt N
\begin{pmatrix}
\widehat F^*-\widehat F\\
\widehat G^*-\widehat G
\end{pmatrix}
\rightsquigarrow_{\mathbb P}\mathbb Z.
\end{equation}

For the pooled bootstrap, we have
\begin{align*}
\sqrt{n_1} \left( \widehat{H}_1 - H \right) \rightsquigarrow \mathbb{G}_H
\qquad
\sqrt{n_2} \left( \widehat{H}_2 - H \right) \rightsquigarrow \mathbb{G}'_H
\end{align*}
where $\mathbb{G}_H$ and $\mathbb{G}'_H$ are two independent versions of the same tight mean zero Gaussian process. Due to independence of $\widehat{H}_1$ and $\widehat{H}_2$ we get
\begin{equation*}
\sqrt{N} \left( (\widehat{H}_1, \widehat{H}_2) - (H, H) \right) \rightsquigarrow (\lambda^{-1/2} \mathbb{G}_H, (1 - \lambda)^{-1/2} \mathbb{G}'_H) =: \mathbb{Z}_0
\end{equation*}
The two-sample bootstrap result from Theorem 3.8.6 of~\cite{vdVW2023} gives
\begin{equation*}
\sqrt{n_1} \left( \widehat{H}^{*}_1 - \widehat{H}_1 \right) \rightsquigarrow_{\mathbb P} \mathbb{G}_H
\qquad
\sqrt{n_2} \left( \widehat{H}^{*}_2 - \widehat{H}_2 \right) \rightsquigarrow_{\mathbb P} \mathbb{G}'_H
\end{equation*}
Conditional on the data we have independence of $(\widehat{H}^{*}_1, \widehat{H}_1)$ and $(\widehat{H}^{*}_2, \widehat{H}_2)$ and thus
\begin{equation*}
\sqrt{N} \left( (\widehat{H}^{*}_1, \widehat{H}^{*}_2) - (\widehat{H}_1, \widehat{H}_2) \right) \rightsquigarrow_{\mathbb P} \mathbb{Z}_0
\end{equation*}

Whenever $a$ has bounded variation on $[0,\tau]$ and $h\in \ell^\infty([0,\tau])$, define
\[
\int_{[0,\tau]} a\,dh := a(t)h(t)\big|_{0}^{\tau-} - \int_{[0,\tau]} h\,da,
\]
so that $h\mapsto \int a\,dh$ is a continuous linear functional on $\ell^\infty([0,\tau])$.
(Here the boundary term is interpreted as $\lim_{t\uparrow \tau} a(t)h(t)$.)
Let
\[
\mathbb H :=
\Bigl\{(h_F,h_G)\in D[0,\tau]^2:\ h_F^{\pm}\in L^1_{|[0,\tau]}(G),\ h_G^{\pm}\in L^1_{|[0,\tau]}(F),
\ h_F(0)=h_G(0)=0 \Bigr\}.
\]
(All relevant integrals below are well-defined for directions in $\mathbb H$ because $F$ and $G$
are monotone and hence of bounded variation.)

\begin{assumption}[Regularity conditions]
\label{ass:regularity}
Let
\[
\tau_0
:=
\sup\left\{
t>0:
\Pr(X\geq t)>0
\text{ and }
\Pr(Y\geq t)>0
\right\}
\in(0,\infty].
\]
Suppose that the following conditions hold.

\begin{enumerate}[label=(\roman*)]
\item Within each group, the observations are independent and identically
distributed, the two groups are mutually independent, and censoring is
independent of the event time:
\[
T_i\perp C_i,
\qquad
U_j\perp D_j.
\]

\item
With \(N=n_1+n_2\),
\[
\frac{n_1}{N}\longrightarrow\lambda\in(0,1).
\]

\item
The median
\[
m:=G^{-1}(1/2)
\]
is unique and satisfies \(m<\tau_0\). Moreover, \(G\) is continuously
differentiable in a neighborhood of \(m\), with density \(g\) satisfying
\[
g(m)>0.
\]
In addition, \(F\) is continuous at \(m\).

\item
For every finite \(\tau\) satisfying \(m<\tau<\tau_0\), the
Kaplan--Meier estimators satisfy
\[
\sqrt{n_1}(\widehat F-F)\rightsquigarrow \mathbb G_F,
\qquad
\sqrt{n_2}(\widehat G-G)\rightsquigarrow \mathbb G_G
\]
in \(D[0,\tau]\) equipped with the supremum norm, where
\(\mathbb G_F\) and \(\mathbb G_G\) are independent tight mean-zero
Gaussian processes whose sample paths belong almost surely to the
tangent space of the effect functional.

\item \label{ass:tail-localization}
For every \(\varepsilon>0\),
\[
\lim_{\tau\uparrow\tau_0}
\limsup_{N\to\infty}
\Pr\left[
\sqrt N
\left\|
\{\widehat\psi_N-\psi(F,G)\}
-
\{\widehat\psi_{N,\tau}-\psi_\tau(F,G)\}
\right\|>\varepsilon
\right]
=0.
\]

\item
The covariance matrices of the localized Gaussian limits satisfy
\[
\Sigma_\tau(F,G)\longrightarrow\Sigma(F,G)
\qquad\text{as }\tau\uparrow\tau_0,
\]
where \(\Sigma(F,G)\) is finite.
\end{enumerate}
\end{assumption}
\begin{lemma} \label{lemma1}
   Fix $\tau>0$. Define, for distribution functions $F,G$ on $[0,\tau]$,
\[
\Phi(F,G):=\int_{[0,\tau]} F^{\pm}(t)\,dG(t),
\qquad F^{\pm}(t):=\frac{F(t-)+F(t)}{2}.
\]
Then $\Phi$ is Hadamard differentiable at $(F,G)$ tangentially to
$D[0,\tau]\times C[0,\tau]$ %(or any subset where the perturbations remain distribution functions)
, with derivative
\[
\dot\Phi_{(F,G)}(h_F,h_G)
=
\int_{[0,\tau]} h_F^{\pm}(t)\,dG(t)
\;+\;
\int_{[0,\tau]} F^{\pm}(t)\,dh_G(t),
\]
where $\int F^{\pm}\,dh_G$ is understood via integration by parts
$\int F^{\pm}\,dh_G
= F^{\pm}(\tau)h_G(\tau)-F^{\pm}(0)h_G(0)-\int h_G^{-}\,dF^{\pm}$.
\end{lemma}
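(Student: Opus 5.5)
The plan is to verify Hadamard differentiability directly from the definition. Fix $t_n\downarrow 0$ and directions $h_{F,n}\to h_F$ in $D[0,\tau]$ and $h_{G,n}\to h_G$ in sup norm, where $h_G\in C[0,\tau]$ and $F+t_nh_{F,n}$, $G+t_nh_{G,n}$ are such that the integrals are defined; the natural choice is that they stay of bounded variation, as for Kaplan--Meier perturbations. Set $F_n=F+t_nh_{F,n}$ and $G_n=G+t_nh_{G,n}$. Bilinearity of $(A,B)\mapsto\int A^{\pm}\,dB$ gives the exact decomposition
\[
\frac{\Phi(F_n,G_n)-\Phi(F,G)}{t_n}
=\int h_{F,n}^{\pm}\,dG+\int F^{\pm}\,dh_{G,n}+\int h_{F,n}^{\pm}\,d(G_n-G).
\]
Here the middle integral is understood through the integration-by-parts convention of the statement. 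It then suffices to show three things: the first term converges to $\int h_F^{\pm}\,dG$, the second converges to $\int F^{\pm}\,dh_G$, and the third tends to $0$.

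The first term is routine. Since $\|h_{F,n}^{\pm}-h_F^{\pm}\|_\infty\le\|h_{F,n}-h_F\|_\infty$ and $G$ has total mass at most $1$ on $[0,\tau]$, we get $|\int (h_{F,n}^{\pm}-h_F^{\pm})\,dG|\le \|h_{F,n}-h_F\|_\infty\to0$.

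For the second term, I would write $\int F^{\pm}\,dh_{G,n}=F^{\pm}(\tau)h_{G,n}(\tau)-F^{\pm}(0)h_{G,n}(0)-\int h_{G,n}^{-}\,dF^{\pm}$. Because $F^{\pm}$ has total variation at most $1$, this is a continuous linear functional of $h_{G,n}$ in sup norm. Its value therefore converges to the same expression with $h_G$, which is the claimed derivative term.

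The main obstacle is the remainder $R_n:=\int h_{F,n}^{\pm}\,d(G_n-G)=t_n\int h_{F,n}^{\pm}\,dh_{G,n}$. It cannot be bounded naively, because $h_{G,n}$ need not have uniformly bounded variation. I would use the standard approximation argument (as in van der Vaart--Wellner's Wilcoxon example and \cite{Dobler2018}). Since $h_F\in D[0,\tau]$, for any $\varepsilon>0$ there is a step function $\bar h$ with finitely many jumps and $\|h_F^{\pm}-\bar h\|_\infty<\varepsilon$. Split
\[
R_n=\int (h_{F,n}^{\pm}-\bar h)\,d(G_n-G)+\int\bar h\,d(G_n-G).
\]
The first piece is bounded by $\bigl(\|h_{F,n}-h_F\|_\infty+\varepsilon\bigr)$ times the total variation of $G_n-G$. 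That variation is at most $2$ because $G_n$ and $G$ are distribution functions, or more generally uniformly bounded if one only assumes bounded variation. So the first piece is $O(\varepsilon)$ for large $n$.

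The second piece is a finite sum of increments of $G_n-G=t_nh_{G,n}$ over the intervals of $\bar h$. It is therefore bounded by $2k\|\bar h\|_\infty t_n\|h_{G,n}\|_\infty\to0$, where $k$ is the number of pieces of $\bar h$.

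Two points need care. The first is the treatment of atoms: the step approximation must be chosen so that its jump points interact correctly with the mid-cdf convention, which is why I would approximate $h_F^{\pm}$ rather than $h_F$. The second is that the tangential restriction $h_G\in C[0,\tau]$ is what removes the boundary contributions at jump points in the integration-by-parts step and makes the limit unambiguous. Letting $\varepsilon\downarrow0$ gives $R_n\to0$, and hence the asserted derivative. The derivative is visibly linear and continuous in $(h_F,h_G)$.
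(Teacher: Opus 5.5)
Your argument is correct and is essentially the paper's. The paper simply cites Lemma~3.10.18 of van der Vaart and Wellner for $(A,B)\mapsto\int A\,dB$ with $A=F^{\pm}$, $B=G$. Your bilinear expansion, together with the step-function approximation of the cross term using only that $G_n-G$ has total variation at most $2$, is exactly the proof of that lemma written out by hand. Doing it by hand has the mild bonus of visibly covering the non-c\`adl\`ag integrand $F^{\pm}$, which the citation glosses over.

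One small precision: if the middle term is defined by the statement's integration-by-parts formula, your decomposition is not quite exact. That formula uses $h^-$, whereas the symmetric formula for $F^{\pm}$ would use $h^{\pm}$, so the decomposition holds only up to a term of the form $\tfrac12\sum_s\Delta F(s)\,\Delta h_{G,n}(s)$. This term is bounded by $\|h_{G,n}-h_G\|_\infty\to0$ because $h_G$ is continuous, and that is the concrete content of your closing remark about $h_G\in C[0,\tau]$.
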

\begin{proof}
    This is a direct specialization of Lemma~3.10.18 in \cite{vdVW2023} to the map $(A,B)\mapsto \int A\,dB$ on $[0,\tau]$:
take $A=F^{\pm}$ and $B=G$ (note that $G$ is of bounded variation with total
variation $1$). The derivative in Lemma~3.10.18 is
$\int A\,d\beta + \int \alpha\,dB$ for perturbations $(\alpha,\beta)$, which becomes
$\int F^{\pm}\,dh_G + \int h_F^{\pm}\,dG$.
\end{proof}
For fixed $\tau<\tau_0$, define
\[
\Theta_\tau(F,G)
:=
\int_{[0,\tau]}F^\pm(t)\,\mathrm{d}G(t)
+
\frac{1+F(\tau)}{2}\{1-G(\tau)\}.
\]
\begin{lemma} \label{lemma2}
For every fixed $\tau<\tau_0$, the map $\Theta_\tau$ is Hadamard differentiable at $(F,G)$ tangentially to
$D[0,\tau]\times C[0,\tau]$ with derivative
\[
\dot\Theta_{\tau,(F,G)}(h_F,h_G)
=
\int_{[0,\tau]} h_F(t)\,dG(t)
+
\int_{[0,\tau]} F(t)\,dh_G(t)
+\frac{1}{2}\,(1-G(\tau))\,h_F(\tau)
-\frac{1+F(\tau)}{2}\,h_G(\tau).
\]
\end{lemma}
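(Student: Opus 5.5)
Write $\Theta_\tau=\Phi+\Psi_\tau$, where $\Phi$ is the map of Lemma~\ref{lemma1} and
\[
\Psi_\tau(F,G):=\tfrac12\{1+F(\tau)\}\{1-G(\tau)\}.
\]
Hadamard differentiability is preserved under sums, so it suffices to differentiate each piece and add the derivatives.

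For $\Phi$, Lemma~\ref{lemma1} already gives the derivative $\int h_F^{\pm}\,dG+\int F^{\pm}\,dh_G$ tangentially to $D[0,\tau]\times C[0,\tau]$. To reach the stated form, we remove the mid-cdf symbols. In the second term, $h_G$ is continuous, so the integration-by-parts definition only involves $h_G$ and the jump structure of $F$. Because $h_G$ has no atoms, replacing $F^{\pm}$ by $F$ changes the integrand only on the countable set of jump points of $F$, which is null for the continuous direction $h_G$. Hence $\int F^{\pm}\,dh_G=\int F\,dh_G$ in the integration-by-parts sense. The boundary terms agree by right-continuity at $\tau$ and since $h_G(0)=0$ on the tangent set.

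In the first term, $\int h_F^{\pm}\,dG$ equals $\int h_F\,dG$ whenever $G$ has no atoms at the discontinuities of $h_F$. This holds in particular when $G$ is continuous, as under the regularity conditions near $m$ and under the null with continuous $G$. If $G$ has atoms, I would keep the mid-value convention, i.e.\ read $h_F$ as $h_F^{\pm}$ in the statement. This identification is the only genuinely delicate point. I expect to resolve it by noting that the KM limit $\mathbb G_F$ is continuous wherever $F$ is continuous, and that ties between the samples form a $G$-null set of discontinuities when $F$ and $G$ share no atoms.

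The term $\Psi_\tau$ is a composition of the evaluation maps $(F,G)\mapsto (F(\tau),G(\tau))$, which are continuous and linear on $D[0,\tau]$ under the sup norm, with the smooth function $(a,b)\mapsto \tfrac12(1+a)(1-b)$ on $\mathbb R^2$. The chain rule for Hadamard derivatives then gives the derivative
\[
\tfrac12\{1-G(\tau)\}h_F(\tau)-\tfrac12\{1+F(\tau)\}h_G(\tau).
\]
To see this directly, take $t_n\to0$, $h_{F,n}\to h_F$, $h_{G,n}\to h_G$ uniformly and expand the product. The remainder is $\tfrac{t_n}{2}h_{F,n}(\tau)h_{G,n}(\tau)$, which after division by $t_n$ tends to zero, since the directions $h_{F,n}$, $h_{G,n}$ are uniformly bounded (they converge uniformly).

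Adding the derivatives of $\Phi$ and $\Psi_\tau$ yields exactly the displayed $\dot\Theta_{\tau,(F,G)}$. It is linear and continuous in $(h_F,h_G)$: the evaluations are sup-norm continuous, $\int h\,dG$ is bounded by $\|h\|_\infty$, and $\int F\,dh_G$ is sup-norm continuous via integration by parts because $F$ has bounded variation.
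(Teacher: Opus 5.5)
Your proposal is correct and follows the same route as the paper. Both arguments write $\Theta_\tau$ as the integral functional of Lemma~\ref{lemma1} plus the finite-dimensional term $\tfrac12\{1+F(\tau)\}\{1-G(\tau)\}$, differentiate that term by the chain rule through evaluation at $\tau$, and add the derivatives. Your treatment of the mid-cdf convention is more careful than the paper, which silently writes $h_F$ and $F$ in place of $h_F^{\pm}$ and $F^{\pm}$. As you observe, the displayed form with $h_F$ is exact only when $G$ puts no mass at jumps of $h_F$ (e.g.\ $G$ continuous), whereas replacing $F^{\pm}$ by $F$ against a continuous $h_G$ is harmless.
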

\begin{proof}
    The first two terms are the derivative of $(F,G)\mapsto \int_{[0,\tau]}F\,dG$ by Lemma \ref{lemma1}. The correction term depends only on the 2-vector $(F(\tau),G(\tau))$ via the smooth
finite-dimensional map
\[
g(x,y)=\frac{1+x}{2}(1-y).
\]
Hence it is Hadamard differentiable with gradient
$\nabla g(x,y)=\bigl(\frac12(1-y),-\frac12(1+x)\bigr)$, and by the chain rule (Lemma~3.10.3 in \cite{vdVW2023}) its contribution is
$\frac12(1-G(\tau))h_F(\tau)-\frac{1+F(\tau)}{2}h_G(\tau)$.
\end{proof}

\begin{lemma} \label{lemma3}
    Let $0<p<1$ and let $G$ be continuously differentiable on a neighborhood of
$m:=G^{-1}(p)$ with derivative $g(m)>0$.
Then the inverse map $Q(G):=G^{-1}(p)$ is Hadamard differentiable at $G$
tangentially to $C[0,\tau]$, with derivative
\[
\dot Q_G(h_G)= -\frac{h_G(m)}{g(m)}.
\]
In particular, for the median $p=1/2$ this yields the derivative used for the split point.
\end{lemma}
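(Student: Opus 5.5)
This is the standard Hadamard differentiability of the quantile map, as in Lemma~3.9.20/3.9.23 of \cite{vdVW2023}. We can also prove it directly, and the direct argument is what I would write out.

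Fix sequences $t_n\downarrow 0$ and $h_n\to h_G$ uniformly on $[0,\tau]$, with $h_G\in C[0,\tau]$ and $G_n:=G+t_nh_n$. The perturbed map is understood through the generalized inverse, $G_n^{-1}(p)=\inf\{t:G_n(t)\ge p\}$. Write $m_n:=G_n^{-1}(p)$. The goal is
\[
\frac{m_n-m}{t_n}\longrightarrow -\frac{h_G(m)}{g(m)}.
\]

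\textbf{Step 1: consistency, $m_n\to m$.} Since $g(m)>0$ and $G$ is $C^1$ near $m$, for every small $\varepsilon>0$ we have
\[
G(m-\varepsilon)<p<G(m+\varepsilon),
\]
and both gaps are bounded below by some $\delta_\varepsilon>0$. Because $\|t_nh_n\|_\infty\le t_n\|h_n\|_\infty\to0$, for large $n$ the perturbation is smaller than $\delta_\varepsilon$. Hence $G_n(m-\varepsilon)<p\le G_n(m+\varepsilon)$, and monotonicity of $G$ gives $m_n\in(m-\varepsilon,m+\varepsilon]$. This step does not use monotonicity of $G_n$ itself, because the infimum is attained only to the right of $m-\varepsilon$ anyway. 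To be careful, I would note that $G_n(t)\le G(m-\varepsilon)+o(1)<p$ for all $t\le m-\varepsilon$.

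\textbf{Step 2: the key sandwich.} By definition of the infimum, $G_n(m_n)\ge p$ and $G_n(m_n-\epsilon_n)<p$, where $\epsilon_n=o(t_n)$ is chosen positive. Since the perturbation may have jumps, I work with both points. Subtracting $p=G(m)$ gives
\[
G(m_n)-G(m)+t_nh_n(m_n)\ge 0,\qquad G(m_n-\epsilon_n)-G(m)+t_nh_n(m_n-\epsilon_n)<0.
\]
Now apply the mean value theorem on the $C^1$ neighborhood. This gives $G(m_n)-G(m)=g(\xi_n)(m_n-m)$ with $\xi_n\to m$, and similarly at the shifted point, with an extra $O(\epsilon_n)$ term. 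Uniform convergence together with continuity of $h_G$ and Step 1 give $h_n(m_n)\to h_G(m)$ and $h_n(m_n-\epsilon_n)\to h_G(m)$. Dividing by $t_n$ and using $g(\xi_n)\to g(m)>0$, both inequalities yield the same limit. The lim inf and lim sup of $(m_n-m)/t_n$ are therefore both squeezed to $-h_G(m)/g(m)$.

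Two consequences follow. First, Step 2 also shows $|m_n-m|=O(t_n)$, which is needed before dividing. Second, the limit map $h_G\mapsto -h_G(m)/g(m)$ is evaluation at a point scaled by a constant, so it is continuous and linear on $C[0,\tau]$. This completes Hadamard differentiability, and setting $p=1/2$ gives the median case.

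The main obstacle is Step 2. The perturbed function $G_n$ is not continuous and may not be monotone, so $G_n(m_n)=p$ can fail. The upper and lower bounds must instead come from the two one-sided inequalities implied by the infimum definition. This is also where the tangential restriction to continuous $h_G$ is used: it makes $h_n(m_n)$ and $h_n(m_n-\epsilon_n)$ converge to the same value $h_G(m)$.
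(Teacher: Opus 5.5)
Your proposal is correct and takes the same route as the paper. The paper's entire proof is an appeal to Lemma~3.10.24 of \cite{vdVW2023}, the standard quantile/inverse-map differentiability result you also cite. Your two-sided infimum sandwich with $\epsilon_n=o(t_n)$ is the textbook proof of that lemma, specialized to $C^1$ $G$. Two cosmetic points: the inequality $G_n(m_n)\ge p$ uses right-continuity of $h_n\in D[0,\tau]$, and because the mean value theorem gives the exact factor $g(\xi_n)\to g(m)>0$, you do not need the a priori rate $|m_n-m|=O(t_n)$ before dividing by $t_n$.
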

\begin{proof}
    Directly by Lemma~3.10.24 in \cite{vdVW2023}.
\end{proof}

\begin{lemma} \label{lemma4}
    Let $m:=G^{-1}(1/2)$ be unique, and assume $G$ is continuous at $m$ and
continuously differentiable in a neighborhood of $m$ with $g(m)>0$.
Define
\[
A(F,G):=\int_{[0,m]} F^{\pm}(t)\,dG(t).
\]
Then $A$ is Hadamard differentiable at $(F,G)$ tangentially to
$D[0,\tau]\times C[0,\tau]$, with derivative
\[
\dot A_{(F,G)}(h_F,h_G)
=
\int_{[0,m]} h_F^{\pm}(t)\,dG(t)
\;+\;
\int_{[0,m]} F^{\pm}(t)\,dh_G(t)
\;-\;
F^{\pm}(m)\,h_G(m).
\]
\end{lemma}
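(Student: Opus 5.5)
The plan is to view $A$ as a composition of two maps. The first is $(F,G)\mapsto(F,G,Q(G))$, with $Q(G)=G^{-1}(1/2)$ the median map of Lemma~\ref{lemma3}. The second is the integral-with-variable-endpoint map
\[
\Psi(F,G,s):=\int_{[0,s]}F^{\pm}(t)\,dG(t),
\]
evaluated at $s=m$. Once $\Psi$ is shown to be Hadamard differentiable at $(F,G,m)$, the chain rule (Lemma~3.10.3 in \cite{vdVW2023}) gives the derivative of $A$ as
\[
\dot\Psi_{(F,G,m)}\bigl(h_F,h_G,\dot Q_G(h_G)\bigr).
\]
Here $\dot Q_G(h_G)=-h_G(m)/g(m)$.

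For the partial derivative in $(F,G)$ at a fixed endpoint $m$, I apply Lemma~\ref{lemma1} on $[0,m]$ instead of $[0,\tau]$; equivalently, I use Lemma~3.10.18 of \cite{vdVW2023} with integrand $F^{\pm}\mathbf 1_{[0,m]}$. This gives
\[
\int_{[0,m]}h_F^{\pm}\,dG+\int_{[0,m]}F^{\pm}\,dh_G ,
\]
where the second integral is defined through integration by parts. Its boundary term is $F^{\pm}(m)h_G(m)$, because $h_G\in C[0,\tau]$ and $h_G(0)=0$.

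For the endpoint direction, $G$ has density $g$ near $m$ and $F$ is continuous at $m$. Hence $s\mapsto\Psi(F,G,s)$ is differentiable at $m$ with derivative $F^{\pm}(m)g(m)=F(m)g(m)$. The endpoint perturbation therefore contributes
\[
F^{\pm}(m)g(m)\cdot\bigl(-h_G(m)/g(m)\bigr)=-F^{\pm}(m)h_G(m).
\]
This is the final term of the stated formula, and it cancels the boundary term above. Summing the two contributions yields the claimed derivative.

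The main obstacle is the joint Hadamard differentiability of $\Psi$: the integrand and the endpoint are perturbed simultaneously. I would verify it directly. Take $t_n\to0$, $h_{F,n}\to h_F$ uniformly, $h_{G,n}\to h_G$ uniformly with $h_G$ continuous, and $s_n=m+t_n\delta_n$ with $\delta_n\to\delta$. Write $F_n=F+t_nh_{F,n}$ and $G_n=G+t_nh_{G,n}$, and split the difference quotient as
\[
\frac{\Psi(F_n,G_n,s_n)-\Psi(F,G,m)}{t_n}
=\frac{\Psi(F_n,G_n,m)-\Psi(F,G,m)}{t_n}
+\frac{1}{t_n}\int_{(m,s_n]}F_n^{\pm}\,dG_n .
\]
The first piece converges by Lemma~\ref{lemma1}. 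For the second piece, I decompose $dG_n=dG+t_n\,dh_{G,n}$.
\begin{itemize}
\item The $dG$ part equals $\int_{(m,s_n]}F_n^{\pm}g\,dt$. This is $\bigl(F(m)+o(1)\bigr)g(m)\,t_n\delta_n$, using continuity of $F$ at $m$, continuity of $g$, and $\|F_n-F\|_\infty\to0$.
\item The $t_n\,dh_{G,n}$ part is handled by integration by parts on the shrinking interval $(m,s_n]$. After dividing by $t_n$, it reduces to an increment of $h_{G,n}$ over $(m,s_n]$ plus a remainder of order $\|h_{G,n}\|_\infty\,\mathrm{TV}_{(m,s_n]}(F_n^{\pm})$.
\end{itemize}
The delicate point is that $h_{G,n}$ need not be of bounded variation, so the second part must be controlled this way rather than bounded directly. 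Its increment term vanishes by uniform convergence $h_{G,n}\to h_G$ together with continuity of $h_G$ at $m$. For the remainder, the variation of $F$ over the interval vanishes because $F$ is continuous at $m$. The perturbation $t_nh_{F,n}$ of $F_n^{\pm}$ has no bounded-variation control, so this is the technically hard step. I would first try to absorb it by restricting perturbations to those keeping $F_n$ a distribution function: then $F_n^{\pm}$ is monotone and its variation over $(m,s_n]$ is just its increment there, which tends to $0$.
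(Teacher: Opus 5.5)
Your proposal is correct. It uses the same basic split as the paper: a fixed-endpoint integral over $[0,m]$, handled by Lemma~\ref{lemma1}, plus a boundary integral between $m$ and the perturbed median. The difference is in how you linearize the boundary term.

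\textbf{The paper's route.} The paper never parametrizes the endpoint. It uses $G_n(m_n)=1/2=G(m)$ to obtain the $G_n$-mass of the boundary interval directly, $G_n(m_n)-G_n(m)=-t_nh_{G,n}(m)$. It then needs only $m_n\to m$, continuity of $F^{\pm}$ at $m$, and monotonicity of $G_n$ to replace $F_n^{\pm}$ by $F^{\pm}(m)$. The expansion of $m_n$ from Lemma~\ref{lemma3} is quoted but not really used, so $g(m)$ never enters. No integral against $dh_{G,n}$ over a shrinking interval has to be controlled.

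\textbf{Your route.} You go through the chain rule, using the free-endpoint functional $\Psi(F,G,s)$ and the quantile derivative $-h_G(m)/g(m)$. This makes the cancellation of $g(m)$ explicit. The price is that you must bound $\int_{(m,s_n]}F_n^{\pm}\,dh_{G,n}$ by integration by parts and monotonicity of $F_n$. Your sketch of that step is sound. The increment term vanishes because $h_{G,n}\to h_G$ uniformly with $h_G$ continuous, and the remainder vanishes because $F_n(s_n)-F_n(m)\to 0$.

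\textbf{What each buys.} The paper's route is shorter and asks less of $G$ near $m$; positivity of $g(m)$ plays no role. However, its identity $G_n(m_n)=1/2$ holds exactly only when $G_n$ has no jump at $m_n$. For Kaplan--Meier-type perturbations the discrepancy is $o(t_n)$, but this has to be argued. Your route avoids that issue because Lemma~\ref{lemma3} handles step functions internally. It is also more modular: differentiability of $\Psi$ in $(F,G,s)$, combined with a derivative for the cut point, carries over directly to other quantile splits.

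\textbf{A hypothesis both proofs rely on.} Both arguments use continuity of $F$ at $m$. This is not among the lemma's stated hypotheses; it appears only in Assumption~\ref{ass:regularity}(iii). It is genuinely needed. If $F$ jumps at $m$, the boundary contribution is $-F(m)h_G(m)$ or $-F(m-)h_G(m)$ depending on the sign of $h_G(m)$, and that is not linear in $h_G$.
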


\begin{proof}
Let $t_n\downarrow 0$, $(h_{F,n},h_{G,n})\to(h_F,h_G)$ uniformly, and set
$F_n:=F+t_n h_{F,n}$, $G_n:=G+t_n h_{G,n}$, assuming $(F_n,G_n)$ remain
distribution functions on $[0,\tau]$.
Let $m_n:=G_n^{-1}(1/2)$. By Lemma \ref{lemma3}, $m_n\to m$ and
\[
m_n = m - t_n\frac{h_G(m)}{g(m)} + o(t_n).
\]

Decompose
\[
A(F_n,G_n)-A(F,G)
=
\Bigl[\int_{[0,m]} F_n^{\pm}\,dG_n - \int_{[0,m]} F^{\pm}\,dG\Bigr]
+
\Bigl[\int_{(m,m_n]} F_n^{\pm}\,dG_n\Bigr],
\]
with the second bracket interpreted as $-\int_{(m_n,m]}F_n^{\pm}\,dG_n$ if $m_n<m$.

For the first bracket, apply Lemma~\ref{lemma1} on the fixed interval $[0,m]$:
\[
\int_{[0,m]} F_n^{\pm}\,dG_n - \int_{[0,m]} F^{\pm}\,dG
=
t_n\Bigl(
\int_{[0,m]} h_F^{\pm}\,dG + \int_{[0,m]} F^{\pm}\,dh_G
\Bigr) + o(t_n).
\]

It remains to linearize the moving-boundary term
$R_n:=\int_{(m,m_n]}F_n^{\pm}\,dG_n$.
Fix $\varepsilon>0$. By the continuity of $F^{\pm}$ in $m$, choose $\delta>0$ such that
$|F^{\pm}(t)-F^{\pm}(m)|\le \varepsilon$ for $t\in[m-\delta,m+\delta]$. 
For $n$ large, $m_n\in[m-\delta,m+\delta]$, and
\[
R_n
=
F^{\pm}(m)\bigl(G_n(m_n)-G_n(m)\bigr)
+
r_n,
\qquad |r_n|\le \varepsilon\,|G_n(m_n)-G_n(m)|.
\]

Since $G$ is continuous at $m$, $G(m)=1/2$.
By definition of $m_n$, $G_n(m_n)=1/2$, hence
\[
G_n(m_n)-G_n(m)=\frac12 - \bigl(G(m)+t_n h_{G,n}(m)\bigr)
= -t_n h_{G,n}(m),
\]
and therefore
\[
R_n
=
-t_n F^{\pm}(m)\,h_{G, n}(m) + o(t_n).
\]
Since \(h_{G,n}(m) \to h_G(m)\), the sequence
\(\{h_{G,n}(m)\}_{n \in \mathbb{N}}\) is bounded. Hence, for every
\(\varepsilon > 0\),
\[
\limsup_{n \to \infty}
\frac{|r_n|}{t_n}
\leq
\varepsilon
\limsup_{n \to \infty}
|h_{G,n}(m)|.
\]
Since \(\varepsilon > 0\) is arbitrary, it follows that
\(r_n = o(t_n)\). Together with \(h_{G,n}(m) \to h_G(m)\), this yields
\[
R_n
=
-t_n F^{\pm}(m) h_G(m) + o(t_n),
\qquad n \to \infty.
\]
because $h_{G,n}(m)\to h_G(m)$ and $|r_n|/t_n\le \varepsilon\,|h_{G,n}(m)|$.
Letting $\varepsilon\downarrow 0$ gives the stated boundary contribution
$-F^{\pm}(m)h_G(m)$.

Combining the two parts yields the derivative formula.
\end{proof}

\begin{lemma} \label{lemma5}
    Under the assumptions of Lemma \ref{lemma4},
let
%\[
%\Theta(F,G):=\int_{[0,\tau]} F^{\pm}(t)\,dG(t),
%\qquad
%A(F,G):=\int_{[0,m]} F^{\pm}(t)\,dG(t).
%\]
%Then
\[
O_\tau(F,G)
:=
2\Theta_\tau(F,G)-4A(F,G).
\]
Then $O_\tau$ is Hadamard differentiable with derivative
\[
\dot O_{\tau,(F,G)}
=
2\dot\Theta_{\tau,(F,G)}
-
4\dot A_{(F,G)}.
\]
where $\dot\Theta$ is the derivative from Lemma \ref{lemma1} and $\dot A$ is given in Lemma~\ref{lemma4}
\end{lemma}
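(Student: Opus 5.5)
Since $O_\tau=2\Theta_\tau-4A$ is a fixed linear combination of two maps already shown to be Hadamard differentiable, the plan is simply to invoke linearity of the Hadamard derivative. Concretely, I take the common tangent set $\mathbb D_0:=D[0,\tau]\times C[0,\tau]$. By Lemma~\ref{lemma2}, $\Theta_\tau$ is Hadamard differentiable at $(F,G)$ tangentially to $\mathbb D_0$ with derivative $\dot\Theta_{\tau,(F,G)}$. By Lemma~\ref{lemma4}, $A$ is Hadamard differentiable at $(F,G)$ tangentially to the same set with derivative $\dot A_{(F,G)}$; this uses the hypotheses on $m$, i.e.\ uniqueness, $g(m)>0$ and continuity of $F$ at $m$. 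Note that $m<\tau$, so $A$ is also a functional of the restrictions of $F,G$ to $[0,\tau]$.

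Next, fix $t_n\downarrow0$ and $(h_{F,n},h_{G,n})\to(h_F,h_G)\in\mathbb D_0$ uniformly, with $(F+t_nh_{F,n},G+t_nh_{G,n})$ in the domain of both maps. Then
\[
\frac{O_\tau(F_n,G_n)-O_\tau(F,G)}{t_n}
=2\,\frac{\Theta_\tau(F_n,G_n)-\Theta_\tau(F,G)}{t_n}-4\,\frac{A(F_n,G_n)-A(F,G)}{t_n},
\]
and each difference quotient converges to the corresponding derivative. Linearity and continuity of $2\dot\Theta_{\tau,(F,G)}-4\dot A_{(F,G)}$ as a map on $\mathbb D_0$ follow because both summands are continuous linear maps. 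Each is a sum of integrals against bounded-variation functions, defined through the integration-by-parts convention introduced above, together with point evaluations at $\tau$ and $m$, and these evaluations are continuous in sup-norm. This gives the claim, with $\dot\Theta$ taken from Lemma~\ref{lemma2} (the statement's reference to Lemma~\ref{lemma1} should be read that way).

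The only point requiring care is the domain. The perturbed pairs must lie simultaneously in the domains of $\Theta_\tau$ and $A$, so that $G_n$ still admits a median $m_n$. I would take the intersection of the two domains, namely distribution functions on $[0,\tau]$ whose generalized inverse at $1/2$ is defined. Lemma~\ref{lemma3} guarantees that $m_n\to m$ for perturbations tangential to $C[0,\tau]$, so the restriction costs nothing.

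I would also remark, for use in Theorem~\ref{thm:asym}, why this $O_\tau$ is the right object. When $G$ is continuous at $m$ with $G(m)=1/2$, the identity $O=\theta_>-\theta_\le=2(\Theta-A)-2A$ holds, so $O_\tau$ reproduces the localized temporal contrast.
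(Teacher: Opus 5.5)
Your proposal is correct and follows the paper's argument: the paper also obtains $O=2\Theta-4A$ from $G(m)=1/2$ and concludes by linearity of Hadamard derivatives, combining Lemmas~\ref{lemma2} and~\ref{lemma4}. You are right to read the reference to Lemma~\ref{lemma1} as Lemma~\ref{lemma2}, since $\Theta_\tau$ contains the tail correction term.
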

\begin{proof}
    Since $G(m)=1/2$,
\[
\theta^{\le}(F,G)=\frac{A(F,G)}{G(m)}=2A(F,G),\qquad
\theta^{>}(F,G)=\frac{\Theta(F,G)-A(F,G)}{1-G(m)}=2(\Theta(F,G)-A(F,G)),
\]
hence $O=\theta^>-\theta^{\le}=2\Theta-4A$.
Hadamard differentiability follows because sums and scalar multiples of Hadamard
differentiable maps are Hadamard differentiable, and the derivative is the
corresponding linear combination of derivatives.
\end{proof}
For every fixed $m<\tau<\tau_0$, define the localized effect vector
\[
\boldsymbol\psi_\tau(F,G)
:=
\begin{pmatrix}
\Theta_\tau(F,G)\\
O_\tau(F,G)
\end{pmatrix}.
\]

For fixed $\tau<\tau_0$, let
\[
\widehat{\boldsymbol\psi}_{N,\tau}
:=
\boldsymbol\psi_\tau(\widehat F,\widehat G).
\]
The estimator used in the main paper is denoted by
$\widehat{\boldsymbol\psi}_N$ and is defined in Section~\ref{section2.3}.

\subsection{Proof of Theorem \ref{thm:asym}} \label{proofthm1}
\begin{proof}

Fix $m<\tau<\tau_0$.
By Lemmas  \ref{lemma2} and \ref{lemma5}, the localized map
$\boldsymbol\psi_\tau$ is Hadamard differentiable at $(F,G)$,
tangentially to $\mathbb H$, with continuous linear derivative
$\dot{\boldsymbol\psi}_{\tau,(F,G)}$.

By \eqref{KM_CLT},
\[
\sqrt N\binom{\widehat F-F}{\widehat G-G}\rightsquigarrow \mathbb Z
\quad\text{in }D[0,\tau]^2.
\]

Since the limit paths of $\mathbb Z$ lie in $\mathbb H$ almost surely, the functional delta method yields
\[
\sqrt N
\left\{
\widehat{\boldsymbol\psi}_{N,\tau}
-
\boldsymbol\psi_\tau(F,G)
\right\}
\rightsquigarrow
\dot{\boldsymbol\psi}_{\tau,(F,G)}(\boldsymbol Z).
\]
Because $\mathbb Z$ is mean-zero Gaussian and $\dot{\boldsymbol\psi}_{\tau,(F,G)}$ is continuous and linear,
\[
\dot{\boldsymbol\psi}_{\tau,(F,G)}(\boldsymbol Z)
\sim
\mathcal N_2
\left(
\boldsymbol 0,
\boldsymbol\Sigma_\tau(F,G)
\right).
\]

By Assumption~\ref{ass:tail-localization},
\[
\sqrt N
\left\|
\left\{
\widehat{\boldsymbol\psi}_N
-
\boldsymbol\psi(F,G)
\right\}
-
\left\{
\widehat{\boldsymbol\psi}_{N,\tau}
-
\boldsymbol\psi_\tau(F,G)
\right\}
\right\|
\]
is asymptotically negligible as first $N\to\infty$ and subsequently
$\tau\uparrow\tau_0$. Moreover,
\[
\boldsymbol\Sigma_\tau(F,G)
\longrightarrow
\boldsymbol\Sigma(F,G).
\]
The converging-together theorem therefore yields
\[
\sqrt N
\left\{
\widehat{\boldsymbol\psi}_N
-
\boldsymbol\psi(F,G)
\right\}
\rightsquigarrow
\mathcal N_2
\left(
\boldsymbol 0,
\boldsymbol\Sigma(F,G)
\right).
\]
The limiting covariance matrix is finite and positive semidefinite.
% The version for the pooled sample is shown in the exact same way by replacing $F$ and $G$ with $H$ and $\hat{F}$ and $\hat{G}$ with $\hat{H}_1$ and $\hat{H}_2$. The corresponding covariance matrix is $\Sigma(H):=\mathrm{Cov}\bigl(\dot\psi_{(H,H)}(\mathbb Z_0)\bigr)$.
This proves Theorem~\ref{thm:asym}.
\end{proof}
For the stratified bootstrap, define
\[
\widehat{\boldsymbol\psi}_{N,\tau}^{*,\mathrm{str}}
:=
\boldsymbol\psi_\tau
\left(
\widehat F^*,
\widehat G^*
\right).
\]

Under the null hypothesis, let $\widehat H_N$ denote the
Kaplan--Meier estimator based on the pooled sample, and define
\[
\widehat{\boldsymbol\psi}_{N,\tau}^{*,\mathrm{pool}}
:=
\boldsymbol\psi_\tau
\left(
\widehat H_1^*,
\widehat H_2^*
\right),
\]
where $\widehat H_1^*$ and $\widehat H_2^*$ are the Kaplan--Meier
estimators obtained from the two independent pooled-bootstrap
samples.

\begin{assumption}[Bootstrap regularity]
\label{ass:bootstrap}
Suppose that Assumption~\ref{ass:regularity} holds.

\begin{enumerate}[label=(\roman*)]
\item
For every finite \(m<\tau<\tau_0\), the stratified bootstrap
Kaplan--Meier processes satisfy, conditionally on the data,
\[
\sqrt{n_1}(\widehat F^*-\widehat F)
\rightsquigarrow_{\mathbb P}\mathbb G_F,
\qquad
\sqrt{n_2}(\widehat G^*-\widehat G)
\rightsquigarrow_{\mathbb P}\mathbb G_G,
\]
with conditional independence across groups.

\item \label{ass:bootstrap-tail}
For every \(\varepsilon,\eta>0\), the bootstrap tail-localization condition holds
\[
\lim_{\tau\uparrow\tau_0}
\limsup_{N\to\infty}
\Pr\left[
\Pr^*\left\{
\sqrt N
\left\|
(\widehat\psi_N^{*,\mathrm{str}}-\widehat\psi_N)
-
(\widehat\psi_{N,\tau}^{*,\mathrm{str}}
 -\widehat\psi_{N,\tau})
\right\|>\varepsilon
\right\}>\eta
\right]
=0.
\]

\item
Under \(H_0:F=G=:H\), suppose additionally that the observed censored
pairs are exchangeable across groups. A sufficient condition is equality
of the censoring distributions, \(K=J\). Assume that the corresponding
pooled-bootstrap Kaplan--Meier processes satisfy the conditional
functional central limit theorem.

\item \label{ass:bootstrap-tail2}
The pooled-bootstrap tail-localization condition holds, and
\[
\Sigma_{0,\tau}\longrightarrow\Sigma_0
\qquad\text{as }\tau\uparrow\tau_0.
\]
\end{enumerate}
\end{assumption}

\subsection{Proof of Theorem \ref{thm_boot}}  \label{proofthm2}

Let $(\widehat F^*,\widehat G^*)$ be the KM estimators computed from the stratified bootstrap samples,
and set $\widehat\psi_{N}^*:=\psi(\widehat F^*,\widehat G^*)$.

By \eqref{KM_conditional_CLT},
\[
\sqrt N
\begin{pmatrix}
\widehat F^*-\widehat F\\
\widehat G^*-\widehat G
\end{pmatrix}
\rightsquigarrow_{\mathbb P}\mathbb Z
\quad\text{in }D[0,\tau]^2.
\]

By the bootstrap functional delta method applied to the localized
map $\boldsymbol\psi_\tau$,
\[
\sqrt N
\left\{
\widehat\psi_{N,\tau}^{*,\mathrm{str}}
-\widehat\psi_{N,\tau}
\right\}
\rightsquigarrow_{\mathbb P}
\dot\psi_{\tau,(F,G)}(\mathbb Z)
\sim N_2(0,\Sigma_\tau(F,G)).
\]
By Assumption~\ref{ass:bootstrap-tail}, the difference between the
full and localized centered bootstrap statistics is conditionally
negligible as first $N\to\infty$ and subsequently
$\tau\uparrow\tau_0$. Since
\[
\boldsymbol\Sigma_\tau(F,G)
\longrightarrow
\boldsymbol\Sigma(F,G),
\]
the conditional converging-together theorem yields
\[
\sqrt N
\left\{
\widehat\psi_N^{*,\mathrm{str}}-\widehat\psi_N
\right\}
\rightsquigarrow_{\mathbb P}
N_2(0,\Sigma(F,G)).
\]

 Now assume $H_0:F=G=:H$ and equality of the censoring
distributions. The pooled observations are then exchangeable.
Conditionally on the data,
\[
\sqrt N
\begin{pmatrix}
\widehat H_1^*-\widehat H_N\\
\widehat H_2^*-\widehat H_N
\end{pmatrix}
\rightsquigarrow_{\mathbb P}
\mathbb Z_{0,\tau}
\quad\text{in }D[0,\tau]^2.
\]
Therefore,
\[
\sqrt N
\left[
\psi_\tau(\widehat H_1^*,\widehat H_2^*)
-
\psi_\tau(\widehat H_N,\widehat H_N)
\right]
\rightsquigarrow_{\mathbb P}
\dot\psi_{\tau,(H,H)}(\mathbb Z_{0,\tau})
\sim N_2(0,\Sigma_{0,\tau}).
\]
Assumption~\ref{ass:bootstrap-tail2},  together with
$\boldsymbol\Sigma_{0,\tau}\to\boldsymbol\Sigma_0$, now gives
\[
\sqrt N
\left\{
\widehat\psi_N^{*,\mathrm{pool}}-\psi_0
\right\}
\rightsquigarrow_{\mathbb P}
N_2(0,\Sigma_0).
\]
 This proves Theorem~\ref{thm_boot}.
\subsection{Proof of Remark \ref{remark3}} \label{proofremark3}
\begin{proof}
Since $S_0$ is continuous, both $F$ and $G$ are continuous, so that
$F^\pm=F$. Let $U\sim G$ and define
\[
a:=\frac{\lambda_1}{\lambda_2}=\frac{1}{\eta}.
\]
Since $G$ is continuous, $V:=G(U)\sim\mathrm{Unif}(0,1)$. Moreover,
\[
F(U)
 =1-S_0(U)^{\lambda_1}
 =1-\left\{S_0(U)^{\lambda_2}\right\}^{\lambda_1/\lambda_2}
 =1-\{1-G(U)\}^{a}
 =1-(1-V)^a.
\]
Consequently,
\[
\theta(F,G)
 =\mathbb{E}\{F(U)\}
 =\int_0^1\{1-(1-v)^a\}\,dv
 =\frac{a}{a+1}
 =\frac{\lambda_1}{\lambda_1+\lambda_2}
 =\frac{1}{1+\eta}.
\]

Let $m$ be a median of $G$, so that $G(m)=1/2$. Up to a null set,
$\{U\leq m\}=\{V\leq 1/2\}$. Hence
\[
\theta_{\leq}(F,G)
 =2\int_0^{1/2}\{1-(1-v)^a\}\,dv
 =1-\frac{2}{a+1}\left\{1-2^{-(a+1)}\right\},
\]
whereas
\[
\theta_{>}(F,G)
 =2\int_{1/2}^1\{1-(1-v)^a\}\,dv
 =1-\frac{2^{-a}}{a+1}.
\]
Therefore,
\[
O(F,G)
 =\theta_{>}(F,G)-\theta_{\leq}(F,G)
 =\frac{2}{a+1}\left(1-2^{-a}\right).
\]
Substituting $a=1/\eta$ gives
\[
O(F,G)
 =\frac{2\eta}{1+\eta}\left(1-2^{-1/\eta}\right).
\]
\end{proof}
\section{Additional Remarks}  \label{remarks}
\begin{remark}
  The median split used in Definition \ref{definition2} is not essential for the construction. More generally, for a fixed
\(p\in(0,1)\), let
\(
    q_p := G^{-1}(p)
\)
denote the \(p\)-quantile of the event-time distribution in the second group. We may then define
$ %\[
    \theta_{\le p}(F,G)
    :=
    E\{F^\pm(U)\mid U\le q_p\}, 
    %\qquad
    \theta_{>p}(F,G)
    :=
    E\{F^\pm(U)\mid U>q_p\}, 
    %\qquad 
    (\text{where } U\sim G).
$ %\]
This yields the quantile-split temporal contrast
\[
    O_p(F,G)
    :=
    \theta_{>p}(F,G)-\theta_{\le p}(F,G).
\]
The median-split contrast used throughout this paper corresponds to the special case
\(p=1/2\).

The same asymptotic arguments apply for any fixed \(p\in(0,1)\), provided that \(q_p\) is unique,
lies in the identifiable follow-up region, and \(G\) is sufficiently regular in a neighborhood of
\(q_p\), for example with positive density at \(q_p\). In practice, however, the choice of \(p\) should
be made a priori. Values close to 0 or 1 lead to highly unbalanced conditional regions and may inflate
the variance because one of the two denominators, \(p\) or \(1-p\), becomes small. Under right censoring,
additional care is required: the empirical quantile \(q_p\) should be estimable from the Kaplan--Meier
curve, and there should be enough uncensored event times on both sides of the split. In particular,
higher censoring in the tail makes large values of \(p\) less stable or even non-estimable. The median
therefore provides a natural default, since it yields a balanced early/late comparison and is usually
more stable than more extreme quantile splits. Other fixed choices of \(p\) may nevertheless be useful
when there is prior clinical or biological motivation to focus on an earlier or later part of the
follow-up period.
\end{remark}
\begin{remark}
    It is shown in Theorem 2.2 in~\cite{dey2025inference}, that, in the case of $F$ and $G$ both coming from the same scale family of distributions, that is
    $ %\begin{align*}
        %&
        F(t) = H\left( %\frac{t}{\lambda_1} 
        t /\lambda_1
        \right), %\\
        %&
        G(t) = H\left( %\frac{t}{\lambda_2} 
        t/\lambda_2
        \right),
    $ %\end{align*}
    for some cdf $H$, then $O$ is merely a function of $\lambda_1/\lambda_2
    %\frac{\lambda_1}{\lambda_2}
    $. 
    Their proof can be easily extended to show that this is also true for $\theta$ and therefore in this case the whole vector $(\theta, O)^{\top}$ is solely a function of $%\frac{\lambda_1}{\lambda_2}
    \lambda_1/\lambda_2
    $.
\end{remark}
\section{Possible Pairs}
\subsection{Joint Region of $\theta(F,G)$ and $O(F,G)$ }
\begin{remark}
\label{rem:joint_region1}
    While both $\theta(F, G)$ and $O(F, G)$ are elements of $[0, 1]$, not all value pairs in $[0, 1]^2$ are attainable. As is proven in Theorem 1 in~\cite{Beck2023Combining} the attainable region is the interior of the triangle $\{(\theta, O) | \theta \in [0, 1],\, O \in [0,\min{2 \theta, 2 - 2 \theta} ]\}$
\end{remark}
\subsection{Joint Region of $\theta(F,G)$ and $O(F,G)+
O(G,F)$ }
\begin{remark}
\label{rem:joint_region2}
    For $\theta$ we have $\theta(F, G) + \theta(G, F) = 1$. It is shown in Lemma 1 in~\cite{parkinson2022testing}, that for continuous $F$ and $G$, in the case of $F^{-1}(1/2) = G^{-1}(1/2)$ we have $O(F, G) + O(G, F) = 1$. In general however we can only guarantee $O(F, G) + O(G, F) \leq 1$. More precisely we have the following
\end{remark}

\begin{proposition}[A lower bound for $O(F,G)+O(G,F)$ in terms of $\theta$]
Let $X\sim F$ and $Y\sim G$ be independent with continuous distribution functions and let
\begin{equation*}
m_F := F^{-1}(1/2), \qquad m_G := G^{-1}(1/2)
\end{equation*}
denote the medians of $F$ and $G$, respectively. Then
\begin{equation*}
\frac{4}{3}\min\{\theta,\,1-\theta\} \leq O(F,G) + O(G,F) \leq  \min\{4 \theta, 1, 4 - 4 \theta\}.
\end{equation*}
More precisely, as a lower bound we have
\begin{equation*}
O(F,G)+O(G,F) \ge
\begin{cases}
\dfrac{4}{3}\theta, & m_F>m_G,\\[2mm]
1, & m_F=m_G,\\[2mm]
\dfrac{4}{3}(1-\theta), & m_F<m_G.
\end{cases}
\end{equation*}
\end{proposition}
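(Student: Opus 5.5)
We reduce both sums to two "lower-half" probabilities, handle the upper bound and the equal-median case by short set inclusions, and treat the $4/3$ lower bound as a finite-dimensional optimisation. By continuity, $F^{\pm}=F$ and $G^{\pm}=G$, and $G(m_G)=F(m_F)=1/2$. Definition~\ref{definition2} then gives, exactly as in the proof of Lemma~\ref{lemma5},
\[
O(F,G)=2\theta-4A,\qquad O(G,F)=2(1-\theta)-4B,
\]
where $\theta=\theta(F,G)=\Pr(X<Y)$ and
\[
A:=\int_{[0,m_G]}F\,dG=\Pr(X<Y\le m_G),\qquad B:=\int_{[0,m_F]}G\,dF=\Pr(Y<X\le m_F).
\]
Hence $O(F,G)+O(G,F)=2-4(A+B)$. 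Every bound in the proposition is therefore a bound on $A+B$, and we obtain these by elementary set inclusions for the independent pair $(X,Y)$.

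\emph{Upper bound.} We first show $A+B\ge 1/4$. On $\{X\le m_F,\,Y\le m_G\}$, ties have probability zero. If $X<Y$, the event $\{X<Y\le m_G\}$ occurs; if $Y<X$, the event $\{Y<X\le m_F\}$ occurs. Hence $A+B\ge F(m_F)G(m_G)=1/4$, which gives the bound $1$. Next, $B\ge \Pr(Y<X)-\Pr(X>m_F)=(1-\theta)-1/2$, so $2-4(A+B)\le 2-4B\le 4\theta$. Symmetrically, $A\ge \theta-1/2$ gives the bound $4-4\theta$.

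\emph{Equal medians.} If $m_F=m_G=:m$, the two events defining $A$ and $B$ partition $\{X\le m,\,Y\le m\}$ up to ties. Hence $A+B=1/4$ exactly and the sum equals $1$, recovering Lemma~1 of~\cite{parkinson2022testing}.

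\emph{Lower bound, $m_G<m_F$.} By the symmetry $(F,G,\theta)\leftrightarrow(G,F,1-\theta)$ the case $m_F<m_G$ follows, so assume $m_G<m_F$. We must show $A+B\le \tfrac12-\tfrac{\theta}{3}$. We split the line at $m_G<m_F$ into $I_1=(-\infty,m_G]$, $I_2=(m_G,m_F]$ and $I_3=(m_F,\infty)$. The masses are $F(I_1)=u$, $F(I_2)=1/2-u$, $F(I_3)=1/2$ with $u\in[0,1/2]$, and $G(I_1)=1/2$, $G(I_2)=v$, $G(I_3)=1/2-v$. Then
\[
A=\Pr(X<Y,\;X,Y\in I_1),
\]
and $B$ collects $\Pr(Y<X,\;X,Y\in I_1)$, $\Pr(Y<X,\;X,Y\in I_2)$ and the cross term $\Pr(Y\in I_1)\Pr(X\in I_2)=\tfrac12(\tfrac12-u)$. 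Likewise $\theta$ decomposes into within-block ordering probabilities plus cross-block products, which are explicit in $u$ and $v$. Within a block, the within-block ordering probabilities are free subject to summing to the block's product mass. So $A+B+\theta/3$ becomes a linear function of these free within-block parameters, maximised over a polytope in $(u,v,\ldots)$. We then check that its maximum is $1/2$. The within-$I_1$ pair contributes $u/2$ to $A+B$ regardless of ordering and at most $u/2$ to $\theta$. Choosing the orderings in $I_2$ and $I_3$ extremally, the remaining terms reduce to a polynomial inequality in $u$ and $v$.

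The main obstacle is this last optimisation step: pinning down the extremal within-block orderings and verifying that the resulting two-variable inequality holds with constant exactly $4/3$. If the block decomposition is too coarse, we would refine it by conditioning on $F(Y)$, using $V=G(Y)\sim\mathrm{Unif}(0,1)$ as in the proof of Remark~\ref{remark3}, and apply a rearrangement argument to the monotone function $v\mapsto F(G^{-1}(v))$. Finally, combining the three cases yields $\tfrac43\min\{\theta,1-\theta\}$. The cases $m_F\ne m_G$ give at least this value. When $m_F=m_G$ the sum is $1$, and the upper bound $\min\{4\theta,1,4-4\theta\}$ forces $\theta\in[1/4,3/4]$, so $\tfrac43\min\{\theta,1-\theta\}\le 1$.
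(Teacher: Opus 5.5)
You correctly reduce the problem to $O(F,G)+O(G,F)=2-4(A+B)$. Your set-inclusion arguments fully prove the upper bound $\min\{4\theta,1,4-4\theta\}$ and the equal-median case. This part is more self-contained than the paper, which cites \cite{Beck2023Combining} and \cite{ParkinsonKutilKupplerJunkerTrutschnigBathke+2018} for these pieces. The genuine gap is the case $m_G<m_F$, which is the actual content of the proposition. There you set up the three-block decomposition and \emph{assert} that $A+B+\theta/3$ has maximum $1/2$. You then concede that verifying this is ``the main obstacle'' and sketch a rearrangement fallback. As written, the $4/3$ lower bound is a plan rather than a proof, so the proposition is not established.

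Your plan does close in a few lines. Put $q_k:=\Pr(X<Y\mid X,Y\in I_k)$, $a:=\tfrac12-u$ and $b:=v$. Your decomposition gives $A=q_1u/2$ and $B=(1-q_1)u/2+a/2+(1-q_2)ab$, hence $A+B=\tfrac14+(1-q_2)ab$. It also gives $\theta=q_1u/2+u/2+q_2ab+a(\tfrac12-b)+q_3(\tfrac12-b)/2\le\tfrac34-\tfrac{a+b}{2}-ab+q_2ab$, using $q_1,q_3\le 1$. Therefore $3(A+B)+\theta\le\tfrac32-\tfrac{a+b}{2}+2(1-q_2)ab\le\tfrac32$, because $a,b\le\tfrac12$ gives $2ab\le a$ and $2ab\le b$. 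This is exactly $A+B\le\tfrac12-\tfrac{\theta}{3}$.

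The paper avoids any optimisation. Let $p_{ij}:=\Pr(X^{(i)}<Y^{(j)})$, where $X^{(i)},Y^{(j)}$ are the conditional lower/upper halves of $F$ and $G$. Then $O(F,G)+O(G,F)=p_{12}-p_{21}$ and $4\theta=\sum_{i,j}p_{ij}$. If $m_F>m_G$ then $p_{21}=0$. The almost-sure orderings $Y^{(1)}\le m_G<Y^{(2)}$ and $X^{(1)}\le m_F<X^{(2)}$ give $p_{11}\le p_{12}$ and $p_{22}\le p_{12}$, so $4\theta\le 3p_{12}$ immediately.
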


\begin{proof}
The upper bound immediately follows by adding up the upper bounds for each $O(F, G)$ and $O(G, F)$ given in Remark~\ref{rem:joint_region1}.

In order to prove the lower bound Let $X^{(1)},X^{(2)},Y^{(1)},Y^{(2)}$ be independent versions of $X$ and $Y$ conditional on being smaller/larger than $m_F$/$m_G$ respectively. That is
\begin{align*}
X^{(1)} \sim F^{(1)}(t) = \begin{cases} 2F(t) & t \leq m_F\\
1 & t > m_F
\end{cases}
\quad
X^{(2)} \sim F^{(2)}(t) = \begin{cases} 0 & t \leq m_F\\
2F(t) - 1 & t > m_F
\end{cases}
\end{align*}
and analogously for $Y^{(1)}$ and $Y^{(2)}$.

Define
\begin{align*}
p_{11} := P\bigl(X^{(1)}<Y^{(1)}\bigr), \qquad
p_{12} := P\bigl(X^{(1)}<Y^{(2)}\bigr), \qquad
p_{21} := P\bigl(X^{(2)}<Y^{(1)}\bigr), \qquad
p_{22} := P\bigl(X^{(2)}<Y^{(2)}\bigr).
\end{align*}
$\theta$ and $O(F, G)$ can then be represented as
\begin{equation*}
\theta=\frac{p_{11} + p_{12} + p_{21} + p_{22}}{4},
\qquad
O(F,G)=\frac{p_{12} + p_{22} - p_{11} - p_{21}}{2}.
\end{equation*}
On the other hand, for $O(G, F)$ we obtain
\begin{align*}
O(G, F) &= \frac{1}{2} \left[ P(Y^{(1)} < X^{(2)}) + P(Y^{(2)} < X^{(2)}) - P(Y^{(1)} < X^{(1)}) - P(Y^{(2)} < X^{(1)})  \right]\\
&= \frac{1 - p_{21} + 1 - p_{22} - (1 - p_{11}) - (1 - p_{12})}{2} = \frac{p_{11} + p_{12} - p_{21} - p_{22}}{2}
\end{align*}
Therefore,
\begin{equation*}
O(F,G) + O(G,F) = p_{12} - p_{21}.
\end{equation*}

We now distinguish the three possible orders of the medians.

\medskip
\noindent
\textbf{Case 1: $m_F>m_G$.}
Since $X^{(2)}$ is supported on $(m_F,\infty)$ and $Y^{(1)}$ is supported on $(-\infty,m_G]$,
we have
\begin{equation*}
X^{(2)} > Y^{(1)} \quad \text{a.s.},
\end{equation*}
and thus
\begin{equation*}
p_{21} = P\bigl(X^{(2)} < Y^{(1)}\bigr) = 0.
\end{equation*}
Consequently,
\begin{equation*}
O(F,G) + O(G,F) = p_{12}.
\end{equation*}
Moreover, since $Y^{(2)}$ is stochastically larger than $Y^{(1)}$,
\begin{equation*}
p_{11} \leq p_{12},
\end{equation*}
and since $X^{(1)}$ is stochastically smaller than $X^{(2)}$,
\begin{equation*}
p_{22} \leq p_{12}.
\end{equation*}
Using $\theta=(p_{11} + p_{12} + p_{21} + p_{22})/4$ and $p_{21} = 0$, we get
\begin{equation*}
4\theta = p_{11} + p_{12} + p_{22} \leq 3 p_{12} = 3 \bigl(O(F,G) + O(G,F)\bigr).
\end{equation*}
Therefore,
\begin{equation*}
O(F,G) + O(G,F) \geq \frac{4}{3}\theta.
\end{equation*}

\medskip
\noindent
\textbf{Case 2: $m_F<m_G$.}
Since $X^{(1)}$ is supported on $(-\infty,m_F]$ and $Y^{(2)}$ is supported on $(m_G,\infty)$,
we have
\begin{equation*}
X^{(1)} < Y^{(2)} \quad \text{a.s.},
\end{equation*}
and hence
\begin{equation*}
p_{12} = P\bigl(X^{(1)}<Y^{(2)}\bigr) = 1.
\end{equation*}
Thus
\begin{equation*}
O(F,G) + O(G,F) = 1 - p_{21}.
\end{equation*}
Further, because $X^{(1)}$ is stochastically smaller than $X^{(2)}$,
\begin{equation*}
p_{11} \geq p_{21},
\end{equation*}
and because $Y^{(2)}$ is stochastically larger than $Y^{(1)}$,
\begin{equation*}
p_{22} \geq p_{21}.
\end{equation*}
Therefore,
\begin{equation*}
4\theta = p_{11} + p_{12} + p_{21} + p_{22} = p_{11} + 1 + p_{21} + p_{22} \geq 1 + 3p_{21}.
\end{equation*}
Hence
\begin{equation*}
p_{21} \leq \frac{4\theta - 1}{3},
\end{equation*}
and so
\begin{equation*}
O(F,G) + O(G,F) = 1 - p_{21} \geq 1 - \frac{4\theta - 1}{3}
= \frac{4}{3}(1 - \theta).
\end{equation*}

\medskip
\noindent
\textbf{Case 3: $m_F=m_G$.}
In this case, it is shown in \cite{ParkinsonKutilKupplerJunkerTrutschnigBathke+2018} that
\begin{equation*}
O(F,G) + O(G,F) = 1.
\end{equation*}

Combining the three cases proves the piecewise lower bound
\begin{equation*}
O(F,G) + O(G,F) \geq
\begin{cases}
\dfrac{4}{3}\theta, & m_F>m_G,\\[2mm]
1, & m_F=m_G,\\[2mm]
\dfrac{4}{3}(1-\theta), & m_F<m_G,
\end{cases}
\end{equation*}
and therefore, in all cases,
\begin{equation*}
O(F,G) + O(G,F) \geq \frac{4}{3}\min\{\theta,\,1-\theta\}.
\end{equation*}
This completes the proof.
\end{proof}

\begin{figure}
    \centering
    \includegraphics[width=0.8\linewidth]{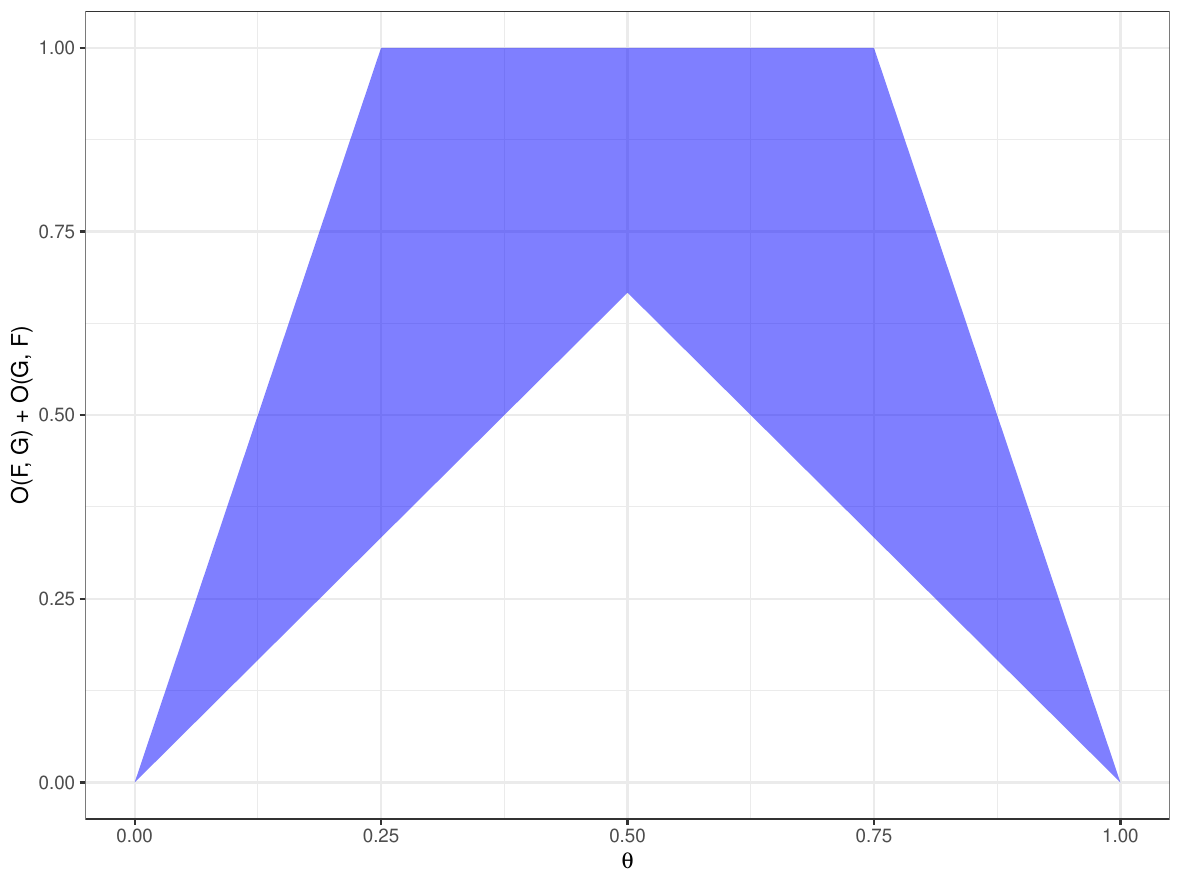}
    \caption{Illustration of the joint region of $\theta(F, G)$ and $O(F, G) + O(G, F)$. For each pair $(F, G)$ of cumulative distribution functions the pair $(\theta(F, G),\, O(F, G) + O(G, F))$ must lie within the blue shaded region.}
    \label{fig:placeholder}
\end{figure}

\section{Simulation setting} \label{sec_simu}

In order to illustrate our procedure and compare it with competitors we conducted a simulation study. Data are simulated using the R package \texttt{rsurv} (\cite{rsurv}), specifically using the function \texttt{rypreg}, which generates data from a Yang and Prentice model (\cite{yang2005semiparametric}). A survival function from this model takes the form

\begin{equation*}
    S(t| \beta, \varphi) = \left( 1 + \frac{1 - S_0(t)}{S_0(t)} e^{\beta - \varphi}  \right)^{-e^{-\varphi}}
\end{equation*}
where $S_0(t)$ is a baseline survival function and $\beta, \varphi \in \mathbb{R}$ are the model parameters. The survival curves cross at some point, when $\beta \cdot \varphi < 0$ and in the case of $\beta = \varphi$ the model simplifies to the proportional hazards model. This form of survival function will be used in some of the following scenarios.

We considered five basic scenarios
\begin{enumerate}
    \item The \emph{null} scenario, where both samples are drawn from the same distribution. We used three different distributions: An exponential distribution with rate $1$ ($\mathcal{E}(1)$), a Weibull distribution with shape parameter $2$ and scale parameter $1$ ($Weibull(2, 1)$) and a gamma distribution with shape parameter $0.5$ and scale parameter $1$ ($Gamma(0.5, 1)$).
    \item The \emph{proportional hazards} scenario, where the first sample is drawn from $\mathcal{E}(1)$ and the other from $\mathcal{E}(1.5), \mathcal{E}(2)$ and $\mathcal{E}(3)$ respectively.
    \item A \emph{crossing-in-the-middle} scenario from the YP model, where the first sample is drawn according to $S_0(t)$ which is the survival function of a $Weibull(2, 1)$ distribution, and the other according to $S(t | 1, -1)$.
    \item A \emph{crossing-at-the-end} scenario from the YP model, where the first sample is drawn according to $S_0(t)$ which is the survival function of a $Gamma(0.5, 1)$ distribution, and the other according to $S(t | -1, 1)$.
    \item \emph{Diffuse transition alternative (DTA)}. Here, the control group follows an exponential survival model with constant hazard $h_G(t)=1$.
    The treatment hazard is piecewise constant, with \[ h_F(t)=c_j \quad \text{for } t_{j-1}<t\le t_j, \] where \[ (t_j)_{j=0}^7=(0,0.5,1.0,1.6,2.2,3.0,4.0,\infty) \] and \[ (c_j)_{j=1}^7=(0.50,0.65,0.85,1.05,1.25,1.55,1.85). \]
Thus, treatment is beneficial early, approximately neutral during an
intermediate period, and harmful later. In contrast to a sharp crossing-hazard
alternative, the treatment effect changes gradually over time. This setting is
intended to represent a diffuse temporal transition from early benefit to late
harm, for which a single crossing point is not the most natural description of.
\end{enumerate}
For all scenarios, the sample sizes used in the simulation are divided into the balanced and unbalanced cases. For the balanced case, the sample sizes satisfy $n_1 = n_2$ and are given by 30, 50, 100 and 200. For the unbalanced case, we used the following sample sizes: $(n_1, n_2) = (30, 60), (n_1, n_2) = (50, 100), (n_1, n_2) = (100, 200), (n_1, n_2) = (200, 400)$.

We ran each scenario $1000$ times, using $1000$ bootstrap samples and $100,000$ Monte-Carlo samples for estimating the distribution of the max-t-type statistic $T_N^{max}$.
As a censoring distribution we used exponential $\mathcal{E}(\lambda_c)$. For all scenarios except the diffuse transition alternative, $\lambda_c$ was chosen as
\begin{equation*}
    \lambda_c = \frac{1 - r}{r \mu}.
\end{equation*}
Here $\mu$ is the mean survival time in each scenario, and $r$ is the target event rate. For the DTA scenario we instead determined $\lambda_c$ by solving the equation
\begin{equation*}
    \mathbb{P}(T < K) = r
\end{equation*}
where the event time $T$ is simulated as described above and $K {\sim} \mathcal{E}(\lambda_c)$.
We simulated each scenario with $r = 0.7$ and $r = 0.9$.

As competitors we used
\begin{enumerate}
    \item The classic log-rank test.
    \item The Gehan-Wilcoxon test, i.e. a weighted log-rank test putting more weight on earlier observations.
    \item A weighted log-rank test with data-dependent adaptive weights (\cite{yangprentice09}).
    \item A two-stage procedure, which splits the original $\alpha$ available, using a portion of it for a log-rank test and in the case this does not reject, using the rest to test for a crossing (\cite{qiusheng08}).
\end{enumerate}
\section{Additional Example}
\subsection{Gastrointestinal Cancer}
\label{app:additional_example}
As an additional real-data example we use data from the Gastrointestinal Cancer Study Group~\cite{schein1982comparison}. The data stem from a randomized two-arm trial, assigning patients with locally advanced gastric carcinoma to either chemotherapy or chemotherapy plus radiation therapy. Each group contained $45$ subjects. Almost all patients were followed up until the event, with only $2$ patients in the chemotherapy group and $6$ patients in the chemotherapy plus radiation group being lost to follow-up. In Figure~\ref{fig:gastric_example} the survival curves of the two groups and the $95\%$-Wald-type confidence ellipse based on the stratified bootstrap covariance estimate can be seen. For this analysis we will set $F = F_{\text{Chemo}}$ and $G = G_{\text{Chemo + Radiation}}$. We obtain the estimate $\hat{\psi} = (0.37, 0.54)$, globally favoring chemotherapy without additional radiation therapy. Analyzing the temporal contrast in detail we see that
\begin{equation*}
    \widehat{\theta}_{\leq} = 0.1 \qquad \widehat{\theta}_{>} = 0.64
\end{equation*}
and the difference from what we would expect under the null is
\begin{equation*}
    \widehat{\delta}_E = \widehat{\theta}_{\leq} - \frac{1}{4} = -0.15 \qquad \widehat{\delta}_{L} = \widehat{\theta}_{>} - \frac{3}{4} = -0.11
\end{equation*}
which suggests that chemotherapy without radiation has an advantage early and late. Note that early and late here are with respect to the median of the chemotherapy plus radiation group (approximately $1.5$ years) so this result is not in conflict with the chemotherapy plus radiation group appearing to be better later still, from approximately $3$ years on. The max-t-type test gives a p-value of $0.058$ so the results should be interpreted descriptively.

%What we would like to particularly focus on here is the fact that the survival curves cross and the addition of radiation therapy to chemotherapy seems to be associated with an early decrease in survival time, but a higher probability of survival long-term. This behavior is not unusual in cancer research, since more radical treatment increases mortality early on, but also decreases chance of cancer recurrence. A particular challenge in survival analysis is detecting this type of deviation from the null hypothesis of equality of the survival curves. Some standard statistical procedures, like testing for the hazard ratio in a Cox proportional hazard model are not at all sensitive to this alternative, while some, like the logrank test are, but have reduced power for this alternative compared to a proportional hazards setting.

\begin{figure}
    \centering
    \begin{subfigure}[t]{0.58\linewidth}
        \centering
        \includegraphics[width=\linewidth]{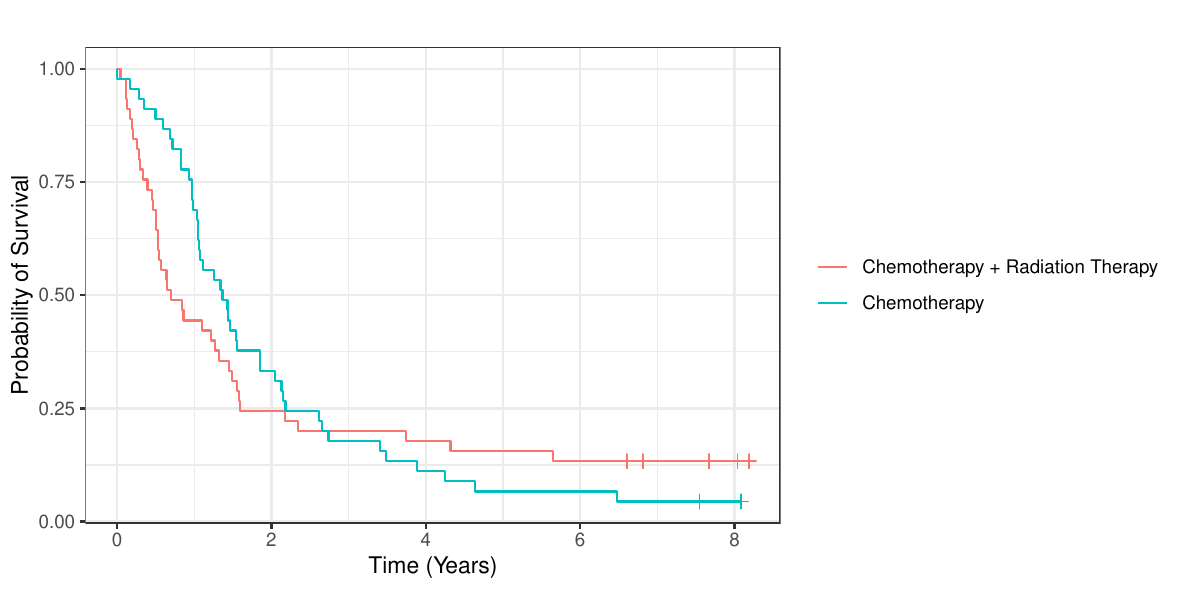}
        \caption{Kaplan--Meier estimates of overall survival.}
        \label{fig:km_gastric}
    \end{subfigure}
    \hfill
    \begin{subfigure}[t]{0.38\linewidth}
        \centering
        \includegraphics[width=\linewidth]{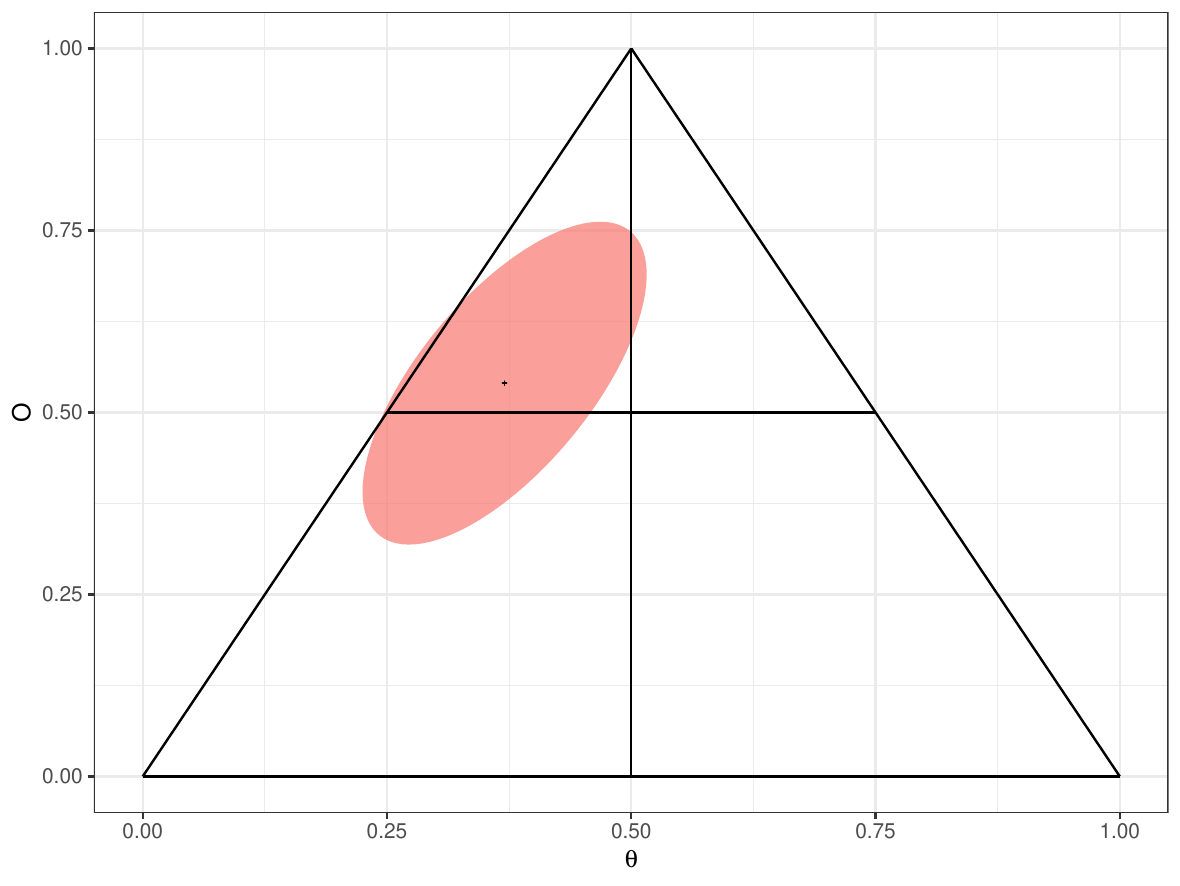}
        \caption{Estimated joint effect with bootstrap uncertainty region.}
        \label{fig:ellipse_gastric}
    \end{subfigure}

    \caption{Overall survival analysis in the gastric cancer example.
    \textbf{(a)} Kaplan--Meier estimates for patients receiving either chemotherapy or chemotherapy plus radiation therapy. The curves suggest a non-proportional pattern, with an early advantage for chemotherapy only that attenuates and turns into a late advantage of additional radiation therapy.
    \textbf{(b)} Estimated joint effect
    \((\widehat\theta,\widehat O)=(0.37,0.54)\). The ellipse shows the
    bootstrap-based joint uncertainty region, and the null value
    \((1/2,1/2)^\top\) is shown for reference.}
    \label{fig:gastric_example}
\end{figure}

\section{More Details about the Simulations}
\begin{figure}
    \centering
    \includegraphics[width=0.9\linewidth]{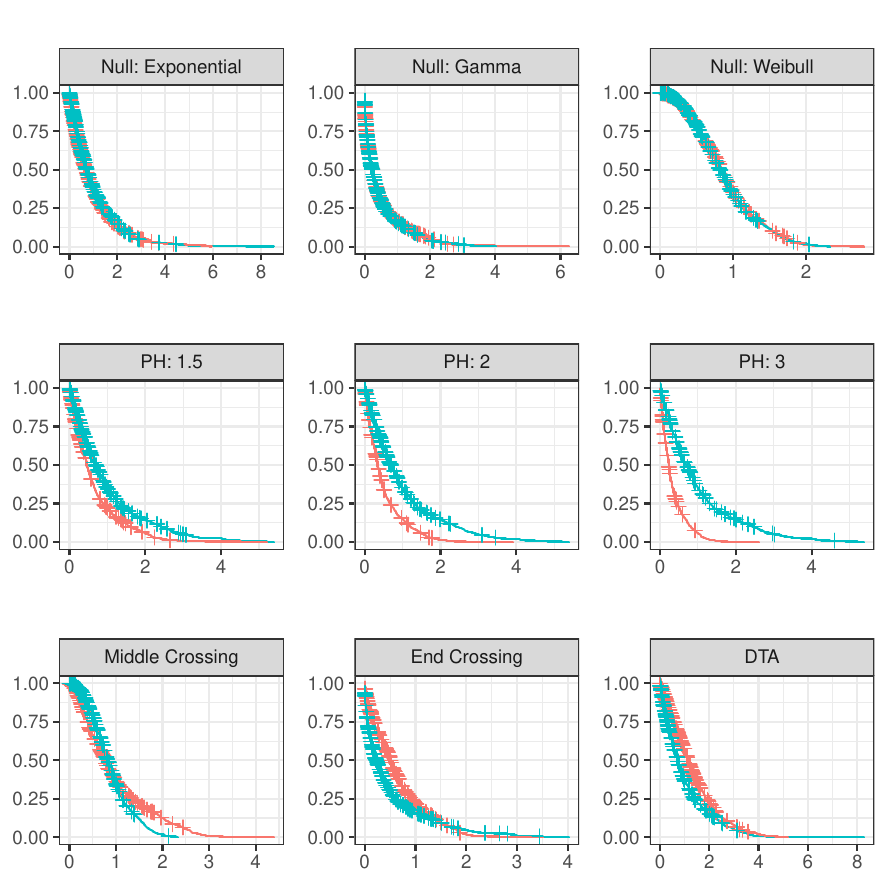}
    \caption{Illustration of the simulation scenarios. Each plot shows the Kaplan-Meier curve based on a sample of $1000$ observations for each group. In the top row the three null scenarios are shown. The middle row shows the proportional hazards scenario for the different hazard rates. The bottom row illustrates the crossing-in-the-middle, crossing-at-the-end and diffuse transition alternative scenarios.}
    \label{fig:sim_illustration}
\end{figure}

\begin{table}[ht]
\caption{\label{tab:null_balanced}Simulation results for the balanced null scenario. The column names specify the distribution family of $F$ and $G$ as well as the event rate. Shown is the rejection rate.}
\centering
\begin{tabular}{llrrrrrr}
&& \multicolumn{2}{c}{Exponential} & \multicolumn{2}{c}{Gamma} & \multicolumn{2}{c}{Weibull} \\ 
  \cmidrule(l){3-8}
  n & Test & 0.7 & 0.9 & 0.7 & 0.9 & 0.7 & 0.9\\
  \hline
  \multirow{8}{4em}{30} & Max - Stratified & 0.085 & 0.059 & 0.106 & 0.064 & 0.057 & 0.049 \\ 
  & Max - Pooled & 0.036 & 0.030 & 0.032 & 0.039 & 0.051 & 0.048 \\ 
  & Wald - Stratified & 0.092 & 0.066 & 0.105 & 0.060 & 0.091 & 0.065 \\ 
  & Wald - Pooled & 0.049 & 0.049 & 0.062 & 0.051 & 0.049 & 0.048 \\ 
  & Logrank & 0.053 & 0.058 & 0.055 & 0.059 & 0.048 & 0.065 \\ 
  & Wilcoxon & 0.050 & 0.051 & 0.044 & 0.053 & 0.047 & 0.049 \\ 
  & Yang and Prentice & 0.065 & 0.071 & 0.067 & 0.071 & 0.064 & 0.075 \\ 
  & Two-Stage & 0.055 & 0.044 & 0.044 & 0.047 & 0.041 & 0.049 \\ 
  \hline
  \multirow{8}{4em}{50}& Max - Stratified & 0.094 & 0.062 & 0.115 & 0.072 & 0.067 & 0.061 \\ 
  & Max - Pooled & 0.056 & 0.054 & 0.054 & 0.056 & 0.058 & 0.050 \\ 
  & Wald - Stratified & 0.090 & 0.073 & 0.111 & 0.082 & 0.077 & 0.076 \\ 
  & Wald - Pooled & 0.060 & 0.057 & 0.058 & 0.053 & 0.053 & 0.050 \\ 
  & Logrank & 0.057 & 0.068 & 0.051 & 0.065 & 0.066 & 0.062 \\ 
  & Wilcoxon & 0.054 & 0.065 & 0.055 & 0.061 & 0.065 & 0.058 \\ 
  & Yang and Prentice & 0.070 & 0.084 & 0.064 & 0.081 & 0.076 & 0.078 \\ 
  & Two-Stage & 0.046 & 0.048 & 0.046 & 0.051 & 0.047 & 0.055 \\ 
  \hline
  \multirow{8}{4em}{100} & Max - Stratified & 0.078 & 0.063 & 0.113 & 0.075 & 0.061 & 0.059 \\ 
  & Max - Pooled & 0.053 & 0.062 & 0.071 & 0.056 & 0.064 & 0.062 \\ 
  & Wald - Stratified & 0.079 & 0.070 & 0.113 & 0.079 & 0.079 & 0.068 \\ 
  & Wald - Pooled & 0.059 & 0.067 & 0.083 & 0.067 & 0.057 & 0.061 \\ 
  & Logrank & 0.061 & 0.062 & 0.062 & 0.057 & 0.054 & 0.066 \\ 
  & Wilcoxon & 0.071 & 0.059 & 0.069 & 0.057 & 0.063 & 0.063 \\ 
  & Yang and Prentice & 0.074 & 0.065 & 0.073 & 0.062 & 0.067 & 0.071 \\ 
  & Two-Stage & 0.055 & 0.062 & 0.054 & 0.058 & 0.054 & 0.059 \\ 
  \hline
  \multirow{8}{4em}{200} & Max - Stratified & 0.064 & 0.054 & 0.091 & 0.054 & 0.050 & 0.044 \\ 
  & Max - Pooled & 0.054 & 0.053 & 0.059 & 0.053 & 0.057 & 0.044 \\ 
  & Wald - Stratified & 0.062 & 0.053 & 0.093 & 0.055 & 0.064 & 0.052 \\ 
  & Wald - Pooled & 0.051 & 0.053 & 0.061 & 0.054 & 0.056 & 0.047 \\ 
  & Logrank & 0.042 & 0.041 & 0.045 & 0.043 & 0.048 & 0.045 \\ 
  & Wilcoxon & 0.038 & 0.038 & 0.038 & 0.037 & 0.040 & 0.036 \\ 
  & Yang and Prentice & 0.044 & 0.048 & 0.054 & 0.050 & 0.055 & 0.049 \\ 
  & Two-Stage & 0.045 & 0.050 & 0.046 & 0.042 & 0.040 & 0.047 \\
   \hline 
\end{tabular}
\end{table}

\begin{table}[ht]
\caption{\label{tab:null_unbalanced}Simulation results for the unbalanced null scenario. The column names specify the distribution family of $F$ and $G$ as well as the event rate. Shown is the rejection rate.}
\centering
\begin{tabular}{llrrrrrr}
&& \multicolumn{2}{c}{Exponential} & \multicolumn{2}{c}{Gamma} & \multicolumn{2}{c}{Weibull} \\ 
  \cmidrule(l){3-8}
  n & Test & 0.7 & 0.9 & 0.7 & 0.9 & 0.7 & 0.9\\
  \hline
  \multirow{8}{4em}{30} & Max - Stratified & 0.123 & 0.087 & 0.179 & 0.102 & 0.065 & 0.078 \\ 
  & Max - Pooled & 0.064 & 0.049 & 0.070 & 0.062 & 0.054 & 0.059 \\ 
  & Wald - Stratified & 0.119 & 0.076 & 0.163 & 0.086 & 0.081 & 0.076 \\ 
  & Wald - Pooled & 0.081 & 0.058 & 0.082 & 0.066 & 0.063 & 0.060 \\ 
  & Logrank & 0.059 & 0.057 & 0.055 & 0.059 & 0.056 & 0.061 \\ 
  & Wilcoxon & 0.056 & 0.057 & 0.061 & 0.061 & 0.058 & 0.063 \\ 
  & Yang and Prentice & 0.068 & 0.073 & 0.071 & 0.077 & 0.082 & 0.078 \\ 
  & Two-Stage & 0.056 & 0.048 & 0.057 & 0.054 & 0.056 & 0.051 \\ 
  \hline
  \multirow{8}{4em}{50} & Max - Stratified & 0.113 & 0.075 & 0.184 & 0.080 & 0.057 & 0.063 \\ 
  & Max - Pooled & 0.057 & 0.068 & 0.074 & 0.062 & 0.058 & 0.059 \\ 
  & Wald - Stratified & 0.109 & 0.078 & 0.168 & 0.080 & 0.079 & 0.071 \\ 
  & Wald - Pooled & 0.071 & 0.073 & 0.090 & 0.070 & 0.055 & 0.060 \\ 
  & Logrank & 0.062 & 0.062 & 0.065 & 0.061 & 0.064 & 0.062 \\ 
  & Wilcoxon & 0.060 & 0.060 & 0.055 & 0.057 & 0.058 & 0.062 \\ 
  & Yang and Prentice & 0.072 & 0.076 & 0.078 & 0.072 & 0.079 & 0.070 \\ 
  & Two-Stage & 0.063 & 0.056 & 0.063 & 0.051 & 0.057 & 0.061 \\ 
  \hline
  \multirow{8}{4em}{100} & Max - Stratified & 0.104 & 0.072 & 0.159 & 0.078 & 0.058 & 0.059 \\ 
  & Max - Pooled & 0.071 & 0.068 & 0.077 & 0.063 & 0.063 & 0.053 \\ 
  & Wald - Stratified & 0.092 & 0.066 & 0.142 & 0.075 & 0.066 & 0.065 \\ 
  & Wald - Pooled & 0.071 & 0.061 & 0.083 & 0.063 & 0.059 & 0.054 \\ 
  & Logrank & 0.062 & 0.057 & 0.055 & 0.061 & 0.055 & 0.058 \\ 
  & Wilcoxon & 0.063 & 0.064 & 0.063 & 0.060 & 0.057 & 0.063 \\ 
  & Yang and Prentice & 0.071 & 0.063 & 0.062 & 0.066 & 0.061 & 0.068 \\ 
  & Two-Stage & 0.054 & 0.051 & 0.053 & 0.054 & 0.046 & 0.060 \\ 
  \hline
  \multirow{8}{4em}{200} & Max - Stratified & 0.066 & 0.052 & 0.127 & 0.052 & 0.047 & 0.052 \\ 
  & Max - Pooled & 0.053 & 0.052 & 0.075 & 0.049 & 0.051 & 0.051 \\ 
  & Wald - Stratified & 0.064 & 0.048 & 0.137 & 0.049 & 0.053 & 0.053 \\ 
  & Wald - Pooled & 0.056 & 0.057 & 0.085 & 0.059 & 0.049 & 0.053 \\ 
  & Logrank & 0.046 & 0.047 & 0.046 & 0.047 & 0.047 & 0.049 \\ 
  & Wilcoxon & 0.041 & 0.036 & 0.037 & 0.041 & 0.043 & 0.041 \\ 
  & Yang and Prentice & 0.051 & 0.048 & 0.052 & 0.052 & 0.051 & 0.055 \\ 
  & Two-Stage & 0.046 & 0.046 & 0.043 & 0.049 & 0.056 & 0.051 \\ 
  \hline
\end{tabular}
\end{table}

\begin{table}[ht]
\caption{\label{tab:ph_balanced}Simulation results for the balanced proportional hazards scenario. The column names specify the hazard ratio as well as the event rate. Shown is the rejection rate.}
\centering
\begin{tabular}{llrrrrrr}
&& \multicolumn{2}{c}{1.5} & \multicolumn{2}{c}{2.0} & \multicolumn{2}{c}{3.0} \\ 
  \cmidrule(l){3-8}
  n & Test & 0.7 & 0.9 & 0.7 & 0.9 & 0.7 & 0.9\\
  \hline
  \multirow{8}{4em}{30} & Max - Stratified & 0.226 & 0.269 & 0.552 & 0.624 & 0.903 & 0.941 \\ 
  & Max - Pooled & 0.148 & 0.208 & 0.429 & 0.534 & 0.862 & 0.909 \\ 
  & Wald - Stratified & 0.237 & 0.258 & 0.522 & 0.600 & 0.900 & 0.927 \\ 
  & Wald - Pooled & 0.127 & 0.205 & 0.418 & 0.552 & 0.862 & 0.917 \\ 
  & Logrank & 0.262 & 0.319 & 0.660 & 0.722 & 0.973 & 0.985 \\ 
  & Wilcoxon & 0.231 & 0.258 & 0.567 & 0.609 & 0.927 & 0.941 \\ 
  & Yang and Prentice & 0.293 & 0.357 & 0.685 & 0.752 & 0.979 & 0.990 \\ 
  & Two-Stage & 0.195 & 0.226 & 0.567 & 0.622 & 0.951 & 0.969 \\ 
  \hline
  \multirow{8}{4em}{50} & Max - Stratified & 0.348 & 0.427 & 0.756 & 0.821 & 0.989 & 0.992 \\ 
  & Max - Pooled & 0.281 & 0.354 & 0.706 & 0.795 & 0.986 & 0.992 \\ 
  & Wald - Stratified & 0.354 & 0.395 & 0.748 & 0.812 & 0.988 & 0.992 \\ 
  & Wald - Pooled & 0.266 & 0.366 & 0.687 & 0.809 & 0.988 & 0.993 \\ 
  & Logrank & 0.454 & 0.503 & 0.871 & 0.914 & 0.997 & 0.997 \\ 
  & Wilcoxon & 0.383 & 0.432 & 0.795 & 0.824 & 0.990 & 0.989 \\ 
  & Yang and Prentice & 0.488 & 0.535 & 0.887 & 0.922 & 0.997 & 0.997 \\ 
  & Two-Stage & 0.364 & 0.419 & 0.808 & 0.863 & 0.996 & 0.995 \\ 
  \hline
  \multirow{8}{4em}{100} & Max - Stratified & 0.539 & 0.641 & 0.955 & 0.972 & 1.000 & 1.000 \\ 
  & Max - Pooled & 0.493 & 0.613 & 0.948 & 0.971 & 1.000 & 1.000 \\ 
  & Wald - Stratified & 0.526 & 0.616 & 0.948 & 0.967 & 1.000 & 1.000 \\ 
  & Wald - Pooled & 0.492 & 0.643 & 0.953 & 0.975 & 1.000 & 1.000 \\ 
  & Logrank & 0.712 & 0.791 & 0.986 & 0.994 & 1.000 & 1.000 \\ 
  & Wilcoxon & 0.621 & 0.688 & 0.970 & 0.977 & 1.000 & 1.000 \\ 
  & Yang and Prentice & 0.727 & 0.808 & 0.986 & 0.995 & 1.000 & 1.000 \\ 
  & Two-Stage & 0.601 & 0.713 & 0.982 & 0.987 & 1.000 & 1.000 \\ 
  \hline
  \multirow{8}{4em}{200} & Max - Stratified & 0.855 & 0.918 & 0.999 & 0.999 & 1.000 & 1.000 \\ 
  & Max - Pooled & 0.845 & 0.915 & 0.999 & 0.999 & 1.000 & 1.000 \\ 
  & Wald - Stratified & 0.849 & 0.915 & 0.998 & 0.999 & 1.000 & 1.000 \\ 
  & Wald - Pooled & 0.835 & 0.923 & 0.999 & 0.999 & 1.000 & 1.000 \\ 
  & Logrank & 0.956 & 0.981 & 1.000 & 1.000 & 1.000 & 1.000 \\ 
  & Wilcoxon & 0.907 & 0.938 & 1.000 & 1.000 & 1.000 & 1.000 \\ 
  & Yang and Prentice & 0.960 & 0.981 & 1.000 & 1.000 & 1.000 & 1.000 \\ 
  & Two-Stage & 0.918 & 0.959 & 1.000 & 1.000 & 1.000 & 1.000 \\
   \hline
\end{tabular}
\end{table}

\begin{table}[ht]
\caption{\label{tab:ph_unbalanced}Simulation results for the unbalanced proportional hazards scenario. The column names specify the hazard ratio as well as the event rate. Shown is the rejection rate.}
\centering
\begin{tabular}{llrrrrrr}
&& \multicolumn{2}{c}{1.5} & \multicolumn{2}{c}{2.0} & \multicolumn{2}{c}{3.0} \\ 
  \cmidrule(l){3-8}
  n & Test & 0.7 & 0.9 & 0.7 & 0.9 & 0.7 & 0.9\\
  \hline
  \multirow{8}{4em}{30} & Max - Stratified & 0.355 & 0.363 & 0.705 & 0.743 & 0.975 & 0.985 \\ 
  & Max - Pooled & 0.273 & 0.314 & 0.667 & 0.699 & 0.970 & 0.980 \\ 
  & Wald - Stratified & 0.340 & 0.342 & 0.688 & 0.728 & 0.971 & 0.979 \\ 
  & Wald - Pooled & 0.271 & 0.332 & 0.668 & 0.721 & 0.978 & 0.987 \\ 
  & Logrank & 0.349 & 0.407 & 0.787 & 0.826 & 0.996 & 0.996 \\ 
  & Wilcoxon & 0.294 & 0.326 & 0.700 & 0.727 & 0.974 & 0.977 \\ 
  & Yang and Prentice & 0.378 & 0.433 & 0.803 & 0.839 & 0.995 & 0.997 \\ 
  & Two-Stage & 0.278 & 0.325 & 0.691 & 0.746 & 0.985 & 0.991 \\ 
  \hline
  \multirow{8}{4em}{50} & Max - Stratified & 0.459 & 0.492 & 0.869 & 0.906 & 0.994 & 0.998 \\ 
  & Max - Pooled & 0.426 & 0.470 & 0.867 & 0.900 & 0.994 & 0.998 \\ 
  & Wald - Stratified & 0.451 & 0.476 & 0.857 & 0.896 & 0.995 & 0.998 \\ 
  & Wald - Pooled & 0.428 & 0.504 & 0.861 & 0.909 & 0.996 & 0.998 \\ 
  & Logrank & 0.573 & 0.612 & 0.939 & 0.957 & 0.999 & 0.999 \\ 
  & Wilcoxon & 0.458 & 0.507 & 0.886 & 0.907 & 0.998 & 0.998 \\ 
  & Yang and Prentice & 0.591 & 0.632 & 0.942 & 0.960 & 0.999 & 0.999 \\ 
  & Two-Stage & 0.460 & 0.531 & 0.904 & 0.927 & 0.998 & 0.999 \\ 
  \hline
  \multirow{8}{4em}{100} & Max - Stratified & 0.699 & 0.777 & 0.983 & 0.991 & 1.000 & 1.000 \\ 
  & Max - Pooled & 0.706 & 0.779 & 0.987 & 0.992 & 1.000 & 1.000 \\ 
  & Wald - Stratified & 0.692 & 0.766 & 0.975 & 0.991 & 1.000 & 1.000 \\ 
  & Wald - Pooled & 0.704 & 0.796 & 0.985 & 0.992 & 1.000 & 1.000 \\ 
  & Logrank & 0.831 & 0.882 & 0.997 & 1.000 & 1.000 & 1.000 \\ 
  & Wilcoxon & 0.740 & 0.787 & 0.991 & 0.994 & 1.000 & 1.000 \\ 
  & Yang and Prentice & 0.837 & 0.890 & 0.997 & 1.000 & 1.000 & 1.000 \\ 
  & Two-Stage & 0.756 & 0.839 & 0.996 & 0.998 & 1.000 & 1.000 \\ 
  \hline
  \multirow{8}{4em}{200} & Max - Stratified & 0.939 & 0.969 & 1.000 & 1.000 & 1.000 & 1.000 \\ 
  & Max - Pooled & 0.944 & 0.972 & 1.000 & 1.000 & 1.000 & 1.000 \\ 
  & Wald - Stratified & 0.928 & 0.967 & 1.000 & 1.000 & 1.000 & 1.000 \\ 
  & Wald - Pooled & 0.939 & 0.976 & 1.000 & 1.000 & 1.000 & 1.000 \\ 
  & Logrank & 0.985 & 0.993 & 1.000 & 1.000 & 1.000 & 1.000 \\ 
  & Wilcoxon & 0.965 & 0.974 & 1.000 & 1.000 & 1.000 & 1.000 \\ 
  & Yang and Prentice & 0.985 & 0.992 & 1.000 & 1.000 & 1.000 & 1.000 \\ 
  & Two-Stage & 0.977 & 0.987 & 1.000 & 1.000 & 1.000 & 1.000 \\ 
   \hline
\end{tabular}
\end{table}

\begin{table}[ht]
\caption{\label{tab:crossing_balanced}Simulation results for the balanced crossing scenario. The column names specify the location of the crossing as well as the event rate. Shown is the rejection rate.}
\centering
\begin{tabular}{llrrrrrr}
&& \multicolumn{2}{c}{Mid-Crossing} & \multicolumn{2}{c}{End-Crossing} & \multicolumn{2}{c}{DTA} \\ 
  \cmidrule(l){3-8}
  n & Test & 0.7 & 0.9 & 0.7 & 0.9 & 0.7 & 0.9\\
  \hline
  \multirow{8}{4em}{30} & Max - Stratified & 0.083 & 0.099 & 0.367 & 0.354 & 0.314 & 0.326 \\ 
  & Max - Pooled & 0.127 & 0.150 & 0.249 & 0.310 & 0.198 & 0.260 \\ 
  & Wald - Stratified & 0.183 & 0.178 & 0.383 & 0.382 & 0.331 & 0.338 \\ 
  & Wald - Pooled & 0.122 & 0.147 & 0.368 & 0.380 & 0.258 & 0.304 \\ 
  & Logrank & 0.062 & 0.091 & 0.279 & 0.203 & 0.237 & 0.241 \\ 
  & Wilcoxon & 0.063 & 0.067 & 0.387 & 0.379 & 0.303 & 0.330 \\ 
  & Yang and Prentice & 0.227 & 0.337 & 0.352 & 0.323 & 0.287 & 0.298 \\ 
  & Two-Stage & 0.267 & 0.299 & 0.297 & 0.322 & 0.220 & 0.255 \\ 
  \hline
  \multirow{8}{4em}{50} & Max - Stratified & 0.182 & 0.192 & 0.478 & 0.494 & 0.472 & 0.484 \\ 
  & Max - Pooled & 0.258 & 0.280 & 0.398 & 0.472 & 0.387 & 0.446 \\ 
  & Wald - Stratified & 0.332 & 0.302 & 0.525 & 0.515 & 0.483 & 0.497 \\ 
  & Wald - Pooled & 0.264 & 0.295 & 0.535 & 0.549 & 0.467 & 0.495 \\ 
  & Logrank & 0.078 & 0.117 & 0.402 & 0.278 & 0.386 & 0.363 \\ 
  & Wilcoxon & 0.094 & 0.086 & 0.572 & 0.548 & 0.502 & 0.523 \\ 
  & Yang and Prentice & 0.362 & 0.485 & 0.505 & 0.466 & 0.459 & 0.466 \\ 
  & Two-Stage & 0.432 & 0.522 & 0.469 & 0.523 & 0.395 & 0.399 \\ 
  \hline
  \multirow{8}{4em}{100} & Max - Stratified & 0.500 & 0.587 & 0.753 & 0.816 & 0.699 & 0.758 \\ 
  & Max - Pooled & 0.571 & 0.686 & 0.734 & 0.809 & 0.674 & 0.751 \\ 
  & Wald - Stratified & 0.661 & 0.686 & 0.839 & 0.831 & 0.727 & 0.747 \\ 
  & Wald - Pooled & 0.623 & 0.687 & 0.848 & 0.865 & 0.750 & 0.774 \\ 
  & Logrank & 0.082 & 0.212 & 0.670 & 0.458 & 0.652 & 0.571 \\ 
  & Wilcoxon & 0.142 & 0.088 & 0.866 & 0.858 & 0.800 & 0.809 \\ 
  & Yang and Prentice & 0.589 & 0.803 & 0.808 & 0.774 & 0.744 & 0.733 \\ 
  & Two-Stage & 0.815 & 0.913 & 0.767 & 0.821 & 0.641 & 0.665 \\ 
  \hline
  \multirow{8}{4em}{200} & Max - Stratified & 0.912 & 0.956 & 0.958 & 0.989 & 0.946 & 0.976 \\ 
  & Max - Pooled & 0.934 & 0.971 & 0.961 & 0.990 & 0.943 & 0.978 \\ 
  & Wald - Stratified & 0.958 & 0.977 & 0.991 & 0.992 & 0.956 & 0.973 \\ 
  & Wald - Pooled & 0.949 & 0.975 & 0.994 & 0.995 & 0.972 & 0.979 \\ 
  & Logrank & 0.101 & 0.385 & 0.904 & 0.723 & 0.911 & 0.877 \\ 
  & Wilcoxon & 0.216 & 0.127 & 0.991 & 0.991 & 0.988 & 0.991 \\ 
  & Yang and Prentice & 0.840 & 0.974 & 0.982 & 0.974 & 0.961 & 0.951 \\ 
  & Two-Stage & 0.989 & 0.997 & 0.987 & 0.985 & 0.930 & 0.941 \\ 
  \hline
\end{tabular}
\end{table}

\begin{table}[ht]
\caption{\label{tab:crossing_unbalanced}Simulation results for the unbalanced crossing scenario. The column names specify the location of the crossing as well as the event rate. Shown is the rejection rate.}
\centering
\begin{tabular}{llrrrrrr}
&& \multicolumn{2}{c}{Mid-Crossing} & \multicolumn{2}{c}{End-Crossing} & \multicolumn{2}{c}{DTA} \\ 
  \cmidrule(l){3-8}
  n & Test & 0.7 & 0.9 & 0.7 & 0.9 & 0.7 & 0.9\\
  \hline
  \multirow{8}{4em}{30} & Max - Stratified & 0.091 & 0.091 & 0.440 & 0.427 & 0.394 & 0.425 \\ 
  & Max - Pooled & 0.146 & 0.171 & 0.345 & 0.426 & 0.312 & 0.398 \\ 
  & Wald - Stratified & 0.171 & 0.156 & 0.478 & 0.439 & 0.406 & 0.424 \\ 
  & Wald - Pooled & 0.149 & 0.153 & 0.501 & 0.508 & 0.421 & 0.430 \\ 
  & Logrank & 0.038 & 0.074 & 0.359 & 0.281 & 0.348 & 0.350 \\ 
  & Wilcoxon & 0.064 & 0.056 & 0.491 & 0.495 & 0.447 & 0.474 \\ 
  & Yang and Prentice & 0.257 & 0.389 & 0.449 & 0.442 & 0.406 & 0.435 \\ 
  & Two-Stage & 0.389 & 0.453 & 0.378 & 0.436 & 0.330 & 0.347 \\ 
  \hline
  \multirow{8}{4em}{50} & Max - Stratified & 0.241 & 0.289 & 0.572 & 0.606 & 0.556 & 0.609 \\ 
  & Max - Pooled & 0.316 & 0.392 & 0.554 & 0.632 & 0.531 & 0.589 \\ 
  & Wald - Stratified & 0.383 & 0.369 & 0.670 & 0.647 & 0.580 & 0.588 \\ 
  & Wald - Pooled & 0.341 & 0.393 & 0.728 & 0.738 & 0.611 & 0.638 \\ 
  & Logrank & 0.047 & 0.097 & 0.492 & 0.368 & 0.525 & 0.471 \\ 
  & Wilcoxon & 0.089 & 0.073 & 0.717 & 0.700 & 0.632 & 0.666 \\ 
  & Yang and Prentice & 0.366 & 0.582 & 0.654 & 0.622 & 0.603 & 0.585 \\ 
  & Two-Stage & 0.586 & 0.671 & 0.604 & 0.664 & 0.484 & 0.516 \\ 
  \hline
  \multirow{8}{4em}{100} & Max - Stratified & 0.682 & 0.774 & 0.841 & 0.906 & 0.822 & 0.868 \\ 
  & Max - Pooled & 0.719 & 0.825 & 0.852 & 0.917 & 0.817 & 0.868 \\ 
  & Wald - Stratified & 0.768 & 0.814 & 0.931 & 0.936 & 0.846 & 0.862 \\ 
  & Wald - Pooled & 0.771 & 0.818 & 0.948 & 0.948 & 0.868 & 0.880 \\ 
  & Logrank & 0.055 & 0.196 & 0.750 & 0.556 & 0.778 & 0.699 \\ 
  & Wilcoxon & 0.124 & 0.079 & 0.941 & 0.941 & 0.907 & 0.925 \\ 
  & Yang and Prentice & 0.654 & 0.881 & 0.907 & 0.883 & 0.863 & 0.850 \\ 
  & Two-Stage & 0.920 & 0.958 & 0.900 & 0.936 & 0.791 & 0.793 \\ 
  \hline
  \multirow{8}{4em}{200} & Max - Stratified & 0.987 & 0.994 & 0.991 & 0.998 & 0.987 & 0.996 \\ 
  & Max - Pooled & 0.982 & 0.994 & 0.996 & 0.998 & 0.988 & 0.996 \\ 
  & Wald - Stratified & 0.991 & 0.995 & 0.999 & 1.000 & 0.991 & 0.996 \\ 
  & Wald - Pooled & 0.988 & 0.994 & 0.999 & 1.000 & 0.993 & 0.996 \\ 
  & Logrank & 0.080 & 0.435 & 0.965 & 0.818 & 0.957 & 0.929 \\ 
  & Wilcoxon & 0.233 & 0.134 & 0.998 & 0.999 & 0.996 & 0.998 \\ 
  & Yang and Prentice & 0.918 & 0.996 & 0.997 & 0.994 & 0.989 & 0.987 \\ 
  & Two-Stage & 0.998 & 1.000 & 0.999 & 0.999 & 0.979 & 0.991 \\ 
  \hline
\end{tabular}
\end{table}

\clearpage

\printbibliography

@article{dey2025inference,
  title={Inference on overlap index: with an application to cancer data},
  author={Dey, Raju and Bathke, Arne C and Kumar, Somesh},
  journal={The International Journal of Biostatistics},
  volume={21},
  number={2},
  pages={357--383},
  year={2025},
  publisher={De Gruyter}
}

@article{schein1982comparison,
  title={A comparison of combination chemotherapy and combined modality therapy for locally advanced gastric carcinoma},
  author={Schein, Philip S and Gastrointestinal Tumor Study Group},
  journal={Cancer},
  volume={49},
  number={9},
  pages={1771--1777},
  year={1982},
  publisher={Wiley Online Library}
}

@article{hong2019adjuvant,
  title={Adjuvant whole-brain radiation therapy compared with observation after local treatment of melanoma brain metastases: a multicenter, randomized phase III trial},
  author={Hong, Angela M and Fogarty, Gerald B and Dolven-Jacobsen, Kari and Burmeister, Bryan H and Lo, Serigne N and Haydu, Lauren E and Vardy, Janette L and Nowak, Anna K and Dhillon, Haryana M and Ahmed, Tasnia and others},
  journal={Journal of clinical oncology},
  volume={37},
  number={33},
  pages={3132--3141},
  year={2019},
  publisher={American Society of Clinical Oncology}
}

@article{wang2009estimation,
  title={Estimation of the area under ROC curve with censored data},
  author={Wang, Qihua and Yao, Lili and Lai, Peng},
  journal={Journal of statistical planning and inference},
  volume={139},
  number={3},
  pages={1033--1044},
  year={2009},
  publisher={Elsevier}
}

@article{parkinson2022testing,
  title={Testing for equality of distributions using the concept of (niche) overlap},
  author={Parkinson-Schwarz, Judith H and Bathke, Arne C},
  journal={Statistical Papers},
  volume={63},
  number={1},
  pages={225--242},
  year={2022},
  publisher={Springer}
}

@article{Beck2023Combining,
author = {Beck, Jonas and Langthaler, Patrick B. and Bathke, Arne C.},
title = {Combining stochastic tendency and distribution overlap towards improved nonparametric effect measures and inference},
journal = {Scandinavian Journal of Statistics},
volume = {52},
number = {3},
pages = {1138-1175},
doi = {https://doi.org/10.1111/sjos.12783},
url = {https://onlinelibrary.wiley.com/doi/abs/10.1111/sjos.12783},
eprint = {https://onlinelibrary.wiley.com/doi/pdf/10.1111/sjos.12783},
year = {2025}
}

@article{ParkinsonKutilKupplerJunkerTrutschnigBathke+2018,
url = {https://doi.org/10.1515/ijb-2017-0028},
title = {A Fast and Robust Way to Estimate Overlap of Niches, and Draw Inference},
author = {Judith H. Parkinson and Raoul Kutil and Jonas Kuppler and Robert R. Junker and Wolfgang Trutschnig and Arne C. Bathke},
pages = {20170028},
volume = {14},
number = {2},
journal = {The International Journal of Biostatistics},
doi = {doi:10.1515/ijb-2017-0028},
year = {2018},
lastchecked = {2023-09-22}
}

@article{Dobler2018,
  author  = {Dobler, Dennis and Pauly, Markus},
  title   = {Bootstrap- and permutation-based inference for the Mann--Whitney effect for right-censored and tied data},
  journal = {TEST},
  year    = {2018},
  volume  = {27},
  number  = {3},
  pages   = {639--658},
  doi     = {10.1007/s11749-017-0565-z},
  url     = {https://doi.org/10.1007/s11749-017-0565-z},
  issn    = {1863-8260}
}

@book{vdVW2023,
   author    = {van der Vaart, Aad W. and Wellner, Jon A.},
   title     = {Weak Convergence and Empirical Processes: With Applications to Statistics},
   series    = {Springer Series in Statistics},
   edition   = {2},
   publisher = {Springer},
   year      = {2023}
 }

@article{horiguchi25,
author = {Horiguchi, Miki and Tian, Lu and Kehl, Kenneth and Uno, Hajime},
year = {2025},
month = {10},
pages = {784-809},
title = {Assessing delayed treatment benefits of immunotherapy using long-term average hazard: a novel test/estimation approach},
volume = {31},
journal = {Lifetime Data Analysis},
doi = {10.1007/s10985-025-09671-0}
}

@article{royston24,
author = {Royston, Patrick and Parmar, Mahesh},
year = {2014},
month = {08},
pages = {314},
title = {An approach to trial design and analysis in the era of non-proportional hazards of the treatment effect},
volume = {15},
journal = {Trials},
doi = {10.1186/1745-6215-15-314}
}

@article{gasparyan20,
author = {Gasparyan, Boris and Folkvaljon, Folke and Bengtsson, Olof and Buenconsejo, Joan and Koch, Gary},
year = {2020},
month = {07},
pages = {580-611},
title = {Adjusted win ratio with stratification: Calculation methods and interpretation},
volume = {30},
journal = {Statistical Methods in Medical Research},
doi = {10.1177/0962280220942558}
}

@article{Wyupek2021APT,
  title={A permutation test for the two-sample right-censored model},
  author={Grzegorz Wyłupek},
  journal={Annals of the Institute of Statistical Mathematics},
  year={2021},
  volume={73},
  pages={1037 - 1061},
  url={https://api.semanticscholar.org/CorpusID:254233111}
}

@article{Huang02092022,
author = {Xinghui Huang and Jingjing Lyu and Yawen Hou and Zheng Chen},
title = {A nonparametric statistical method for two crossing survival curves},
journal = {Communications in Statistics - Simulation and Computation},
volume = {51},
number = {9},
pages = {5041--5050},
year = {2022},
publisher = {Taylor \& Francis},
doi = {10.1080/03610918.2020.1753075},


URL = { 
    
        https://doi.org/10.1080/03610918.2020.1753075
    
    

},
eprint = { 
    
        https://doi.org/10.1080/03610918.2020.1753075
    
    

}

}

@article{yangprentice09,
author = {Yang, Song and Prentice, Ross},
year = {2009},
month = {05},
pages = {30-8},
title = {Improved Logrank-Type Tests for Survival Data Using Adaptive Weights},
volume = {66},
journal = {Biometrics},
doi = {10.1111/j.1541-0420.2009.01243.x}
}

@article{cantagallo21,
author = {Cantagallo, Eva and De Backer, Mickaël and Kicinski, Michal and Ozenne, Brice and Collette, Laurence and Legrand, Catherine and Buyse, Marc and Péron, Julien},
title = {A new measure of treatment effect in clinical trials involving competing risks based on generalized pairwise comparisons},
journal = {Biometrical Journal},
volume = {63},
number = {2},
pages = {272-288},
doi = {https://doi.org/10.1002/bimj.201900354},
url = {https://onlinelibrary.wiley.com/doi/abs/10.1002/bimj.201900354},
eprint = {https://onlinelibrary.wiley.com/doi/pdf/10.1002/bimj.201900354},
year = {2021}
}

@article{linwang04,
author = {Lin, Xun and Wang, Hongkun},
title = {A New Testing Approach for Comparing the Overall Homogeneity of Survival Curves},
journal = {Biometrical Journal},
volume = {46},
number = {5},
pages = {489-496},
doi = {https://doi.org/10.1002/bimj.200310053},
url = {https://onlinelibrary.wiley.com/doi/abs/10.1002/bimj.200310053},
eprint = {https://onlinelibrary.wiley.com/doi/pdf/10.1002/bimj.200310053},
year = {2004}
}

@article{Ananthakrishnan2021CriticalRO,
  title={Critical Review of Oncology Clinical Trial Design Under Non-proportional Hazards.},
  author={Revathi Ananthakrishnan and Stephanie Green and Alessandro Previtali and Rong Liu and Daniel H. Li and Michael P. Lavalley},
  journal={Critical reviews in oncology/hematology},
  year={2021},
  pages={
          103350
        },
  url={https://api.semanticscholar.org/CorpusID:234597228}
}

@article{linxu10,
author = {Lin, Xun and Xu, Qiang},
title = {A new method for the comparison of survival distributions},
journal = {Pharmaceutical Statistics},
volume = {9},
number = {1},
pages = {67-76},
doi = {https://doi.org/10.1002/pst.376},
url = {https://onlinelibrary.wiley.com/doi/abs/10.1002/pst.376},
eprint = {https://onlinelibrary.wiley.com/doi/pdf/10.1002/pst.376},
year = {2010}
}

@article{LiuYin2017Partitioned,
  author  = {Liu, Yukun and Yin, Guosheng},
  title   = {Partitioned log-rank tests for the overall homogeneity of hazard rate functions},
  journal = {Lifetime Data Analysis},
  year    = {2017},
  volume  = {23},
  number  = {3},
  pages   = {400--425},
  doi     = {10.1007/s10985-016-9365-0},
  pmid    = {26995734}
}

@article{lin23,
author = {Lin, Ray and Mukhopadhyay, Pralay and Roychoudhury, Satrajit and Anderson, Keaven and Hu, Tianle and Huang, Bo and Leon, Larry and Liao, Jason and Lin, Ji and Liu, Rong and Luo, Xiaodong and Mai, Yabing and Qin, Rui and Tatsuoka, Kay and Wang, Yang and Ye, Jiabu and Zhu, Jian and Chen, Tai-Tsang and Iacona, Renee},
year = {2023},
month = {05},
pages = {312-314},
title = {Comment on “Non-Proportional Hazards – an Evaluation of the MaxCombo Test in Cancer Clinical Trials” by the Cross-Pharma Non-Proportional Hazards Working Group},
volume = {15},
journal = {Statistics in Biopharmaceutical Research},
doi = {10.1080/19466315.2022.2103180}
}

@article{monnickendam19,
author = {Monnickendam, Giles and Zhu, Mingshu and McKendrick, Jan and Su, Yun},
year = {2019},
month = {04},
pages = {431-438},
title = {Measuring Survival Benefit in Health Technology Assessment in the Presence of Nonproportional Hazards},
volume = {22},
journal = {Value in Health},
doi = {10.1016/j.jval.2019.01.005}
}

@article{mao24,
author = {Mao, Lu},
year = {2024},
month = {07},
pages = {17407745241259356},
title = {Defining estimand for the win ratio: Separate the true effect from censoring},
volume = {21},
journal = {Clinical trials (London, England)},
doi = {10.1177/17407745241259356}
}

@article {Dehbij2250,
	author = {Dehbi, Hakim-Moulay and Royston, Patrick and Hackshaw, Allan},
	title = {Life expectancy difference and life expectancy ratio: two measures of treatment effects in randomised trials with non-proportional hazards},
	volume = {357},
	elocation-id = {j2250},
	year = {2017},
	doi = {10.1136/bmj.j2250},
	publisher = {BMJ Publishing Group Ltd},
	issn = {0959-8138},
	URL = {https://www.bmj.com/content/357/bmj.j2250},
	eprint = {https://www.bmj.com/content/357/bmj.j2250.full.pdf},
	journal = {BMJ}
}

@article{karrison16,
author = {Karrison, Theodore},
year = {2016},
month = {09},
pages = {678-690},
title = {Versatile Tests for Comparing Survival Curves Based on Weighted Log-rank Statistics},
volume = {16},
journal = {The Stata Journal: Promoting communications on statistics and Stata},
doi = {10.1177/1536867X1601600308}
}

@article{pak17,
author = {Pak, Kyongsun and Uno, Hajime and Kim, Dae Hyun and Tian, Lu and Kane, Robert and Takeuchi, Masahiro and Fu, Haoda and Claggett, Brian and Wei, Leejen},
year = {2017},
month = {12},
pages = {1692-1696},
title = {Interpretability of Cancer Clinical Trial Results Using Restricted Mean Survival Time as an Alternative to the Hazard Ratio},
volume = {3},
journal = {JAMA Oncology},
doi = {10.1001/jamaoncol.2017.2797}
}

@article{mukhopadyay22,
author = {Mukhopadhyay, Pralay and Ye, Jiabu and Anderson, Keaven and Roychoudhury, Satrajit and Rubin, Eric and Halabi, Susan and Chappell, Richard},
year = {2022},
month = {09},
pages = {1294-1300},
title = {Log-Rank Test vs MaxCombo and Difference in Restricted Mean Survival Time Tests for Comparing Survival Under Nonproportional Hazards in Immuno-oncology Trials: A Systematic Review and Meta-analysis},
volume = {8},
journal = {JAMA Oncology},
doi = {10.1001/jamaoncol.2022.2666}
}

@article{wugilbert22,
 ISSN = {0006341X, 15410420},
 URL = {http://www.jstor.org/stable/3068543},
 author = {Lang Wu and Peter B. Gilbert},
 journal = {Biometrics},
 number = {4},
 pages = {997--1004},
 publisher = {[Wiley, International Biometric Society]},
 title = {Flexible Weighted Log-Rank Tests Optimal for Detecting Early and/or Late Survival Differences},
 urldate = {2026-06-01},
 volume = {58},
 year = {2002}
}

@article{zhang24,
author = {Zhang, Xiaoxi and Datta, Somnath and Qiu, Peihua},
year = {2024},
month = {07},
pages = {1-21},
title = {Effective comparison of two potentially crossing hazard rate curves},
volume = {94},
journal = {Journal of Statistical Computation and Simulation},
doi = {10.1080/00949655.2024.2372631}
}

@article{royston13,
author = {Royston, Patrick and Parmar, Mahesh},
year = {2013},
month = {12},
pages = {152},
title = {Restricted mean survival time: An alternative to the hazard ratio for the design and analysis of randomized trials with a time-to-event outcome},
volume = {13},
journal = {BMC medical research methodology},
doi = {10.1186/1471-2288-13-152}
}

@article{liu20,
author = {Liu, Tiantian and Ditzhaus, Marc and Xu, Jin},
year = {2020},
month = {01},
pages = {},
title = {A resampling‐based test for two crossing survival curves},
volume = {19},
journal = {Pharmaceutical Statistics},
doi = {10.1002/pst.2000}
}

@article{dormuth22,
author = {Dormuth, Ina and Liu, Tiantian and Xu, Jin and Yu, Menggang and Pauly, Markus and Ditzhaus, Marc},
year = {2022},
month = {01},
pages = {},
title = {Which test for crossing survival curves? A user’s guideline},
volume = {22},
journal = {BMC Medical Research Methodology},
doi = {10.1186/s12874-022-01520-0}
}

@article{qiusheng08,
author = {Qiu, Peihua and Sheng, Jun},
year = {2008},
month = {02},
pages = {191-208},
title = {A Two-Stage Procedure for Comparing Hazard Rate Functions},
volume = {70},
journal = {Journal of the Royal Statistical Society Series B},
doi = {10.1111/j.1467-9868.2007.00622.x}
}

@article{Wu2026WinRA,
  title={Win Ratio as an Effect Size Measure Under Non‐Proportional Hazards: A Comparison With Difference in Restricted Mean Survival},
  author={Yuan Wu and Xiaofei Wang and Zhiguo Li},
  journal={Statistics in Medicine},
  year={2026},
  url={https://api.semanticscholar.org/CorpusID:287809501}
}

@article{you23,
author = {You, Na and He, Xueyi and Dai, Hongsheng and Wang, Xueqin},
title = {Ball divergence for the equality test of crossing survival curves},
journal = {Statistics in Medicine},
volume = {42},
number = {29},
pages = {5353-5368},
doi = {https://doi.org/10.1002/sim.9914},
url = {https://onlinelibrary.wiley.com/doi/abs/10.1002/sim.9914},
eprint = {https://onlinelibrary.wiley.com/doi/pdf/10.1002/sim.9914},
year = {2023}
}

@article{flandre20,
author = {Flandre, Philippe and O'Quigley, John},
title = {Note on the role of the placebo group in the short-term and long-term hazard ratio model},
journal = {Statistics in Medicine},
volume = {39},
number = {20},
pages = {2685-2688},
doi = {https://doi.org/10.1002/sim.8424},
url = {https://onlinelibrary.wiley.com/doi/abs/10.1002/sim.8424},
eprint = {https://onlinelibrary.wiley.com/doi/pdf/10.1002/sim.8424},
year = {2020}
}

@article{Li2015StatisticalIM,
  title={Statistical Inference Methods for Two Crossing Survival Curves: A Comparison of Methods},
  author={Huimin Li and Dong Han and Yawen Hou and Huilin Chen and Zheng Chen},
  journal={PLoS ONE},
  year={2015},
  volume={10},
  url={https://api.semanticscholar.org/CorpusID:18400244}
}

@Manual{rsurv,
    title = {rsurv: Random Generation of Survival Data},
    author = {Fabio Demarqui},
    year = {2024},
    note = {R package version 0.0.2},
    url = {https://CRAN.R-project.org/package=rsurv},
    doi = {10.32614/CRAN.package.rsurv},
}

@article{yang2005semiparametric,
  title={Semiparametric analysis of short-term and long-term hazard ratios with two-sample survival data},
  author={Yang, Song and Prentice, Ross},
  journal={Biometrika},
  volume={92},
  number={1},
  pages={1--17},
  year={2005},
  publisher={Oxford University Press}
}

\end{document}